\numberwithin{equation}{section}
\providecommand{\algorithmname}{Algorithm}
\newtheorem{theorem}{Theorem}[section]
\newcounter{hypA}
\newenvironment{hypA}{\refstepcounter{hypA}\begin{itemize}
  \item[({\bf A\arabic{hypA}})]}{\end{itemize}}
\newcounter{hypB}
\newcounter{hypD}
\date{}
\begin{document}

%+Title
\begin{center}

{\Large \textbf{Unbiased Parameter Estimation for Partially Observed Diffusions}}

\vspace{0.5cm}

ELSIDDIG AWADELKARIM, AJAY JASRA \& HAMZA RUZAYQAT

{\footnotesize Applied Mathematics and Computational Science Program, \\ Computer, Electrical and Mathematical Sciences and Engineering Division, \\ King Abdullah University of Science and Technology, Thuwal, 23955-6900, KSA.} \\
{\footnotesize E-Mail:\,} \texttt{\emph{\footnotesize elsiddigawadelkarim.elsiddig@kaust.edu.sa}}, \\
\texttt{\emph{\footnotesize ajay.jasra@kaust.edu.sa}},  \texttt{\emph{\footnotesize hamza.ruzayqat@kaust.edu.sa}}

\begin{abstract}
In this article we consider the estimation of static parameters for partially observed diffusion process with discrete-time observations over a fixed time interval. In particular,  we assume that one must time-discretize the partially observed diffusion process and work with the model with bias and consider maximizing the resulting log-likelihood. Using a novel double randomization scheme, based upon Markovian stochastic approximation we develop a new method to unbiasedly estimate the static parameters, that is, to obtain the maximum likelihood estimator with no time discretization bias.
Under assumptions we prove that our estimator is unbiased and investigate the method in several numerical examples, showing that it can empirically out-perform existing unbiased methodology.
\\
\bigskip
\noindent \textbf{Keywords}: Unbiased estimation, Markovian stochastic approximation, Parameter estimation, Diffusion processes. \\
\noindent \textbf{AMS subject classifications}: 60J22, 62M05, 65C40, 62M20\\
\noindent\textbf{Corresponding author}: Hamza Ruzayqat. E-mail:
\href{mailto:hamza.ruzayqat@kaust.edu.sa}{hamza.ruzayqat@kaust.edu.sa}
\end{abstract}

\end{center}

\section{Introduction}

Let $(\mathsf{X},\mathcal{X})$ be a measurable space, $\Theta\subseteq\mathbb{R}^{d_{\theta}}$ and introduce the family of probability measures $\{\pi_{\theta}\}_{\theta\in\Theta}$,
that is, for each $\theta\in\Theta$, $\pi_{\theta}$ is a probability measure on $(\mathsf{X},\mathcal{X})$. Let $H:\Theta\times\mathsf{X}\rightarrow\mathbb{R}^{d_h}$ be a measurable mapping such that for each $\theta\in\Theta$, $h(\theta) := \int_{\mathsf{X}}H(\theta,x)\pi_{\theta}(dx)$ is finite. The objective is to solve $h(\theta)=0$. This problem appears routinely in many applications such as maximum likelihood estimation of parameters associated to partially observed diffusion processes, where $\pi_{\theta}$ represents the posterior measure of the partially observed process for a fixed $\theta$ and $H$ is an expression associated to the gradient of the log-likelihood obtained by Fisher's identity; more details are given later on - see also \cite{ub_grad}. Other applications include parameter estimation for Bayesian inverse problems (e.g.~\cite{disc_model,par_uq}) although this is not the focus of this article. The afore-mentioned problems have a wide variety of applications in statistics, finance and engineering; see \cite{cappe,golightly} for example.

We assume that in practice, one requires a discretization (approximation) associated to the probability $\pi_{\theta}$ that is controlled by a scalar parameter $l\in\mathbb{N}_0=\mathbb{N}\cup\{0\}$. We denote such a probability as $\pi_{\theta}^l$ 
on a measurable space $(\mathsf{X}^l,\mathcal{X}^l)$ (which may be different from $(\mathsf{X},\mathcal{X})$ but need not be) and suppose that there is a corresponding functional $H_l:\Theta\times\mathsf{X}^l\rightarrow\mathbb{R}^{d_h}$ writing
$h_l(\theta) := \int_{\mathsf{X}^l}H_l(\theta,x)\pi_{\theta}^l(dx)$. 
We will assume explicitly that $\lim_{l\rightarrow\infty}h_l(\theta)=h(\theta)$ for every $\theta\in\Theta$ and again this is made precise in the next section.
In the context of diffusion processes, which is the example considered in this article, $l$ will control the level of time discretization  and this example will be detailed in the next section. The objective article is still to solve $h(\theta)=0$, despite the fact that
one can only work with $h_l(\theta)$. The solution we obtain is unbiased, in the sense that the expectation of the estimator is equal to the (assumed unique) solution of $h(\theta)=0$.

One of the main methods that can be used to solve $h_l(\theta)=0$ is based upon the well-known stochastic approximation (SA) method and its many variants; see for instance \cite{andr3,andr,Der,frika,robbins}. This is an iterative scheme that updates parameter estimates using often (but not-necessarily) unbiased estimates of $h_l(\theta)$, which are independent at each update. The main complication in our context (partially observed diffusions) is that unbiased estimation w.r.t.~$\pi_{\theta}^l$ is often very challenging as one is working with a discrete-time state-space model. The posteriors $\pi_{\theta}^l$ are notoriously challenging to sample from and there have been a large number of methods designed to do exactly this; see for instance \cite{andrieu,graham}. Such methods are based upon Markov chain Monte Carlo (MCMC) and at least in the context of \cite{andrieu}, SA has been combined with using MCMC, to yield a Markovian update in the SA method to yield a type of Markovian SA (MSA) method. This can be far more efficient than obtaining an independent (and unbiased) estimator of $h_l(\theta)$ as it typically needs only one or few updates from a single Markov kernel; details are given in the next section.
Note that the MSA method will only give solutions of $h_l(\theta)=0$ and so the resulting solution is likely to have a bias.

In this article based upon the randomization methods in \cite{rhee}, extended to so-called doubly randomized schemes in \cite{ub_grad,disc_model,par_uq,ub_pf, sfs}, we develop a new unbiased parameter estimation method for partially observed diffusion processes that uses the MSA method that is mentioned above. By unbiased, we mean a method than can obtain the solution of $h(\theta)=0$, despite the fact that one only works with $\pi_{\theta}^l$ and $h_l(\theta)$. The approach consists, roughly, of randomizing over the level of discretization $l$ and then running a MSA method at two consecutive levels of time discretization for a random number of iterations of an MSA algorithm. Using this approach one can show, under assumptions, that one can solve
$h(\theta)=0$. Key to this methodology is being able to run couplings of MCMC kernels that target $\pi_{\theta}^l$ and $\pi_{\theta}^{l-1}$ where the couplings are `suitably good'. Such methods have been developed in \cite{ub_grad} and are utilized here. We note that the work here is not the first to provide unbiased parameter estimates associated to partially observed diffusions with discrete-time observations. Some of the first methods are based upon exact simulations of diffusions (e.g.~\cite{beskos}), which are rather elegant when they can be applied; we assume that this is not possible and this is a reasonably wide-class of diffusion processes.
The approach in \cite{ub_grad} is more general than that of \cite{beskos}, using a methodology to unbiasedly estimate $h(\theta)$ at each time step of the SA method and returning unbiased parameter estimates. Note that the estimator that the authors is unbiased in terms of discretization error, but has a bias from the SA iterations; this is a bias that we will remove.
The method in \cite{ub_grad} is in contrast to the approach in this article where we focus explicitly on unbiasedly estimating the parameters themselves. The advantage of the latter approach, is that often the cost-per-iteration of the MSA method will be less than having to unbiasedly estimate $h(\theta)$ and hence that one has a faster computational algorithm for parameter estimation. It should be noted, however, that unbiased estimation of the gradient of the log-likelihood (i.e.~$h(\theta)$) is of independent interest in itself and can be used inside MCMC methods and so on, whereas our approach is limited to exactly parameter estimation. 

To summarize the contributions of this article are as follows:
\begin{enumerate}
\item{To provide a new method for unbiased estimation of static parameters of partially observed diffusions associated to discrete-time observations over a fixed interval.}
\item{To prove that (a version of) the estimator is indeed unbiased, under mathematical assumptions.}
\item{To implement the method numerically on several examples and to establish faster empirical performance for parameter estimation than that in \cite{ub_grad}.}
\end{enumerate}
We note that the analysis in 2.~is non-trivial and relies on proving that the MSA method with reprojection (which is detailed later on) converges in our context. We use the main convergence theorem of \cite{andr3} which requires several conditions to be  
verified, associated to drift and minorization conditions of the Markov kernels that are used (amongst other conditions). This is non-trivial as we use a type conditional particle filter \cite{andrieu} that is in turn coupled. We do not prove that our estimator has finite variance although we conjecture exactly how that may be proved.

This article is structured as follows. In Section \ref{sec:method} we detail the mathematical problem and our methodology.
Section \ref{sec:diffusions} outlines the methodology in the context of diffusions and contains within it Section \ref{sec:theory} which gives our main theoretical result on the unbiasedness of the parameter estimates in the case of partially observed diffusion processes.
In Section \ref{sec:numerics} we provide numerical simulations which investigate our methodology.
Appendix \ref{app:proofs} houses all of our technical proofs.

\section{Methodology}\label{sec:method}

\subsection{Problem Formulation}

Recall that $h(\theta) := \int_{\mathsf{X}}H(\theta,x)\pi_{\theta}(dx)$.
The objective is to solve, $h(\theta)=0$, which can be achieved using Markovian stochastic approximation (e.g.~\cite{andr}) as we will now describe. Let $K_{\theta}:\mathsf{X}\times\mathcal{X}\rightarrow[0,1]$ be a Markov kernel, such that for any $\theta\in\Theta$, it admits $\pi_{\theta}$ as an invariant measure and, for each $\theta\in\Theta$, let $\nu_\theta$ be a probability measure on $(\mathsf{X},\mathcal{X})$. Set $\theta_0\in\Theta$ and generate $X_0\sim\nu_{\theta_0}$. Then Markovian stochastic approximation 
iterates at time $\theta\in\mathbb{N}$, as follows:
\begin{enumerate}
\item{Sample $X_n|(\theta_0,x_{0}),\dots,(\theta_{n-1},x_{n-1})$ from $K_{\theta_{n-1}}(x_{n-1},\cdot)$.}
\item{Update:
$$
\theta_n = \theta_{n-1} - \gamma_n H(\theta_{n-1},X_n)
$$
where $\{\gamma_n\}_{n\in\mathbb{N}}$ is a sequence of step-sizes, $\gamma_n>0$, $\sum_{n\in\mathbb{N}}\gamma_n=\infty$, $\sum_{n\in\mathbb{N}}\gamma_n^2<\infty$.}
\end{enumerate}
Typically, to prove convergence one often needs reprojections and ergodicity conditions on $K_{\theta}$, but for now we shall ignore this to simplify the exposition.

%We shall assume that $\theta\in\Theta$ is fixed for now.
We now assume that working directly with $\pi_{\theta}$ is not possible and one can only work with a 
family of approximations $\{\pi_{\theta}^l\}_{l\in\mathbb{N}_0}$, 
on a measurable space $(\mathsf{X}^l,\mathcal{X}^l)$,
$l\in\mathbb{N}_0$. This approximation is such that as $l\rightarrow\infty$ $(\mathsf{X}^l,\mathcal{X}^l)$
and $(\mathsf{X},\mathcal{X})$ will be identical. Now for any $\pi_{\theta}^l-$integrable $\varphi_{\theta}^l:\mathsf{X}^l\rightarrow\mathbb{R}$ 
$$
\pi_{\theta}^l(\varphi^l) := \int_{\mathsf{X}^l}\varphi_{\theta}^l(x^l)\pi_{\theta}^l(dx),
$$
we will assume that there exists a $\varphi_{\theta}:\mathsf{X}\rightarrow\mathbb{R}$ that is $\pi_{\theta}-$integrable and
$$
\lim_{l\rightarrow\infty} \pi_{\theta}^l(\varphi^l) = \pi_{\theta}(\varphi),
$$
and is such that $|\pi_{\theta}^l(\varphi^l)-\pi_{\theta}(\varphi)|\leq|\pi_{\theta}^{l-1}(\varphi^{l-1})-\pi_{\theta}(\varphi)|$. We note that the context here exists in partially
observed diffusions and Bayesian inverse problems; see \cite{ub_grad,disc_model} and the former is described in detail in Section \ref{sec:diffusions}.

A Markovian stochastic approximation (MSA) scheme would work as follows. Let $K_{\theta,l}:\mathsf{X}^l\times\mathcal{X}^l\rightarrow[0,1]$ be a Markov kernel, such that for any $\theta\in\Theta$, it admits $\pi_{\theta}^l$ as an invariant measure and, for each $\theta\in\Theta$, let $\nu_\theta^l$ be a probability measure on $(\mathsf{X},\mathcal{X})$. Then one can run Algorithm \ref{alg:SMA}.  In Algorithm \ref{alg:SMA} we do not specify any stopping rule, although of course in practice one must specify one. We note that, assuming that the zero of the functional $h_l(\theta) = \int_{\mathsf{X}^l}H_l(\theta,x)\pi_{\theta}^l(dx)$ exists, one expects it to be different to
the zero of $h(\theta)$.
 
\begin{algorithm}
\caption{Markovian Stochastic Approximation}
\label{alg:SMA}
\begin{algorithmic}[1]
\State{Set $\theta_0^l\in\Theta$ and generate $X_0\sim\nu_{\theta_0}^l$, $n=1$.}
\State{Sample $X_n|(\theta_0^l,x_{0}),\dots,(\theta_{n-1}^l,x_{n-1})$ from $K_{\theta_{n-1}^l,l}(x_{n-1},\cdot)$.} \bigskip
\State{Update:
$$
\theta_n^l = \theta_{n-1}^l + \gamma_n H_l(\theta_{n-1}^l,X_n).
$$
Set $n=n+1$ and go to the start of 2..}
\end{algorithmic}
\end{algorithm}

\subsection{Debiasing Markovian Stochastic Approximation}

Let $\theta_{\star}^l$ be the solution of $h_l(\theta)$ and $\theta_{\star}$ be the solution of $h(\theta)=0$ and suppose that $\lim_{l\rightarrow\infty}\theta_{\star}^l=\theta_{\star}$.
We shall assume that these solutions are unique, although this is mainly for simplicity and we need only assume that there exist a collection of solutions.
Let $\mathbb{P}_L(l)$ be a positive probability on $\mathbb{N}_0$ then, we know from \cite{rhee} if one samples $l$ from $\mathbb{P}_L$ and then computes, independently
of $L$, $\theta^l_{\star}-\theta^{l-1}_{\star}$ (with $\theta^{-1}_{\star}:=0$), we have that
$$
\widehat{\theta}_{\star} = \frac{\theta^l_{\star}-\theta^{l-1}_{\star}}{\mathbb{P}_L(l)}
$$
is an unbiased estimator of $\theta_{\star}$.  The estimator returned by the MSA procedure after $N$ steps, $\theta_N^l$ is not equal in expectation to $\theta_{\star}^l$, which would be a requirement for unbiasedness (see e.g.~\cite{matti}), 
but under conditions, one would have
$$
\lim_{N\rightarrow\infty}\mathbb{E}[\theta_N^l] = \theta_{\star}^l.
$$
This latter result, suggests the following double randomization scheme used in \cite{ub_pf} and is now detailed in the sequel.

We begin by assuming that one can find a coupling, $\check{K}_{\theta,\theta',l,l-1}$, of $(K_{\theta,l},K_{\theta',l-1})$ for any $(\theta,\theta')\in\Theta^2$ and $l\in\mathbb{N}$; examples in the context of partially observed diffusions and Bayesian inverse problems can be found in \cite{ub_grad,disc_model} and we describe a particular scheme in Section \ref{sec:diffusions}. 
Let $\check{\nu}_{\theta,\theta',l,l-1}$ be any coupling of $(\nu_{\theta}^l,\nu_{\theta'}^{l-1})$.
Then let $\{N_p\}_{p\in\mathbb{N}_0}$ be
any sequence of increasing natural numbers, converging to infinity. Let $\mathbb{P}_P$ be any positive probability on $\mathbb{N}_0$. 
Then Algorithm \ref{alg:USMA} will, under assumptions, give unbiased estimates of $\theta_{\star}$. Algorithm \ref{alg:USMA} can be run $M-$times in parallel and averaged to reduce the variance of the estimator, if it exists. We remark that Algorithm \ref{alg:USMA} is completely generic, in that at this stage we have not explicitly specified any model $\pi_{\theta}$ and $\pi_{\theta}^l$, however, of course the details on $K_{\theta,l}$ and $\check{K}_{\theta,\theta',l,l-1}$ are rather important and this needs to be explained in a particular context.

\begin{algorithm}
\caption{Unbiased Markovian Stochastic Approximation (UMSA)}
\label{alg:USMA}
\begin{algorithmic}[1]
\item{Sample $l$ from $\mathbb{P}_L$ and $p$ from $\mathbb{P}_P$.}
\item If $l=0$ perform the following:
\begin{itemize}
\item{Set $\theta_0^l\in\Theta$, $n=1$ and generate $X_0\sim\nu_{\theta_0}^l$.}
\item{Sample $X_n|(\theta_0^l,x_{0}),\dots,(\theta_{n-1}^l,x_{n-1})$ from $K_{\theta_{n-1}^l,l}(x_{n-1},\cdot)$.}
\item{Update:
$$
\theta_n^l = \theta_{n-1}^l - \gamma_n H_l(\theta_{n-1}^l,X_n).
$$
If $n=N_p$ go to the next bullet point, otherwise go back to the second bullet point.}
\item{If $p=0$ return
$$
\widehat{\theta}_{\star} = \frac{\theta_{N_p}^l}{\mathbb{P}_p(p)\mathbb{P}_L(l)},
$$
otherwise  return
$$
\widehat{\theta}_{\star} = \frac{\theta_{N_p}^l-\theta_{N_{p-1}}^l}{\mathbb{P}_p(p)\mathbb{P}_L(l)}.
$$
}
\end{itemize}

\item Otherwise perform the following:
\begin{itemize}
\item Set $\theta_0^l=\theta_0^{l-1}\in\Theta$, $n=1$ and generate $(X_0^l,X_0^{l-1})\sim\check{\nu}_{\theta_0^l,\theta_0^{l-1},l,l-1}^l$.
\item Sample $X_n^l,X_{n}^{l-1}\Big|(\theta_0^l,\theta_0^{l-1},x_{0}^l,x_{0}^{l-1}),\dots,(\theta_{n-1}^l,\theta_{n-1}^{l-1},x_{n-1}^l,x_{n-1}^{l-1})$ from
 
\smallskip
$\check{K}_{\theta_{n-1}^l,\theta_{n-1}^{l-1},l,l-1}\left((x_{n-1}^l,x_{n-1}^{l-1}),\cdot,\right)$.

\item Update:
\begin{eqnarray*}
\theta_n^l & = & \theta_{n-1}^l + \gamma_n H_l(\theta_{n-1}^l,X_n^l), \\
\theta_n^{l-1} & = & \theta_{n-1}^{l-1} + \gamma_n H_{l-1}(\theta_{n-1}^{l-1},X_n^{l-1}).
\end{eqnarray*}
If $n=N_p$ go to the next bullet point, otherwise go back to the second bullet point.
\item If $p=0$ return
$$
\widehat{\theta}_{\star} = \frac{\theta_{N_p}^l-\theta_{N_p}^{l-1}}{\mathbb{P}_p(p)\mathbb{P}_L(l)},
$$
otherwise  return
$$
\widehat{\theta}_{\star} = \frac{\theta_{N_p}^l-\theta_{N_{p}}^{l-1}-\{\theta_{N_{p-1}}^l-\theta_{N_{p-1}}^{l-1}\}}{\mathbb{P}_p(p)\mathbb{P}_L(l)}.
$$
\end{itemize}
\end{algorithmic}
\end{algorithm}

\section{Application to Partially Observed Diffusions}\label{sec:diffusions}

\subsection{Model}

Consider the diffusion process on the filtered probability space $(\Omega,\mathscr{F},\{\mathscr{F}_t\}_{t\geq 0},\mathbb{P}_{\theta})$, $\theta\in\Theta\subseteq\mathbb{R}^{d_{\theta}}$
\begin{equation}
dX_t = a_{\theta}(X_t)dt + \sigma(X_t)dW_t\quad\quad X_0\sim\mu_{\theta}\label{eq:diff}
\end{equation}
where $X_t\in\mathbb{R}^d$, $\mu_{\theta}$ a probability measure on $(\mathbb{R}^d,\mathcal{B}(\mathbb{R}^d))$ with Lebesgue density denoted $\mu_{\theta}$ also, $a:\Theta\times\mathbb{R}^d \rightarrow\mathbb{R}^d$, $\{W_t\}_{t\geq 0}$ is a standard $d-$dimensional Brownian motion
and $\sigma:\mathbb{R}^d \rightarrow\mathbb{R}^{d\times d}$.  Denote $a_{\theta}^j$ as the $j^{th}-$component of $a_{\theta}$, $j\in\{1,\dots,d\}$
and $\sigma^{j,k}$ as the $(j,k)^{th}-$component of $\sigma$, $(j,k)\in\{1,\dots,d\}^2$.
We assume the following, referred to as (D1) from herein:
\begin{quote} 
The coefficients, for any fixed $\theta\in\Theta$, $a_{\theta}^j\in \mathcal{C}^2(\mathbb{R}^d)$ (twice continuously differentiable real-valued functions on $\mathbb{R}^d$) and $\sigma^{j,k} \in \mathcal{C}^2(\mathbb{R}^d)$, for $(j,k)\in\{1,\ldots, d\}$. 
Also for any fixed $x\in\mathbb{R}^d$, $a_{\theta}^j(x)\in\mathcal{C}(\Theta)$, $j\in\{1,\dots,d\}$.
In addition, $a_{\theta}$ and $\sigma$ satisfy % and the following hold
\begin{itemize}
%\item[(i)] Smoothness: $a^j, b^{j,k} \in C^2$, for $j,k= 1,\ldots, d$;
\item[(i)] {\bf uniform ellipticity}: $\Sigma(x):=\sigma(x)\sigma(x)^*$ is uniformly positive definite 
over $x\in \mathbb{R}^d$;
\item[(ii)] {\bf globally Lipschitz}:
% with linear growth} 
for any fixed $\theta\in\Theta$, there exist a $C<+\infty$ such that 
$|a_{\theta}^j(x)-a_{\theta}^j(x')|+|\sigma^{j,k}(x)-\sigma^{j,k}(x')| \leq C \|x-x'\|_2$ ($\|\cdot\|_2$ is the $L_2-$norm) % and $|a(x)|+|b(x))| \leq C (1+|x|)$ 
for all $(x,x') \in \mathbb{R}^d\times\mathbb{R}^d$, $(j,k)\in\{1,\dots,d\}^2$. %(Linear growth comes form the former by letting y=0)
%\item[(iii)] {\bf boundedness}: for each $\theta\in\Theta$, $\mathbb{E}_{\theta}[|X_0|^p] < \infty$,  for all $p \geq 1$. 
\end{itemize}
\end{quote}
The assumption (D1) is not referred to again and strictly is not a requirement to apply any of the methodology that is to be discussed. However, some of the algorithms (e.g.~Algorithm \ref{alg:ql_def}) to be described, do not work well without this assumption, which is why it is stated.

By using the Girsanov theorem, for any $\varphi:\Theta\times\mathbb{R}^{nd}\rightarrow\mathbb{R}$, $\varphi_{\theta}(x_{t_1},\dots,x_{t_n})$, for $0\leq t_1<\cdots<t_n=T>0$ and $\mathbb{P}_{\theta}-$integrable, 
$$
\mathbb{E}_{\theta}[\varphi_{\theta}(X_{t_1},\dots,X_{t_n})] = \mathbb{E}_{\mathbb{Q}\otimes\textrm{Leb}}\left[\mu_{\theta}(X_0)\varphi_{\theta}(X_{t_1},\dots,X_{t_n})\frac{d\mathbb{P}_{\theta}}{d\mathbb{Q}}\right]
$$
where $\mathbb{E}_{\theta}$ denotes expectations w.r.t.~$\mathbb{P}_{\theta}$,
$$
\frac{d\mathbb{P}_{\theta}}{d\mathbb{Q}} = \exp\Big\{-\frac{1}{2}\int_{0}^T \|b_{\theta}(X_s)\|_2^2ds + \int_{0}^{T}b_{\theta}(X_s)^*dW_s\Big\}
$$
$\mathbb{Q}\otimes\textrm{Leb}$ is the product measure of a probability $\mathbb{Q}$ on the path $\{X_t\}_{t>0}$ and the Lebesgue measure on $X_0$,
$b_\theta(x) = \Sigma(x)^{-1}\sigma(x)^*a_{\theta}(x)$ is a $d-$vector and under $\mathbb{Q}$, $\{X_t\}_{t>0}$ solves $dX_t = \sigma(X_t)dW_t$ where 
$\{W_t\}_{t\geq 0}$ is a standard $d-$dimensional Brownian motion under $\mathbb{Q}$ also.  It is assumed that $a_{\theta},\varphi_{\theta}$ and $\sigma$ are such that $\mu_{\theta}(X_0)\varphi_{\theta}(X_{t_1},\dots,X_{t_n})\frac{d\mathbb{P}_{\theta}}{d\mathbb{Q}}$ is $\mathbb{Q}-$integrable for each fixed $T\geq 0$.
As it will be useful below, we write
$$
\frac{d\mathbb{P}_{\theta}}{d\mathbb{Q}} = \exp\Big\{-\frac{1}{2}\int_{0}^T \|b_{\theta}(X_s)\|_2^2ds + \int_{0}^{T}b_{\theta}(X_s)^*
\Sigma(X_s)^{-1}\sigma(X_s)^* dX_s\Big\}.
$$
Now if $\mu_{\theta}\varphi_\theta$ is differentiable w.r.t.~$\theta$, one has, under very minor regularity conditions that
$$
\nabla_{\theta}\log\left\{\mathbb{E}_{\theta}[\varphi_{\theta}(X_{t_1},\dots,X_{t_n})]\right\} = \mathbb{E}_{\overline{\mathbb{P}}_{\theta}}\left[\nabla_{\theta}\log\left(\mu_{\theta}(X_0)
\varphi_{\theta}(X_{t_1},\dots,X_{t_n})\frac{d\mathbb{P}_{\theta}}{d\mathbb{Q}}
\right)\right]
$$
where $\overline{\mathbb{P}}_{\theta}=\varphi_{\theta}\mathbb{P}_{\theta}/\mathbb{P}_{\theta}(\varphi_{\theta})$ and $\mathbb{P}_{\theta}(\varphi_{\theta})=\mathbb{E}_{\theta}[\varphi_{\theta}(X_{t_1},\dots,X_{t_n})]$. In this notation, the previously mentioned $\pi_{\theta}$ is the probability measure $\overline{\mathbb{P}}_{\theta}$, with $\varphi_{\theta}$ to be determined below.

Consider a sequence of random variables $(Y_{1},\dots,Y_{T})$, where $Y_{p}\in\mathbb{R}^{d_y}$, that are assumed to have joint Lebesgue density ($T\in\mathbb{N}$ is assumed from herein)
$$
p_{\theta}(y_{1},\dots,y_{T}|\{x_s\}_{0\leq s\leq T}) = \prod_{k=1}^T g_{\theta}(x_{k},y_{k})
$$
where, $g:\Theta\times\mathbb{R}^d\times\mathbb{R}^{d_y}\rightarrow\mathbb{R}^+$, for any $(\theta,x)\in\Theta\times\mathbb{R}^d$, $\int_{\mathbb{R}^{d_y}}g_\theta(x,y)dy=1$ and $dy$ is the Lebesgue measure. If one considers realizations of the random variables, $(Y_{1},\dots,Y_{T})$, then we have the state-space model with marginal likelihood:
$$
p_{\theta}(y_{1},\dots,y_{T}) = \mathbb{E}_{\theta}\left[\prod_{k=1}^T g_{\theta}(X_{k},y_{k})\right].
$$
Now, following the construction that is developed above, one would have the function $\varphi_{\theta}(x_{1},\dots,x_{T})= \prod_{k=1}^T g_{\theta}(x_{k},y_{k})$ and the expression:
$$
H(\theta,\{X_s\}_{s\in[0,T]}) = \nabla_{\theta}\log\Big(\mu_{\theta}(x_0)
\varphi_{\theta}(X_{1},\dots,X_{T})\frac{d\mathbb{P}_{\theta}}{d\mathbb{Q}}\Big) = 
$$
$$
\nabla_{\theta} \log\{\mu_{\theta}(X_0)\} + 
\sum_{k=1}^n \nabla_{\theta} \log\{g_{\theta}(X_{k},y_{k})\}
-\frac{1}{2}\int_{0}^T \nabla_{\theta}\|b_{\theta}(X_s)\|_2^2ds + \int_{0}^T \nabla_{\theta}b_{\theta}(X_s)^*\Sigma(X_s)^{-1}\sigma(X_s)^* dX_s.
$$

\subsubsection{Time Discretization}

Let $l\in\mathbb{N}_0$ be given and consider an Euler discretization of step-size $\Delta_l=2^{-l}$, $k\in\{1,2,\dots,\Delta_{l}^{-1}T\}$:
\begin{eqnarray}\label{eq:discretization}
\widetilde{X}_{k\Delta_l} & = & \widetilde{X}_{(k-1)\Delta_l} + a_{\theta}(\widetilde{X}_{(k-1)\Delta_l})\Delta_l + \sigma(\widetilde{X}_{(k-1)\Delta_l})[W_{k\Delta_l}-W_{(k-1)\Delta_l}].\label{eq:disc_state}
\end{eqnarray}
One can also define the function $H_l:\Theta\times(\mathbb{R}^d)^{\Delta_l^{-1}T+1}\rightarrow\mathbb{R}^{d_{\theta}}$
\begin{equation}\label{eq:h_l_def}
H_{l}(\theta,\widetilde{x}_{0},\dots,\widetilde{x}_{T}) = 
 \sum_{k=0}^{\Delta_l^{-1}T-1} \Big\{-\frac{\Delta_l}{2}\nabla_{\theta}\|b_{\theta}(\widetilde{x}_{k\Delta_l})\|_2^2
+\nabla_{\theta}b_{\theta}(\widetilde{x}_{k\Delta_l})^*\Sigma(\widetilde{x}_{k\Delta_l})^{-1}\sigma(\widetilde{x}_{k\Delta_l})^*[\widetilde{x}_{(k+1)\Delta_l}-\widetilde{x}_{k\Delta_l}]\Big\} +
\end{equation}
$$
\sum_{k=1}^T \nabla_{\theta} \log\{g_{\theta}(\widetilde{x}_{k},y_{k})\}  + \nabla_{\theta} \log\{\mu_{\theta}(\widetilde{x}_0)\}. 
$$
Then we have
$$
h_l(\theta) = \nabla_{\theta} \log(p_{\theta}^l(y_{1},\dots,y_{T})) := \frac{\mathbb{E}_{\theta}[\varphi_{\theta}(\widetilde{X}_{1},\dots,\widetilde{X}_{T})H_{l}(\theta,\widetilde{X}_{0},\dots,\widetilde{X}_{T})]}
{\mathbb{E}_{\theta}[\varphi_{\theta}(\widetilde{X}_{t_1},\dots,\widetilde{X}_{t_n})]}.
$$
The probability measure $\pi_{\theta}^l$ is simply the smoother of path $\widetilde{x}_{0},\dots,\widetilde{x}_{T}$ given the observations $y_1,\dots,y_T$. To be more precise, set $U_0=X_0$ and denote by $U_k$ the discretized path $\widetilde{X}_{k-1+\Delta_l},\dots,\widetilde{X}_{k}$, $k\in\{1,\dots,T\}$ and denote the transition kernel of $U_k$ given $U_{k-1}$ (as induced by \eqref{eq:disc_state})
as $M_{\theta,l}$ then one has that:
\begin{equation}\label{eq:pi_l_def}
\pi_{\theta}^l\left(d(u_0,\dots,u_T)\right) = \frac{\prod_{k=1}^Tg_{\theta}(\tilde{x}_k,y_{k})\mu_{\theta}(u_0)du_0\prod_{k=1}^T M_{\theta,l}(u_{k-1},du_k)}{\int_{(\mathbb{R}^d)^{\Delta_l^{-1}T+1}} \prod_{k=1}^Tg_{\theta}(\tilde{x}_k,y_{k})\mu_{\theta}(u_0)du_0\prod_{k=1}^T M_{\theta,l}(u_{k-1},du_k)}
\end{equation}
where we have omitted dependence on the data in $\pi_{\theta}^l$.

\subsection{Conditional Particle Filter}

To construct our algorithm for partially observed diffusions, we begin by describing the conditional particle filter (see \cite{andrieu}) which is a Markov kernel of invariant measure $\pi_{\theta}^l$. The simulation of the kernel is described in Algorithm \ref{alg:cond_pf_0}.

\begin{algorithm}[h]
\begin{enumerate}
\item{Input $U_{0}',\dots,U_T'$. Set $k=1$, sample $U_0^i$ independently from $\mu_{\theta}$, $a_{0}^i=i$ for $i\in\{1,\dots,N-1\}$.}
\item{Sampling: for $i\in\{1,\dots,N-1\}$ sample $U_{k}^i|U_{k-1}^{a_{k-1}^i}$ using the Markov kernel $M_{\theta,l}$. Set $U_{k}^N=U_k'$ and for $i\in\{1,\dots,N-1\}$, $(U_{0}^i,\dots,U_k^i)=(U_0^{a_{k-1}^i},\dots,U_{k-1}^{a_{k-1}^i},U_{k}^i)$. If $k=T$ go to 4..}
\item{Resampling: Construct the probability mass function on $\{1,\dots,N\}$:
$$
r_1^i = \frac{g_{\theta}(\widetilde{x}_k^i,y_k)}{\sum_{j=1}^Ng_{\theta}(\widetilde{x}_k^j,y_k)}.
$$
For $i\in\{1,\dots,N-1\}$ sample $a_k^i$ from $r_1^i$. Set $k=k+1$ and return to the start of 2..}
\item{Construct the probability mass function on $\{1,\dots,N\}$:
$$
r_1^i = \frac{g_{\theta}(\widetilde{x}_T^i,y_T)}{\sum_{j=1}^Ng_{\theta}(\widetilde{x}_T^j,y_T)}.
$$
Sample $i\in\{1,\dots,N\}$ using this mass function and return $(U_0^i,\dots,U_T^i)$.}
\end{enumerate}
\caption{Conditional Particle Filter at level $l\in\mathbb{N}_0$.}
\label{alg:cond_pf_0}
\end{algorithm}

\subsection{Coupled Conditional Particle Filter}

To describe the coupled conditional particle filter (CCPF), which is essentially described in \cite{ub_grad} and is a conditional particle filter associated to the algorithm in \cite{mlpf}, we require several objects that we shall now detail.

We begin with simulating the maximal coupling of two probability mass functions on $\{1,\dots,N\}$ in Algorithm \ref{alg:max_coup}. This will be needed in the resampling operation of the CCPF. Next we describe a coupling of $M_{\theta,l}$ and $M_{\theta',l-1}$, which we denote by $\check{M}_{\theta,\theta',l,l-1}$. This is given in Algorithm \ref{alg:ql_def}. It will be needed in the sampling step of the CCPF and is the well-known `synchronous coupling' of Euler discretizations. In Algorithm \ref{alg:ql_def}, $\mathcal{N}_d(0,\Delta_l I_d)$ is the $d-$dimensional Gaussian, 0 mean, $\Delta_l I_d$ covariance, $I_d$ is the $d\times d$ identity matrix.
Given Algorithm \ref{alg:max_coup} and Algorithm \ref{alg:ql_def} we are now in a position to describe the simulation of one step of the CCPF kernel $K_{\theta,\theta',l,l-1}$. The kernel will take as its input two trajectories at levels $l$ and $l-1$, denote
them as $U_{0:T}^l\in\mathbb{R}^{T\Delta_l^{l-1}d+1}$ and $U_{0:T}^{l-1}\in\mathbb{R}^{T\Delta_{l-1}^{l-1}d+1}$ and produce two new such trajectories.

\begin{algorithm}[h]
\begin{enumerate}
\item{Input: Two probability mass functions (PMFs) $(r_1^1,\dots,r_1^N)$ and $(r_2^1,\dots,r_2^N)$ on $\{1,\dots,N\}$.}
\item{Generate $U\sim\mathcal{U}_{[0,1]}$ (uniform distribution on $[0,1]$).}
\item{If $U<\sum_{i=1}^N \min\{r_1^i,r_2^i\}=:\bar{r}$ then generate $i\in\{1,\dots,N\}$ according to the probability mass function:
$$
r_3^i = \frac{1}{\bar{r}} \min\{r_1^i,r_2^i\}
$$
and set $j=i$.}
\item{Otherwise generate $i\in\{1,\dots,N\}$ and $j\in\{1,\dots,N\}$ independently according to the probability mass functions 
$$
r_4^i = \frac{1}{1-\bar{r}} (r_1^i - \min\{r_1^i,r_2^i\})
$$
and
$$
r_5^j = \frac{1}{1-\bar{r}} (r_2^j - \min\{r_1^j,r_2^j\})
$$
respectively.
}
\item{Output: $(i,j)\in\{1,\dots,N\}^2$. $i$, marginally has PMF $r_1^i$ and $j$, marginally has PMF $r_2^j$.}
\end{enumerate}
\caption{Simulating a Maximal Coupling of Two Probability Mass Functions on $\{1,\dots,N\}$.}
\label{alg:max_coup}
\end{algorithm}

\begin{algorithm}[h]
\begin{enumerate}
\item{Input $(x_0^l,x_0^{l-1})\in\mathbb{R}^{2d}$ and the level $l\in\mathbb{N}_0$.}
\item{Generate $V_{k\Delta_l}\stackrel{\textrm{i.i.d.}}{\sim}\mathcal{N}_d(0,\Delta_l I_d)$, for $k\in\{1,\dots,\Delta_l^{-1}\}$.}
\item{Run the recursion, for $k\in\{1,\dots,\Delta_l^{-1}\}$:
\begin{eqnarray*}
X_{k\Delta_l}^l & = & X_{(k-1)\Delta_l}^l + a_{\theta}(X_{(k-1)\Delta_l}^l)\Delta_l + \sigma(X_{(k-1)\Delta_l}^l)V_{k\Delta_l}
%\bar{X}_{k\Delta_l}^l & = & \bar{X}_{(k-1)\Delta_l}^l + a_{\theta}(\bar{X}_{(k-1)\Delta_l}^l)\Delta_l + \sigma(\bar{X}_{(k-1)\Delta_l}^l)V_{k\Delta_l}.
\end{eqnarray*}
}
\item{Run the recursion, for $k\in\{1,\dots,\Delta_{l-1}^{-1}\}$:
\begin{eqnarray*}
X_{k\Delta_l}^{l-1} & = & X_{(k-1)\Delta_l}^{l-1} + a_{\theta'}(X_{(k-1)\Delta_l}^{l-1})\Delta_l + \sigma(X_{(k-1)\Delta_l}^{l-1})[V_{(2k-1)\Delta_l}+V_{2k\Delta_l}] 
%\bar{X}_{k\Delta_l}^{l-1} & = & \bar{X}_{(k-1)\Delta_l}^{l-1} + a_{\theta}(\bar{X}_{(k-1)\Delta_l}^{l-1})\Delta_l + \sigma(\bar{X}_{(k-1)\Delta_l}^{l-1})[V_{(2k-1)\Delta_l}+V_{2k\Delta_l}].
\end{eqnarray*}
}
\item{Return $\left((x_{\Delta_l}^l,\dots,x_1^l),(x_{\Delta_l}^{l-1},\dots,x_1^{l-1})\right)\in\mathbb{R}^{\Delta_l^{-1}d}\times\mathbb{R}^{\Delta_{l-1}^{-1}d}$.}
\end{enumerate}
\caption{Simulating the Kernel $\check{M}_{\theta,\theta',l,l-1}$.}
\label{alg:ql_def}
\end{algorithm}

\begin{algorithm}[h]
\begin{enumerate}
\item{Input $(U_{0:T}^{',l},U_{0:T}^{',l-1})$. Set $k=1$, sample $U_0^{i,l}$ independently from $\mu_{\theta}$ and independently $U_0^{i,l-1}$ from $\mu_{\theta'}$, $a_{0}^{i,l}=a_0^{i,l-1}=i$ for $i\in\{1,\dots,N-1\}$.}
\item{Sampling: for $i\in\{1,\dots,N-1\}$ sample $(U_{k}^{i,l},U_k^{i,l-1})|(U_{k-1}^{a_{k-1}^{i,l},l},U_{k-1}^{a_{k-1}^{i,l-1},l-1})$ using the Markov kernel $\check{M}_{\theta,\theta',l,l-1}$ in Algorithm \ref{alg:ql_def}. Set $U_{k}^{N,l}=U_k^{',l}$,  $U_{k}^{N,l-1}=U_k^{',l-1}$ and for $(i,s)\in\{1,\dots,N-1\}\times\{l-1,l\}$, $U_{0:k}^{i,s}=(U_0^{a_{k-1}^{i,s},s},\dots,U_{k-1}^{a_{k-1}^{i,s},s},U_{k}^{i,s})$. If $k=T$ go to 4..}
\item{Resampling: Construct the probability mass functions on $\{1,\dots,N\}$:
$$
r_1^i = \frac{g_{\theta}(\widetilde{x}_k^{i,l},y_k)}{\sum_{j=1}^Ng_{\theta}(\widetilde{x}_k^{j,l},y_k)} \quad\textrm{and}\quad
r_2^i = \frac{g_{\theta'}(\widetilde{x}_k^{i,l-1},y_k)}{\sum_{j=1}^Ng_{\theta'}(\widetilde{x}_k^{j,l-1},y_k)}.
$$
For $i\in\{1,\dots,N-1\}$ sample $a_k^{i,l}$ and $a_k^{i,l-1}$ using the maximal coupling (Algorithm \ref{alg:max_coup}) with PMFs $r_1^i$ and $r_2^i$. Set $k=k+1$ and return to the start of 2..}
\item{Construct the probability mass functions on $\{1,\dots,N\}$:
$$
r_1^i = \frac{g_{\theta}(\widetilde{x}_T^{i,l},y_T)}{\sum_{j=1}^Ng_{\theta}(\widetilde{x}_T^{j,l},y_T)} \quad\textrm{and}\quad
r_2^i = \frac{g_{\theta'}(\widetilde{x}_T^{i,l-1},y_T)}{\sum_{j=1}^Ng_{\theta'}(\widetilde{x}_T^{j,l-1},y_T)}
$$
Sample $(i,j)\in\{1,\dots,N\}^2$ using these mass functions via Algorithm \ref{alg:max_coup} and return $(U_{0:T}^{i,l},,U_{0:T}^{i,l-1})$.}
\end{enumerate}
\caption{Coupled Conditional Particle Filter at level $l\in\mathbb{N}$.}
\label{alg:ccpf}
\end{algorithm}

\subsection{Final Algorithm}\label{sec:final_algo}

The method that we use is then reasonably simple, given appropriate choices of $\mathbb{P}_L$ and $\mathbb{P}_p$, which is a topic to be discussed in the next section. 
\begin{itemize}
\item{Run Algorithm \ref{alg:USMA} with the choice of $\pi_{\theta}^l$ as in \eqref{eq:pi_l_def} and $H_l(\theta,U_{0:T})$ as in \eqref{eq:h_l_def}.}
\item{The kernel $K_{\theta,l}$ is sampled as in  Algorithm \ref{alg:cond_pf_0}.}
\item{The kernel $\check{K}_{\theta,\theta',l,l-1}$ is sampled as Algorithm \ref{alg:ccpf}.}
\end{itemize}

At this stage several remarks are important. Firstly, the estimator that we have employed is the so-called single term estimator for randomization (see \cite{rhee}). This can be improved by using independent sum estimators and we refer to \cite{ub_grad,disc_model,rhee,matti} for details. Secondly, in Algorithm \ref{alg:max_coup}, step 4.~can be improved by sampling from any coupling of $(r_4^i,r_5^j)$, although for simplicity, we do not do this. Thirdly, in Algorithm \ref{alg:ccpf}, we can improve step 1.~by sampling $(U_0^{i,l},U_0^{i,l-1})$ from a coupling of $(\mu_{\theta},\mu_{\theta'})$. Finally, for the probability measure $\nu_{\theta}^l$ (see e.g.~Algorithm \ref{alg:USMA}) we use the simulation of $U_0$ from $\mu_{\theta}$ and the rest of the path is generated using the recursion \eqref{eq:disc_state}. To sample the coupling $\check{\nu}^l_{\theta,\theta}$ we simply copy $U_0^{l}$ to obtain $U_0^{l-1}$ and generate the two trajectories all the way to time $T$ using $T$ applications of $\check{M}_{\theta,\theta',l,l-1}$ in Algorithm \ref{alg:ql_def}.

\subsection{Theoretical Results}\label{sec:theory}

We now give the main theoretical result of the paper which is that, under assumptions and modifications, the estimator that we have introduced is unbiased. By unbiased, we mean that the expected value
of the estimator is exactly $\theta_{\star}$. The estimator that we analyze, however, is slightly modified from the procedure that is discussed in Section \ref{sec:final_algo} as we use the method of reprojection (see e.g.~\cite{andr3} and the references therein) within the stochastic approximation method. This is essentially a minor modification which makes the mathematical analysis of stochastic approximation methods far easier. We remark, that in practice we never use the reprojection method and hence its description is relegated to the Appendix (Appendix \ref{app_sec:mod}).

The theoretical result is given under a collection of assumptions that are listed and discussed in the Appendix (Appendix \ref{app_sec:ass}) and are termed  \hyperref[assump:A1]{(A1-9)}. Below, we write $\mathbb{E}[\cdot]$ to denote the expectation w.r.t.~the randomness of our estimator that has been produced under the modification detailed in 
Appendix \ref{app_sec:mod}. We then have the following result.
 
 \begin{theorem}
 \label{thm:unbiased}
 Assume \hyperref[assump:A1]{(A1-9)}. Then we have that $\mathbb{E}[\widehat{\theta}_*] = \theta_{\star}$.
\end{theorem}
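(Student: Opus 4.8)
The plan is to read $\widehat{\theta}_*$ as two nested applications of the single-term randomization (debiasing) identity of \cite{rhee}: an inner randomization over the iteration budget $p$ that removes the bias of the MSA recursion, followed by an outer randomization over the discretization level $l$ that removes the time-discretization bias. To this end I adopt the convention $\theta_{N_p}^{-1}:=0$ and set, for every $l\in\mathbb{N}_0$ and $p\in\mathbb{N}_0$,
$$
\Delta_p^l := \theta_{N_p}^l-\theta_{N_p}^{l-1},\qquad \Delta_{-1}^l:=0,
$$
so that, inspecting both branches of Algorithm \ref{alg:USMA}, the returned estimator is uniformly $\widehat{\theta}_*=(\Delta_p^l-\Delta_{p-1}^l)/(\mathbb{P}_P(p)\mathbb{P}_L(l))$ (the $p=0$ and $l=0$ cases being recovered by the two conventions). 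Writing $\mathbb{E}_{l,p}[\cdot]$ for the expectation over the algorithmic randomness conditional on the sampled pair $(L,P)=(l,p)$, the normalizers cancel and
$$
\mathbb{E}[\widehat{\theta}_*]=\sum_{l\geq 0}\sum_{p\geq 0}\mathbb{E}_{l,p}\big[\Delta_p^l-\Delta_{p-1}^l\big],
$$
provided the interchange of expectation with the double sum is licit; I return to this point at the end.

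The inner telescoping is the crux of the first stage. Set $d_p^l:=\mathbb{E}_{l,p}[\Delta_p^l]$. The key observation is that, within a single run at budget $p$, the iterate $\theta_{N_{p-1}}^l$ is a deterministic function of the first $N_{p-1}<N_p$ updates of the (coupled) MSA chain, and the law of that initial segment does not depend on the total number of iterations performed; hence $\mathbb{E}_{l,p}[\Delta_{p-1}^l]=\mathbb{E}_{l,p-1}[\Delta_{p-1}^l]=d_{p-1}^l$. Consequently the inner sum telescopes,
$$
\sum_{p\geq 0}\mathbb{E}_{l,p}\big[\Delta_p^l-\Delta_{p-1}^l\big]=\sum_{p\geq 0}\big(d_p^l-d_{p-1}^l\big)=\lim_{p\to\infty} d_p^l,
$$
and because $N_p\uparrow\infty$ this limit equals $\lim_{N\to\infty}\mathbb{E}[\theta_N^l-\theta_N^{l-1}]$. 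Here I invoke convergence of the reprojected MSA recursion at each fixed level: I would show $\lim_{N\to\infty}\mathbb{E}[\theta_N^l]=\theta_\star^l$ (and likewise for level $l-1$ in the coupled chain), whence $\lim_{p\to\infty}d_p^l=\theta_\star^l-\theta_\star^{l-1}$. The outer telescoping then finishes the argument: substituting the previous display,
$$
\mathbb{E}[\widehat{\theta}_*]=\sum_{l\geq 0}\big(\theta_\star^l-\theta_\star^{l-1}\big)=\lim_{l\to\infty}\theta_\star^l=\theta_\star,
$$
using $\theta_\star^{-1}=0$ and the standing assumption $\lim_{l\to\infty}\theta_\star^l=\theta_\star$.

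The main obstacle, and the bulk of the technical work, is establishing the per-level MSA convergence $\mathbb{E}[\theta_N^l]\to\theta_\star^l$ together with the quantitative control needed to legitimise the two interchanges of limit, sum and expectation (Fubini--Tonelli for the double sum, and absolute convergence of $\sum_p\|d_p^l-d_{p-1}^l\|$ and of $\sum_l\|\theta_\star^l-\theta_\star^{l-1}\|$). For the convergence itself I would apply the main theorem of \cite{andr3} to the stochastic-approximation-with-reprojection scheme of Appendix \ref{app_sec:mod}, reducing the task to verifying its hypotheses for the conditional particle filter kernel $K_{\theta,l}$ of Algorithm \ref{alg:cond_pf_0} and its coupling $\check{K}_{\theta,\theta',l,l-1}$ of Algorithm \ref{alg:ccpf}: uniform (in $\theta$) drift and minorization conditions, Lipschitz regularity of $\theta\mapsto K_{\theta,l}$, and regularity/growth of the mean field $h_l$ and of $H_l$ in \eqref{eq:h_l_def}. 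The genuinely non-trivial part is that $K_{\theta,l}$ is a conditional particle filter and $\check{K}_{\theta,\theta',l,l-1}$ a coupled one, so the required geometric ergodicity, and the contraction of the coupling that also furnishes the level-wise rates entering the summability bounds, must be extracted from the particle-filter structure under assumptions \hyperref[assump:A1]{(A1-9)}; this is precisely where those assumptions do their work.
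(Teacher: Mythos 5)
Your proposal follows essentially the same route as the paper: the ``several simple calculations'' the authors omit are precisely your double telescoping over $p$ and $l$, their Theorem \ref{theo:you_idiot} is the verification of the drift, minorization and Lipschitz conditions needed to invoke \cite[Theorem 5.5]{andr3} for the reprojected MSA driven by the conditional particle filter, and the bounded convergence theorem (via the boundedness of $\Theta$ in \hyperref[assump:A1]{(A1)}) upgrades the resulting almost sure convergence $\theta_n^l\to\theta_\star^l$ to convergence of expectations. Your explicit flagging of the Fubini/summability interchange is, if anything, more careful than the paper, which only establishes the per-level convergence and does not quantify the level-wise differences; note also that verifying the conditions for the marginal kernel $K_{\theta,l}$ alone suffices, since the coupled kernel has the correct marginals.
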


\begin{proof}
This follows by Theorem \ref{theo:you_idiot} in the Appendix, the bounded convergence theorem and several simple calculations that are omitted for brevity.
\end{proof}

The result that is given is essentially the minimum one would like to provide. One would also like to show that the estimator also has a finite variance and a finite expected cost (or cost that this finite `with high probability') as is the case for several other doubly randomized estimators \cite{ub_grad,disc_model,par_uq,ub_pf}. The challenge here is to show that the coupling of the estimators across consecutive levels of discretizations are sufficiently close as a function of $l$; typically this is measured through the second moment of the difference, and bounds are often of the type $\mathcal{O}(\Delta_l^{\beta})$ - see \cite{ub_grad,disc_model,par_uq,ub_pf}. In our context, this require a rather intricate analysis of the coupling of the conditional particle filter and its interplay with the stochastic approximation method. In the proofs given in the Appendix, all of the bounds that are proved, would explode with the discretization level and so, despite the technical sophistication of our proofs an even more intricate proof would be needed. As a result, in this paper we simply consider the empirical performance of our estimator and leave further mathematical analysis to future work.

\section{Numerical Simulations}\label{sec:numerics}
In this section, we test our algorithm on two models and compare it against the methodology proposed in \cite{ub_grad}.

\subsection{Ornstein-Uhlenbeck Process}

Consider the Ornstein-Uhlenbeck (OU) process $\{X_t\}_{t\geq 0}$ defined by
\begin{equation*}
\begin{split}
        dX_{t} = -\theta X_t dt + \sigma dW_t, \quad X_0=x_0\in\mathbb{R}, ~~ t\in [0,T],\\
\end{split}
\end{equation*}
where $\theta,\sigma\in\mathbb{R}_+$ and $T\in\mathbb{N}$. Let $\varsigma\in\mathbb{R}_+$, then the observations $\{Y_k\}_{k=1}^T$ are taken at unit times as $Y_k|X_k=x_k\sim\mathcal{N}(x_k, \varsigma^2)$, $k\in\{1,...,T\}$, generated from the true parameter $\theta^*=0.5$. We set $x_0=100$, $T=25$, $\sigma=0.4$ and $\varsigma=1$. 
We consider the OU process in this example because the likelihood can be computed analytically, which then can be used in the gradient descent method to obtain the maximum likelihood estimator (MLE). We apply our methodology presented in Algorithm \ref{alg:USMA} to estimate $\theta$ and compare our estimator to the MLE obtained from the exact model. Let $\tilde{\theta}$ be the MLE obtained from running the gradient descent. For each $M\in\{2^k~:~ 3\leq k\leq 13\}$, we run $M$ independent copies of our algorithm in parallel. Let $\hat{\theta}_1,...,\hat{\theta}_M$ be the estimates obtained from each run. For each $M$, define $\hat{\theta}_{M}^* = \frac{1}{M}\sum_{i=1}^M\hat{\theta}_i$. The MSE is then calculated as 
$$\frac{1}{100} \sum_{i=1}^{100}|\hat{\theta}_{M}^{*,i}-\tilde{\theta}|^2,$$ 
which was estimated by running the described procedure above 100 times. The number of particles used in the CPF or the CCPF is 50. We set $\mathbb{P}_L(l)=2^{-1.5l} \mathbb{I}_{\{l_{\text{min}},\cdots,l_{\text{max}}\}}(l)$ for some $l_{\text{min}},l_{\text{max}}\in \mathbb{N}\cup \{0\}$, where $\mathbb{I}_A$ denotes the indicator function on a set $A$. Given $l$ sampled from $\mathbb{P}_L$, we sample $p$ from $\mathbb{P}_{P|L}(p|l) \propto g(p|l)$ and set $N=N_0 ~2^p$, where
\begin{align*}
g(p|l) =  \left\{\begin{array}{lcl}
2^{5-p} & \text{if} & p \in \{p_{\text{min}},\cdots,5 \wedge (l_{\text{max}} - l)\},\\
2^{-p}~ p~[\log_2(p)]^2 &\text{if} & 5 < p \leq p_{\text{max}},
\end{array} \right.
\end{align*}
for some $p_{\text{min}}, p_{\text{max}} \in \mathbb{N} \cup \{0\}$. This choice is very similar to the one used in \cite{ub_pf}. We set $N_0=10$, $l_{\text{min}} = 3$, $l_{\text{max}}=12$, $p_{\text{min}}=1$ and $p_{\text{max}}=12$.
In Figure \ref{fig:ou_1}, we plot the MSE against both the CPU run time and $M$. The run time here is the sum of CPU run times of of all the $M$ processes that were run in parallel. Figure \ref{fig:ou_1} shows that the MSE scales as $M^{-1}$ which agrees with our theory that the estimator $\hat{\theta}_M^*$ is unbiased and has a variance that scales as $M^{-1}$.

\begin{figure}[h!]
    \centering
    \includegraphics[scale=0.5]{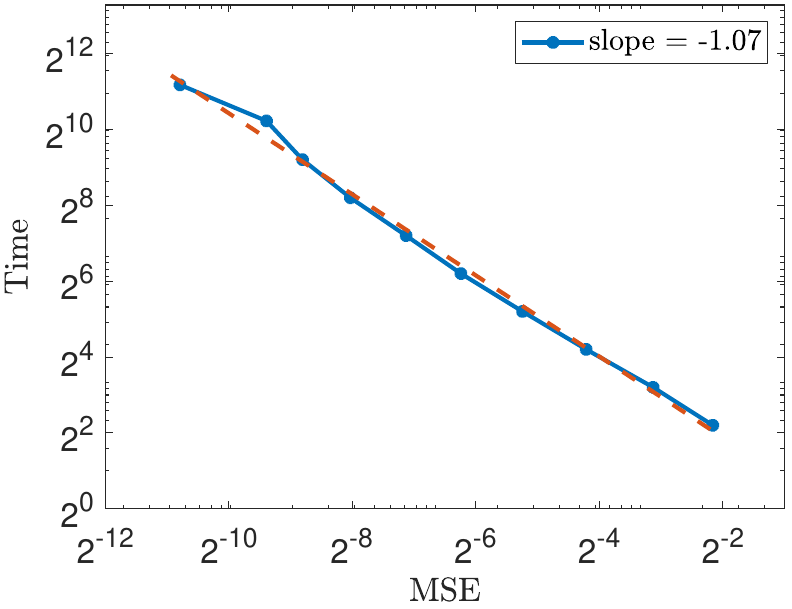}
    \includegraphics[scale=0.5]{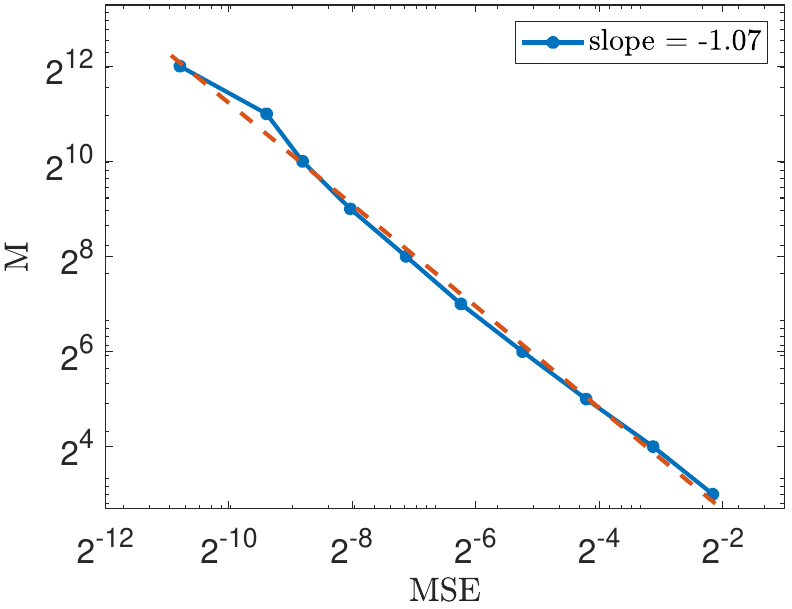}
    \caption{(OU Model) The UMSA algorithm is applied to estimate the drift parameter of the OU process. Left: MSE against run time. Right: MSE against $M$.}
    \label{fig:ou_1}
\end{figure}

Next, we compare our UMSA algorithm against the algorithm proposed in \cite{ub_grad}, where the latter gives an unbiased estimator for the score function, that is, the gradient of the log-likelihood, which is subsequently utilized in a stochastic gradient method to estimate the static parameters. Since usually it is difficult to decide when is the right time to stop the stochastic gradient method in the algorithm of \cite{ub_grad}, we only measure the time needed to compute an unbiased estimate of the score function in a neighborhood of $\theta=0.5$. Notice however that the run time of the algorithm in \cite{ub_grad} that is needed to estimate $\theta$ is in general much more (the average cost is proportional to the number of iterations used in the stochastic gradient method times the average time needed to unbiasedly estimate the score function). On the other hand, as for our algorithm, we compute the median time of running Algorithm \ref{alg:USMA} only once. We run 1000 simulations of each of the aforementioned procedures to generate box-plots for the run time. As we can see in Figure \ref{fig:ou_time_compar}, the median time needed to provide one estimate of $\theta$ using UMSA is less than that needed to generate an unbiased estimate of the score function in a neighborhood of $\theta=0.5$. This indicates that the run time needed to compute the estimator $\hat{\theta}_M^*$ is indeed going to be less than that needed to estimate $\theta$ using the method in \cite{ub_grad} since the $M$ simulations of UMSA algorithm are run in parallel, and therefore, the median run time to compute $\hat{\theta}_M^*$ is almost the same as the median run time of running Algorithm \ref{alg:USMA} only once.

\begin{figure}[h!]
    \centering
    \includegraphics[scale=0.55]{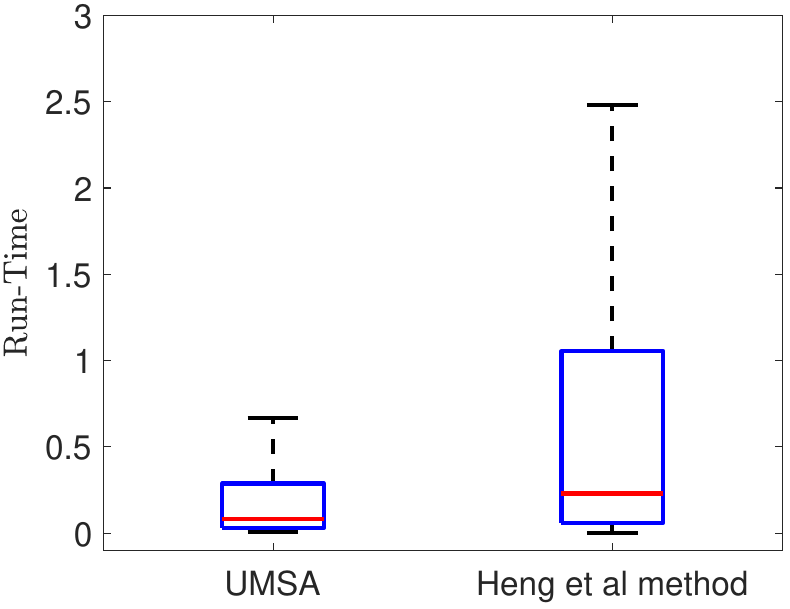}
    \caption{(OU Model) Comparison of the time needed to compute one unbiased estimate of $\theta$ using UMSA and the time needed to compute an unbiased estimate of the score function in a small neighborhood of $\theta^*=0.5$ using the method in \cite{ub_grad}. The box-plots are generated from 1000 runs of each procedure.}
    \label{fig:ou_time_compar}
\end{figure}

\subsection{Diffusion Model for Red Kangaroos}

In this example, we look at an application from population ecology to predict the dynamics of a population of red kangaroos (Macropus rufus) in New South Wales, Australia \cite{kangaroo}. The latent population size $Z=\{Z_t\}_{t\geq t_1}$ ($t_1>0$) is assumed to follow
a logistic diffusion process with environmental variance \cite{dennis,knape},  given as
\begin{equation}
\label{eq:kang}
dZ_t = (\theta^2_3/2 + \theta_1-\theta_2Z_t)Z_tdt + \theta_3Z_tdW_t, \qquad Z_{t_1} \sim \mathcal{L}{\mathcal{N}}(5,10^2),
\end{equation}
where $\mathcal{L}\mathcal{N}$ denotes the log-normal distribution. The parameters, $\theta_1 \in \mathbb{R}$ and $\theta_2>0$ can be thought of as coefficients that describe how the growth rate varies with population size. However, as $\theta_3$ appears in the diffusion coefficient of \eqref{eq:kang} we apply the Lamperti transformation $X_t =\log(Z_t)/\theta_3$. Applying It\^{o}'s formula, the new process $(X_t)_{t\geq 0}$ satisfies
\begin{equation}
    dX_t = a_{\theta}(X_t)dt + dW_t, \qquad X_{t_1}\sim \mathcal{N}\left(\frac{5}{\theta_3}, \frac{100}{\theta_3^2}\right),
\end{equation}
with $a_{\theta}(x) = \theta_1/\theta_3 - (\theta_2/\theta_3)\exp{(\theta_3 x)}$. The observations are double transect counts (pairs of non-negative integers) at irregular times $t_1,...,t_P$, with $P=41$, denoted by $Y_{t_1},\dots,Y_{t_P}$ and $Y_t \in \mathbb{R}^2$. It is assumed that the observations $\{Y_t\}_{t=t_1}^{t=t_P}$ are conditionally independent given $\{X_t\}_{t_1 \leq t \leq t_P}$ and are negative binomial distributed, precisely the density of $Y_t$ given $X_t$ is 
\begin{equation*}
    g_{\theta}(x_t,y_t) = \mathcal{NB}(y_t^1;\theta_4, \exp\{\theta_3 x_t\})~\mathcal{NB}(y_t^2;\theta_4, \exp\{\theta_3 x_t\}), \quad \theta_4>0, \quad t \in \{t_1,\cdots,t_P\},
\end{equation*}
where
$\mathcal{NB}(y;r,\mu) = \frac{\Gamma(y+r)}{\Gamma(r)y!}\left( \frac{r}{r+\mu} \right)^r \left(\frac{\mu}{r+\mu}\right)^y$, for $y\in\mathbb{N}\cup\{0\}$, $r\in(0,\infty)$, and $\mu\in(0,\infty)$. The goal is to estimate the parameters $\boldsymbol{\theta} = (\theta_1,\theta_2,\theta_3,\theta_4)\in \mathbb{R}\times (0,\infty)^3$. 

We employ $3$ as the minimum discretization level. For every $l\geq 3$ set the step size $\Delta_l=2^{-l}$ and replace the irregular times $t_1,...,t_P$ by rounding each of them to the closest number of the form $t_1+k\Delta_l$ for integers $k$, that is, we replace the irregular time $t_i$ with $\tilde{t}_i = \lfloor \frac{t_i-t_1}{\Delta_l}+\frac{1}{2}\rfloor + t_1$, $i\in \{1,\cdots,P\}$.
For each $M\in\{2^k~:~ 3\leq k\leq 13\}$, we run $M$ independent repeats of our algorithm in parallel. Let $\hat{\boldsymbol{\theta}}_1,...,\hat{\boldsymbol{\theta}}_M$ be the estimates obtained from running $M$ copies of our algorithm. For each $M$, define $\hat{\boldsymbol{\theta}}_{M}^* = \frac{1}{M}\sum_{i=1}^M\hat{\boldsymbol{\theta}}_i$. Figure \ref{fig:kang} is a log-log scale plot of the MSE versus the CPU run time of the algorithm. For each $M\in\{2^k~:~ 3\leq k\leq 12\}$, the MSE is calculated as 
$$
\textup{MSE}_j=\frac{1}{100} \sum_{i=1}^{100}|\hat{\boldsymbol{\theta}}_{M,j}^{*,i}-\hat{\boldsymbol{\theta}}_{2^{13},j}^*|^2, \quad \text{for } 1\leq j\leq 4,
$$ 
where $\hat{\boldsymbol{\theta}}_{k,j}$ denotes the $j$-th component of the the vector $\hat{\boldsymbol{\theta}}_{k}$. Here, the reference value $\hat{\boldsymbol{\theta}}_{2^{13},j}^*$ is computed by running Algorithm \ref{alg:USMA} $2^{13}$ times. The reason of considering $\hat{\boldsymbol{\theta}}_{2^{13}}$ as a proxy for the true MLE is that the likelihood for this model is intractable. We set $\mathbb{P}_L$, $\mathbb{P}_P$ and the number of particles used in CPF and CCPF similar to those in the previous example.

Figure \ref{fig:kang} shows that the run time scales approximately as MSE$^{-1}$. The fact that the estimator has the same rate as Monte-Carlo is a consequence of the unbiasedness property of our estimator as proven in Theorem \ref{thm:unbiased}. Next, we compare our UMSA algorithm against the algorithm proposed in \cite{ub_grad}, similar to what we did in the previous example. Figure \ref{fig:kang_time_compar} shows that the median run time needed to compute $\hat{\theta}_M^*$ is less than that needed to compute one unbiased estimate of the score function in a neighborhood of $\boldsymbol{\theta}^*$. Again, this example also shows that our method outperforms the methodology of \cite{ub_grad}.

\begin{figure}[h!]
    \centering
    \includegraphics[scale=0.5]{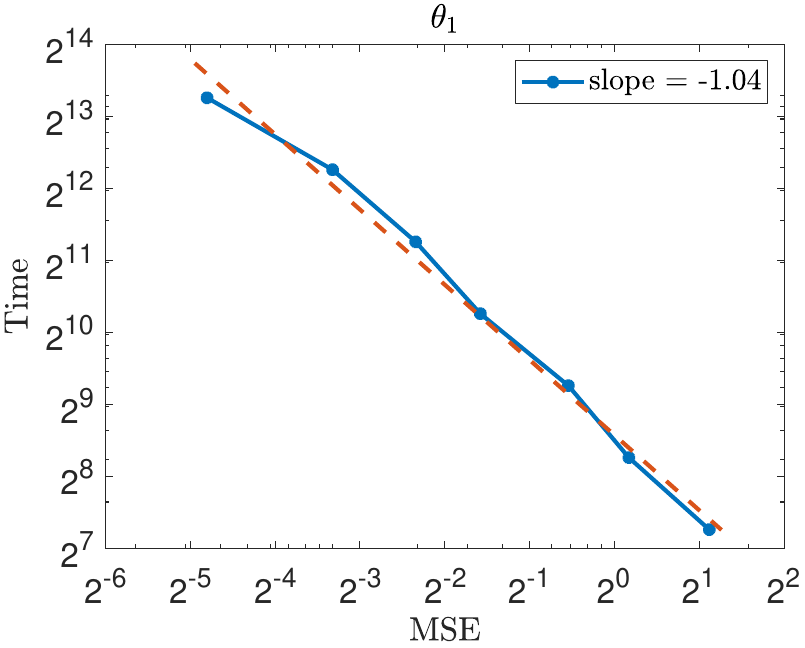}
    \includegraphics[scale=0.5]{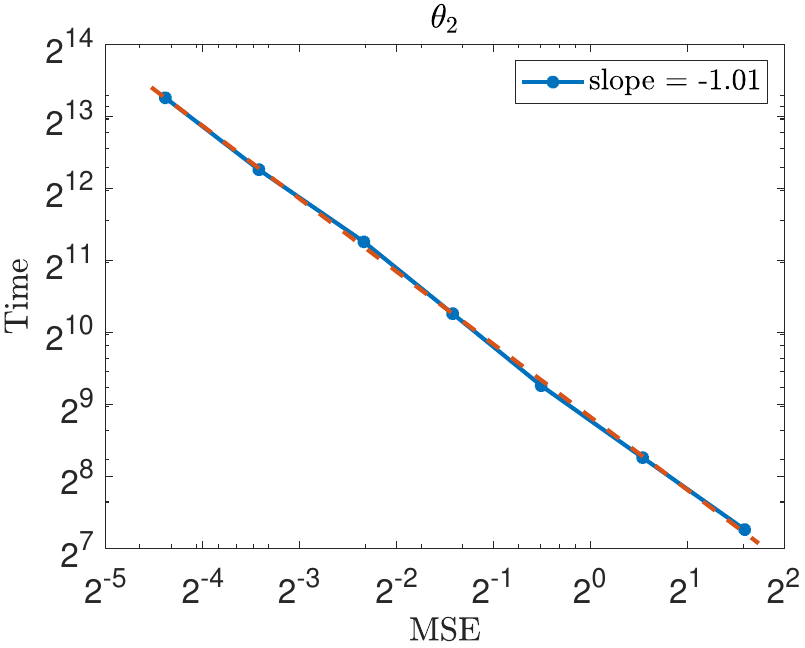}\\
    \includegraphics[scale=0.5]{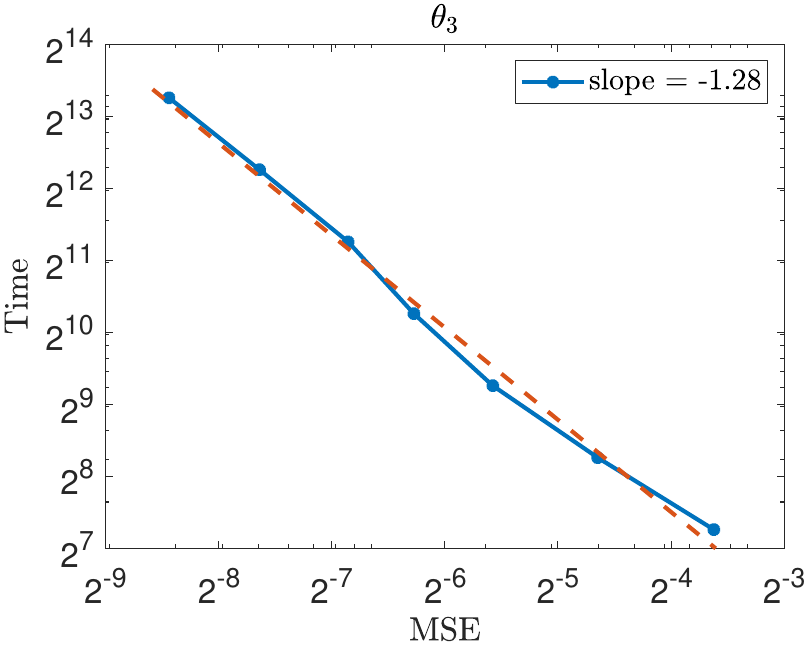}
    \includegraphics[scale=0.5]{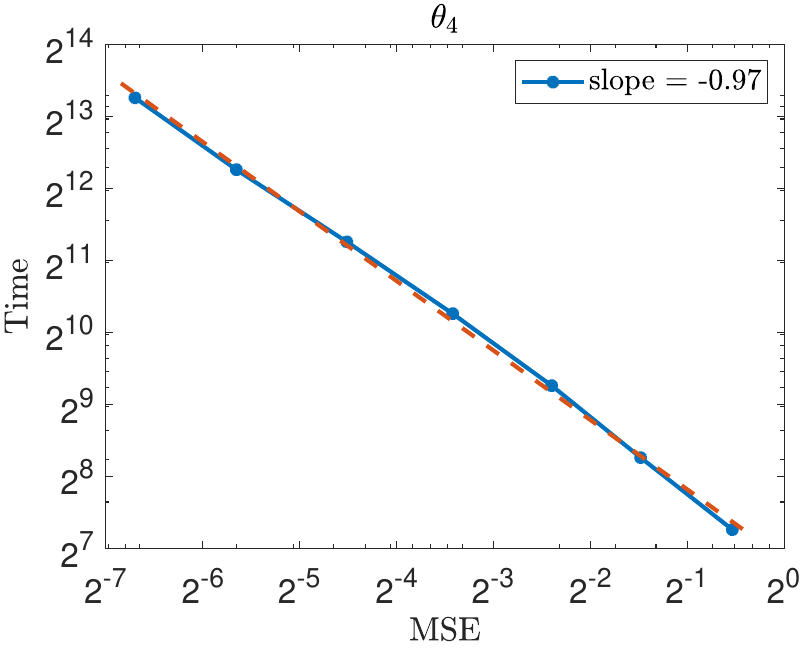}
    \caption{(Red Kangaroo Model) The UMSA algorithm applied to estimate the parameters of the red Kangaroo model. The plots show the MSE vs. run time for each parameter on a log-log scale.}
    \label{fig:kang}
\end{figure}

\begin{figure}[h!]
    \centering
    \includegraphics[scale=0.55]{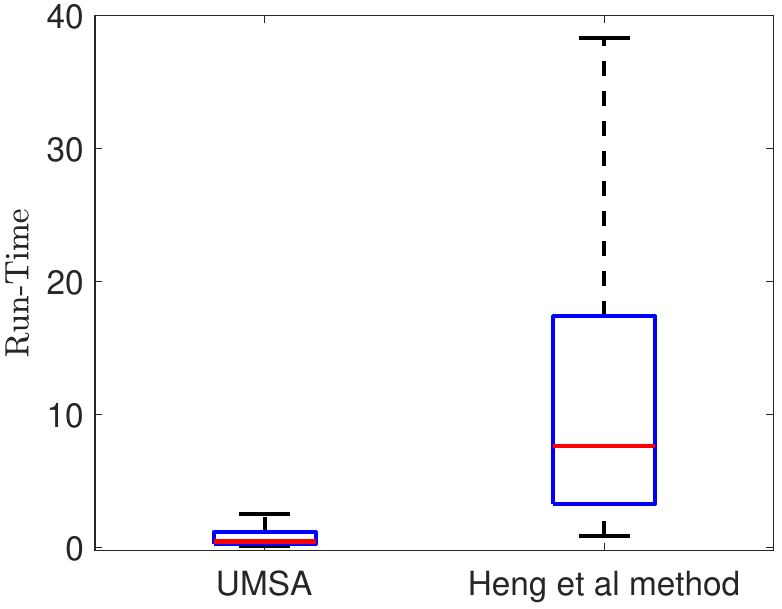}
    \caption{(Red Kangaroo Model) Comparison of the time needed to run the UMSA algorithm only once and the time needed to compute an unbiased estimate of the score function in a small neighborhood of $\boldsymbol{\theta}^*=(2.397,4.429\times10^{-03}, 0.84, 17.631)$ using the method in \cite{ub_grad}. The box-plots are generated from 1000 runs of each procedure.}
    \label{fig:kang_time_compar}
\end{figure}

\subsubsection*{Acknowledgements}

All authors were supported by KAUST baseline funding.

\appendix

\section{Proofs}\label{app:proofs}
%%%%%%%% SSS
\newcommand{\mathbbm}[1]{\text{\usefont{U}{bbm}{m}{n}#1}}
\newtheorem{lemma}{Lemma}
\newtheorem{remark}{Remark}
\newtheorem{corollary}{Corollary}

\subsection{Structure}

The appendix is written in a manner such that it should be read in order. The main idea is to verify the almost sure convergence of a slightly modified estimator, as will be described below. To that end, we will essentially seek to show that \cite[Theorem 5.5]{andr3} holds in our context, which verifies exactly the afore-mentioned property. This is stated in Theorem 
\ref{theo:you_idiot} later on, on which we give the proof in series of technical lemmata. The structure is then as follows. In Section \ref{app_sec:not} we give some additional notations. In Section \ref{app_sec:mod} we provide the slightly modified estimator. In Section \ref{app_sec:ass} we detail our assumptions and the main theorem. Finally in Section \ref{app_sec:proof} the proofs for the main theorem are presented.

\subsection{Notations}\label{app_sec:not}

Throughout the section we drop the $\tilde{X}$ and $U$ notations that are used in Section \ref{sec:diffusions} and simply use $X$. The notation $C$, $C_1$ etc are used to denote constants whose value may change from line-to-line but do not depend on $\theta$: dependencies on the model and simulation parameters will be clear from the context.

For every $\theta\in\Theta$ let $Q_{\theta,l}:\mathbb{R}^d\times\mathcal{B}(\mathbb{R}^d)\rightarrow [0,1]$ be the kernel of the discretized equation \eqref{eq:discretization}. That is
$$ Q_{\theta,l}(x,dz) = \frac{1}{\sqrt{(2\pi|\sigma(z)|^2)^d}}e^{(x-y-a_{\theta}(z)\Delta_l)^{\top}\sigma(z)\sigma(z)^{\top}(x-y-a_{\theta}(z)\Delta_l)}dz $$
where for a matrix $A$ the notation $A^{\top}$ denotes the transpose of $A$, $|A|$ denotes its determinant, and $dz$ is the Lebesgue measure on $\mathbb{R}^d$.
For every $\theta\in\Theta$ we define the unit-step kernel $M_{\theta,l}:(\mathbb{R}^d)^{\Delta_l^{-1}}\times\mathcal{B}((\mathbb{R}^d)^{\Delta_l^{-1}})\rightarrow [0,1]$ as
$$ M_{\theta,l}(x,dz) = Q_{\theta,l}(x_{\Delta_l^{-1}},dz_1)\prod_{i=2}^{\Delta_l^{-1}} Q_{\theta,l}(z_{i-1},dz_i).$$
Let $\{X_i\}_{i=0}^{T\Delta_l^{-1}}$ be the path of the process generated by $Q_{\theta,l}$ until time $T$. For every $1\leq k\leq\Delta_l^{-1}-1$ and a set $A\in\mathcal{B}((\mathbb{R}^d)^{\Delta_l^{-1}})$ of discrete paths the values of $M_{\theta,l}(x,A)$ is the probability of the event $\{X_i\}_{i=1+\Delta_l^{-1}k}^{\Delta^{-1}(k+1)}\in A$ given $\{X_i\}_{i=1+\Delta_l^{-1}(k-1)}^{\Delta^{-1}k} = x$. Without loss of generality and for ease of notation we extend the definition of $M_{\theta,l}$ to $\left((\mathbb{R}^d)^{\Delta_l^{-1}}\cup \mathbb{R}^d\right)\times\mathcal{B}((\mathbb{R}^d)^{\Delta_l^{-1}})$ since $M_{\theta,l}$ uses only the last component of $x$. This allows us to think of $M_{\theta,l}(x,A)$ as the transition probability of the event $\{\{X_i\}_{i=1}^{\Delta_l^{-1}}\in A\}$ given $X_0=x\in\mathbb{R}^d$.

Let  $y_1,...,y_T$ be the partial observations at times $1,...,T$. For every $l\in\mathbb{N}_0$ and $\theta\in\Theta$ define $G_{\theta,t}^l:\left(\mathbb{R}^{d}\right)^{\Delta_l^{-1}}\rightarrow\mathbb{R}$ by $ G_{\theta,t}^l(x)=g_{\theta}(x_{\Delta_l^{-1}},y_t)$ for $1\leq t\leq T$. Let $A_{1:T-1}^{1:N}$ be the resampling indices and $(F_t)_{t=0}^{T}$ be the indices of the path returned by conditional particle filter, these indices satisfy $A_{t}^N=N$ and $F_t = A_t^{F_{t+1}}$ for $1\leq t\leq T-1$ and $F_0=F_1$, hence deciding the value of $F_T$ and the resampling indices $A_{1:T-1}^{1:N-1}$ decides the indices of the whole path that conditional particle filter will return. The kernel of the conditional particle filter $K_{\theta,l}$ at level $l$ is defined by its action on bounded functions $\varphi:\left(\mathbb{R}^{d}\right)^{T\Delta^{-1}_l+1}\rightarrow \mathbb{R}$ given a discrete path $z\in\left(\mathbb{R}^{d}\right)^{T\Delta^{-1}_l+1}$ as 
\begin{equation}\label{eq:cpf_kernel}
    \begin{split}
        K_{\theta,l}(\varphi)(z)=\sum_{1\leq F_T,A_{1:T-1}^{1:N-1}\leq N}\int \delta_{z}(dx^{N}_{0:T})\frac{G_{\theta,T}^l(x^{F_T}_T)}{\sum_{k=1}^N G_{\theta,T}^l(x^k_T)}\prod_{j=1}^{N-1}\mu_{\theta}(dx_0^{j})M_{\theta,l}(x_0^j,dx^j_1)\\\times\prod_{t=2}^{T}\frac{G_{\theta,t-1}^l(x^{A^j_{t-1}}_{t-1})}{\sum_{k=1}^N G_{\theta,t-1}^l(x^k_{t-1})}M_{\theta,l}(x^{A_{t-1}^j}_{t-1},dx^j_t)\varphi(x_0^{F_0},...,x_T^{F_T}),
    \end{split}
\end{equation}
where the $x_t^j$ are elements in $\left(\mathbb{R}^{d}\right)^{\Delta_l^{-1}}$ for $1\leq t\leq T$, $1\leq j\leq N$ and $x_0^j\in\mathbb{R}^d$ for $1\leq j\leq N$, and the notation $\delta_z(dx_{0:T}^N) = \prod_{t=0}^T\delta_{z_t}(dx_t^N)$. 
%In particular, for a function $\varphi$ that only depends on the i-th component we have %***NOT NEEDED***
%\begin{equation}\label{eq:cpf_kernel_one_component} 
%    \begin{split}
%        K_{\theta,l}(\varphi)(z)=\sum_{1\leq F_T,A_{1:T-1}^{1:N-1}\leq N}\int \delta_{z}(dx^{N}_{0:T})\frac{G_{\theta,T}^l(x^{F_T}_T)}{\sum_{k=1}^N G_{\theta,T}^l(x^k_T)}\prod_{j=1}^{N-1}\mu_{\theta}(dx_0^{j})M_{\theta,l}(x_0^j,dx^j_1)\\\times\prod_{t=2}^{T}\frac{G_{\theta,t-1}^l(x^{A^j_{t-1}}_{t-1})}{\sum_{k=1}^N G_{\theta,t-1}^l(x^k_{t-1})}M_{\theta,l}(x^{A_{t-1}^j}_{t-1},dx^j_t)\varphi(x_i^{F_i}).
%    \end{split}
%\end{equation}
The integral \eqref{eq:cpf_kernel} is finite and well-defined for any bounded function $\varphi$ since assumption \hyperref[assump:A5]{(A5)} guarantees that the functions $G^l_{\theta,t}$ is bounded away from zero for every $0\leq t\leq T$. It is well-known that for every $\theta$ the kernel $K_{\theta,l}$ is $\psi$-irreducible and aperiodic with $\pi^l_{\theta}$ as its stationary distribution \cite{andrieu2012}, therefore asymptotically we are able to sample from the filtering distribution $\pi^l_{\theta}$ by iteratively sampling from $K_{\theta,l}$.

\subsection{Modified Algorithm}\label{app_sec:mod}

We use the function $-H_l$ defined in \eqref{eq:h_l_def} to run the Markovian stochastic approximation algorithm in \cite{andr3} as follows. 
Let $\{\gamma_n\}_{n\in\mathbb{N}_0}$ be a sequence of positive real numbers such that $\sum_{n\in\mathbb{N}_0} \gamma_n = \infty$ and $\sum_{n\in\mathbb{N}_0} \gamma_n^{2}<\infty$. Suppose we have sequence of increasing compact sets $\{\Theta_n\}_{n\in\mathbb{N}_0}$ such that $\bigcup_n \Theta_n = \Theta$ and $\Theta_n\subset\textrm{int}(\Theta_{n+1})$. Let $\{\epsilon_n\}_{n\in\mathbb{N}}$ be a sequence of positive real numbers that converges to $0$. For every $l\in\mathbb{N}_0$ we define the stochastic approximation with re-projections defined in \cite[Section 3.3]{andr3} as a sequence of pairs $(\theta_n^l, X_n^l)\in\Theta\times\left(\mathbb{R}^{d}\right)^{T\Delta_l^{-1}+1}$ defined iteratively by
\begin{equation}\label{eq:msa}
\begin{split}
    &\textrm{ Sample } X^l_{n+1}\sim K_{\theta_n^l,l}(X_n^l, \cdot)\\
    &\Tilde{\theta}_{n+1}^l = \theta_n^l+\gamma_n H_l(X^l_{n+1}, \theta^l_n)\\
    & \theta_{n+1}^l = \begin{cases}
        \Tilde{\theta}_{n+1}^l, \quad|\Tilde{\theta}_{n+1}^l-\theta_n^l| < \epsilon_n\;\textrm{  and  }\; \theta^l_{n+1}\in \Theta_{n+1}\\
        \theta_0, \quad \textrm{otherwise}
    \end{cases}
\end{split}
\end{equation}
where $(\theta_0,X_0)\in\Theta_0\times\left(\mathbb{R}^d\right)^{T\Delta^{-1}_l+1}$ is an arbitrary initial pair. Under appropriate conditions the sequence $\{\theta_n^l\}_{n\in\mathbb{N}_0}$ will converge almost surely to a root of $\nabla_{\theta}\log p_{\theta}^l(y_{1:T})$.
Theorem 5.5 in \cite{andr3} provides us with these appropriate conditions to guarantee the almost sure convergence of these iterations for every $l\in\mathbb{N}_0$. Below we will state the theorem in the context used in this paper along with the assumptions we impose in our paper which we will use to verify the conditions of the theorem. We will also explain how the restated theorem relates to the original theorem given our assumptions.

\subsection{Assumptions and Main Theorem}\label{app_sec:ass}

For a measurable space $S$ and a measurable function $V:S\to[1,\infty)$ we define the operator $\|.\|_V$ on the space of measurable functions $f:S\to\mathbb{R}$ by $\|f\|_V=\sup_{x}|\frac{f(x)}{V(x)}| $. We define the space of functions 
$$
\mathcal{L}_V = \left\{ f:S\to\mathbb{R} \textrm{  measurable}~:~ \sup_{x\in S} \frac{|f(x)|}{V(x)} <\infty\right\},
$$
and $\|.\|_V$ is a norm on $\mathcal{L}_V$. We impose the assumptions \hyperref[assump:A1]{(A1-3)} below on $\nabla_{\theta}\log p_{\theta}^l(y_{1:T})$ and $\Theta$ to guarantee that they are well-behaved.

%\makeatletter\edef\@currentlabelname{(A1)}\makeatother
%\phantomsection
%\label{assump:A1}

\begin{hypA}\label{assump:A1}
%(A1) 
The set $\Theta$ is bounded.
\end{hypA}

%\makeatletter\edef\@currentlabelname{(A2)}\makeatother
%\phantomsection

\begin{hypA}
\label{assump:A2}
%(A2) 
For every $l\in\mathbb{N}_0$ the function $\theta\to \nabla_{\theta}\log p_{\theta}^l(y_{1:T})$ is twice continuously differentiable. Moreover there exists a unique root for $\nabla_{\theta}\log p_{\theta}^l(y_{1:T})$. This root is the unique maximizer of $\log p_{\theta}^l(y_{1:T})$ and we denote it by $\theta_{\star}^l$.
\end{hypA}

%\makeatletter\edef\@currentlabelname{(A3)}\makeatother \phantomsection
\begin{hypA}
\label{assump:A3}
%(A3) 
For every $l\in\mathbb{N}_0$ there exists a constant $M$ such that
$$ \sup_{\bar{\theta}\in\partial \Theta}\limsup_{\theta\to\bar{\theta}} \log p_{\theta}^l(y_{1:T}) < M < \log p_{\theta^l_\star}^l(y_{1:T}).$$
\end{hypA}

The following is \cite[Theorem 5.5]{andr3}, slightly modified to our notation.
\begin{theorem}
\label{thm:main}
Let $l\in\mathbb{N}_0$ and consider the sequence $\{\theta_n^l\}_{n\in\mathbb{N}_0}$ defined by iteration \eqref{eq:msa}. Assume \hyperref[assump:A1]{(A1-3)}. Suppose that there exist a function $V_l : (\mathbb{R}^d)^{T\Delta_l^{-1}+1} \rightarrow [1, \infty)$ and constants $p \geq 2$, $\beta\in (0, 1]$, $\lambda,\delta\in(0,1)$, $b_l>0$, a non-empty set $\mathsf{C}$, and a measure $\eta_l$ that satisfy the following:\\
(TA1) $\sup_{\theta\in\Theta} (K_{\theta,l}V_l^p)(x) \leq \lambda (V_l(x))^p + b \mathbbm{1}_{\mathsf{C}}(x)$.\\
(TA2) $\inf_{\theta\in\Theta} K_{\theta,l}(x,A)\geq \delta\eta_l(A) \quad \forall x\in\mathsf{C} \text{ and } A\in \mathcal{B}((\mathbb{R}^d)^{T\Delta_l^{-1}+1})$.\\
(TA3) There exists $C$ such that for all $x\in(\mathbb{R}^d)^{T\Delta_l^{-1}+1}$\textup{:}
$$ \sup_{\theta\in\Theta}|H_l(\theta,x)|\leq CV_l(x), $$
$$ \underset{\substack{(\theta,\theta')\in\Theta^2\\ \theta\not=\theta'}}{\sup} \|\theta-\theta'\|^{-\beta}|H_l(\theta,x)-H_l(\theta',x)|\leq CV_l(x) .$$
(TA4) There exists $C$ such that for all $(\theta,\theta')\in\Theta^2$\textup{:}
$$ \left \| K_{\theta,l}\varphi - K_{\theta',l}\varphi \right \|_{V_l} \leq C\left \|  \varphi\right \|_{V_l} |\theta-\theta'|^{\beta} \quad   \forall \varphi\in \mathcal{L}_{V_l},$$
$$ \left \| K_{\theta,l}\varphi - K_{\theta',l}\varphi \right \|_{V_l^p} \leq C\left \|  \varphi\right \|_{V_l^p} |\theta-\theta'|^{\beta} \quad \forall \varphi\in \mathcal{L}_{V_l^p}.$$
(AT5) There exists $\alpha\in(0,\beta)$ such that
$$\sum_n (\gamma_n^2 + \gamma_n\epsilon_n^{\alpha} + (\epsilon_n^{-1}\gamma_n)^{p})<\infty.$$
We have 
$$\theta_n^l\rightarrow \theta^l_\star \quad\textrm{ a.s.}$$
\end{theorem}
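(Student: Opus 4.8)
The plan is to prove Theorem \ref{thm:main} by recognizing it as an instance of \cite[Theorem 5.5]{andr3} and checking that our hypotheses supply every ingredient that theorem needs. Conditions (TA1)--(TA4) and (AT5) are deliberately phrased so that, after the obvious change of notation, they coincide with the polynomial drift, minorization, regularity-of-$H$, kernel-continuity and step-size/truncation conditions of \cite{andr3}. What is not listed explicitly among them is the single analytic object on which the stability analysis of reprojected stochastic approximation rests: a $C^2$ Lyapunov function for the mean-field ODE $\dot\theta = h_l(\theta)$, where $h_l(\theta)=\nabla_\theta\log p_\theta^l(y_{1:T})$, together with the compact-containment information that guarantees the reprojection mechanism fires only finitely often. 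First I would manufacture that object from (A1)--(A3).

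For the Lyapunov function I would take
\[
w_l(\theta) := \log p_{\theta_\star^l}^l(y_{1:T}) - \log p_\theta^l(y_{1:T}).
\]
By (A2) this is $C^2$, it is nonnegative, and it vanishes exactly at the unique maximizer $\theta_\star^l$. Since the mean field is the score, $h_l(\theta) = \nabla_\theta\log p_\theta^l = -\nabla_\theta w_l(\theta)$, so along the flow
\[
\langle \nabla_\theta w_l(\theta), h_l(\theta)\rangle = -\|h_l(\theta)\|^2 \leq 0,
\]
with equality precisely at $\theta_\star^l$. Thus $w_l$ is a strict Lyapunov function whose set of stationary points is the singleton $\{\theta_\star^l\}$ and which takes the single value $0$ on the zero set of $h_l$; this is exactly the ODE-stability input demanded by \cite{andr3}, and it forces the limit set to collapse to the root rather than to a larger connected component.

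Next I would use (A1) and (A3) to control behaviour near $\partial\Theta$. By (A3) there is an $M$ with $\limsup_{\theta\to\bar\theta}\log p_\theta^l < M < \log p_{\theta_\star^l}^l$ uniformly over $\bar\theta\in\partial\Theta$; equivalently the sublevel set $\{w_l \le \log p_{\theta_\star^l}^l - M\}$ has closure inside $\mathrm{int}(\Theta)$, and by (A1) it is compact. This is precisely the requirement that the relevant sublevel set of the Lyapunov function be a compact subset of the interior, which in the argument of \cite{andr3} ensures that after finitely many reprojections the iterates are trapped in a compact set on which the drift and minorization conditions (TA1)--(TA2) deliver the recurrence and uniform ergodicity. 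With $w_l$ and this containment in place I would then invoke \cite[Theorem 5.5]{andr3} directly: (TA1) is the level-$p$ drift, (TA2) the minorization on $\mathsf{C}$, (TA3) the bound and $\beta$-H\"older continuity of $H_l$ against $V_l$, (TA4) the $\beta$-H\"older continuity of $\theta\mapsto K_{\theta,l}$ in the $V_l$- and $V_l^p$-norms, and (AT5) the summability controlling the truncated scheme. Together with (A1)--(A3) recast through $w_l$, these exhaust the hypotheses of the theorem, whose conclusion is almost-sure convergence of the reprojected iterates to the zero set of $h_l$; by (A2) that set is $\{\theta_\star^l\}$, giving $\theta_n^l \to \theta_\star^l$ a.s.

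The hard part will not be this translation but everything downstream of it: the genuine difficulty in the appendix is verifying (TA1)--(TA4) for the conditional particle filter kernel $K_{\theta,l}$ of \eqref{eq:cpf_kernel}, which is where the drift function $V_l$ must be chosen and the uniform-in-$\theta$ minorization and H\"older continuity established. Within the present statement the one delicate point is the bridge between the analytic boundary assumption (A3) and the stability requirement of \cite{andr3} --- namely showing that the sublevel sets of $w_l$ are compactly interior so that reprojections cease almost surely --- since the remaining verifications are a direct matching of hypotheses.
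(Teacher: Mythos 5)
Your proposal is correct and follows essentially the same route as the paper: both reduce the statement to \cite[Theorem 5.5]{andr3} by taking the (negative, shifted) log-likelihood $-\log p_{\theta}^l(y_{1:T})$ as the Lyapunov function $w$ (your $w_l$ differs only by the additive constant $\log p^l_{\theta^l_\star}(y_{1:T})$), using (A1)--(A3) and the constant $M$ to verify the stability/containment condition (A1) of \cite{andr3}, and identifying (TA1)--(TA2) with (DRI1), (TA3)--(TA4) with (DRI2)--(DRI3), and (AT5) with their step-size condition (A4). The only substantive content you add is the explicit observation that the sublevel set $\{w_l\le \log p^l_{\theta^l_\star}-M\}$ is compactly contained in $\mathrm{int}(\Theta)$, which the paper compresses into the choice $M_0=\tfrac12 M_1=\tfrac12 M$.
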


Theorem~\ref{thm:main} imposes requirements on the functions $K_{\theta,l}$, $H_l$, and the sequences $\{\gamma_n\}_{n\in\mathbb{N}_0}$, and $\{\epsilon_n\}_{n\in\mathbb{N}_0}$. To satisfy these condition we need more assumptions on the afore-mentioned functions and sequences, we state these additional assumptions below and denote them by \hyperref[assump:A4]{(A4-8)}. Let $\eta$ be a finite signed measure on a probability space. By the Hahn-Jordan decomposition there exist finite measures $\eta^+$ and $\eta^-$ such that $\eta=\eta^+-\eta^-$. Denote $|\eta| = \eta^++\eta^-$.

%\makeatletter\edef\@currentlabelname{(A4)}\makeatother
%\phantomsection
\begin{hypA}
\label{assump:A4}
%(A4) 
For every $l\in\mathbb{N}_0$ there exist a function $W_l:\mathbb{R}^d\rightarrow[1,\infty)$, a pair $(\kappa,\rho)\in (0,1)\times\mathbb{R}^+$
and a set $\mathsf{C}\in\mathcal{B}(\mathbb{R}^d)$ such that for any $(x,\theta)\in\mathbb{R}^d\times\Theta$:
$$
Q_{\theta,l}(e^{W_l})(x) \leq \exp\{\kappa W_l(x)+\rho\mathbbm{1}_{\mathsf{C}}(x)\},
$$
where $W_l$ satisfies the growth condition
$$ \lim_{\|x\|\rightarrow \infty} \frac{W_l(x)}{\log(\|x\|)} = \infty.$$
\end{hypA}

%\makeatletter\edef\@currentlabelname{(A5)}\makeatother
%\phantomsection
\begin{hypA}
\label{assump:A5}
%(A5) 
There a constant $C >0$ such that  $\frac{1}{C} \leq g_{\theta}(x,y_t) ,|\nabla_{\theta}g_{\theta}(x,y_t)|\leq C$ for all $1\leq t\leq T$, $(x,y,\theta)\in \mathbb{R}^{d}\times\mathbb{R}^{d_y}\times\Theta$.
\end{hypA}

%\makeatletter\edef\@currentlabelname{(A6)}\makeatother\phantomsection
\begin{hypA}
\label{assump:A6}
%(A6) 
For every $l\in\mathbb{N}_0$  there exists $C\in\mathbb{R}$ such that for every $\theta\in\Theta : \mu_{\theta}\left( \exp\{W_l\}\right) < C$.
\end{hypA}

%\makeatletter\edef\@currentlabelname{(A7)}\makeatother\phantomsection
\begin{hypA}
\label{assump:A7}
%(A7) 
For every $l\in\mathbb{N}_0$ there exist $\zeta >0$ and $C>0$ such that for every $(\theta,\theta',x)\in\Theta\times\Theta\times \mathbb{R}^d$:
$$\left| Q_{\theta,l}- Q_{\theta',l}\right|(\exp\{W_l\})(x) \leq C\|\theta-\theta'\|^{\zeta} \exp\{W_l(x)\}$$
and
$$ \left|\mu_{\theta}-\mu_{\theta'} \right|(\exp\{W_l\})\leq C\|\theta-\theta'\|^{\zeta}.$$
\end{hypA}

%\makeatletter\edef\@currentlabelname{(A8)}\makeatother\phantomsection
\begin{hypA}
\label{assump:A8}
%(A8) 
For every $l\in\mathbb{N}_0$ there are finite non-zero measures $\Xi_l$ and $\Psi$ such that for every $(\theta,x,A)\in\Theta\times \mathbb{R}^d\times\mathcal{B}(\mathbb{R}^d)$:
$$  \mu_{\theta}(A)\geq  \Psi(A) $$
and
$$ Q_{\theta,l}(x,A)\geq \Xi_l(A). $$
\end{hypA}

%\makeatletter\edef\@currentlabelname{(A9)}\makeatother \phantomsection
\begin{hypA}
\label{assump:A9}
%(A9) 
For every $l\in\mathbb{N}_0$ and $x\in(\mathbb{R}^d)^{T\Delta_l^{-1}+1}$ the function $\theta\mapsto H_l(\theta,x)$ is differentiable. Moreover there exist constants $C,q>0$ such that for every $(\theta,x)\in\Theta\times (\mathbb{R}^d)^{T\Delta_l^{-1}+1}:$
$$ |H_l(\theta,x)| < C(1 + |x|^q) ,$$
$$ \|\nabla_\theta H_l(\theta,x)\| < C(1 + |x|^q) .$$
\end{hypA}

In \cite{andr3} the authors state four conditions which they named (A1)-(A4). Here we breifly show that these four conditions follow from the theorem assumptions (TA1-4) and our assumptions \hyperref[assump:A1]{(A1-3)}. The notation $h(\theta)$ in \cite{andr3} is the negative of the gradient of the log likelihood in our case $-\nabla \log p_{\theta}^l(y_{1:T})$. (A1) in \cite{andr3} follows from our assumptions \hyperref[assump:A1]{(A1-3)}. Indeed, the function $-\log p_{\theta}^l(y_{1:T})$ is lower bounded and our \hyperref[assump:A3]{(A3)} assumes that it is continuously twice differentiable which allows the choice $w=-\log p_{\theta}^l(y_{1:T})$ as remarked by the authors of \cite{andr3}. Assume our \hyperref[assump:A2]{(A2)} and \hyperref[assump:A3]{(A3)} and choose $M_0=\frac{1}{2}M_1=\frac{1}{2}M$. This choice with the boundedness of $\Theta$ and the uniqueness of the root of $\nabla_{\theta}\log p_{\theta}^l(y_{1:T})$ implies the assumptions (A1)-(i),(iii),(iii),(iv) in \cite{andr3}. In \cite{andr3} the authors introduce conditions (DRI)-1,2,3 which implies (A2) and (A3) in their paper. Assumption (DRI1) is just our theorem's assumptions (TA1) and (TA2) but with $x\in (\mathbb{R}^d)^{T\Delta_l^{-1}+1}$ instead of $x\in\mathsf{C}$ in (TA2). Assumptions (DRI2) and (DRI3) are (TA3) and (TA4) in Theorem~\ref{thm:main} above. (A4) in \cite{andr3} is really (TA5) in Theorem~\ref{thm:main}, but $\alpha$ can be any value in the interval $(0,\beta)$ because of the stronger assumption (DRI).

As remarked in \cite{andr3} an easy choice of $\{\gamma_n\}_{n\in\mathbb{N}}$ and $\{\epsilon_n\}_{n\in\mathbb{N}}$ to satisfy (TA5) is to choose $\delta\in\left(1,\frac{1+\alpha}{1+\alpha/p}\right)$ and $\eta\in\left( \frac{\delta-1}{\alpha} , 1-\frac{\delta}{p}\right)$, then take $\{\gamma_n\}_{n\in\mathbb{N}}$ such that $\sum_n \gamma_n^{\delta}<\infty$ and $\epsilon_n=\mathcal{L}\gamma_n^{\eta}$ for any constant $\mathcal{L}>0$. Therefore we only need to verify conditions (TA1)-(TA4).

\begin{theorem}\label{theo:you_idiot}
    Assume \hyperref[assump:A1]{(A1-9)}. For every $l\in\mathbb{N}_0$ the conditions (TA1-4) of Theorem~\ref{thm:main} hold.
\end{theorem}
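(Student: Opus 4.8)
The plan is to verify the four conditions (TA1--4) separately, with the Foster--Lyapunov drift condition of Theorem~\ref{thm:main} built from a path-level Lyapunov function inherited from the single-step function $W_l$ of \hyperref[assump:A4]{(A4)}. First I would construct the Lyapunov function $V_l$ on the path space $(\mathbb{R}^d)^{T\Delta_l^{-1}+1}$ by setting $V_l(x) = \exp\{\sum_{i} c_i W_l(x_i)\}$ (or a suitable sum of such terms across the $N$ particles) for weights $c_i$ chosen so that the geometric-drift estimate $Q_{\theta,l}(e^{W_l}) \le \exp\{\kappa W_l + \rho \mathbbm{1}_{\mathsf{C}}\}$ in \hyperref[assump:A4]{(A4)} propagates through one full step of the conditional particle filter. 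Because the CPF kernel \eqref{eq:cpf_kernel} involves sampling each of the $N-1$ free particles via $M_{\theta,l}$ (hence through $\Delta_l^{-1}$ iterated applications of $Q_{\theta,l}$) and then a resampling weighted by the $G^l_{\theta,t}$, I would use \hyperref[assump:A5]{(A5)} to bound the normalized weights uniformly (they lie in $[\tfrac{1}{C^2 N}, C^2]$ after normalization), so that the resampling step does not destroy the drift; the key arithmetic is that $\kappa<1$ iterated $\Delta_l^{-1}T$ times still leaves a contraction, and the conditioning on the fixed reference trajectory $x^N$ only adds a bounded ($\mathbbm{1}_{\mathsf{C}}$-type) term. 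Combining with \hyperref[assump:A6]{(A6)} to control the initialization from $\mu_\theta$ yields (TA1) with $\lambda<1$ and some $b_l$, on a sublevel set $\mathsf{C}$ of $V_l$.

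For the minorization (TA2), the plan is to use \hyperref[assump:A8]{(A8)}: the lower bounds $\mu_\theta \ge \Psi$ and $Q_{\theta,l}(x,\cdot)\ge \Xi_l$ give a uniform-in-$(\theta,x)$ lower bound on the law of each freshly sampled particle. Because all normalized resampling weights are bounded below by \hyperref[assump:A5]{(A5)}, there is a uniformly positive probability that at each time step the ancestor of the returned path is one of the $N-1$ fully-regenerated particles rather than the reference particle; tracing this through the product structure of \eqref{eq:cpf_kernel} produces a minorizing measure $\eta_l$ with a constant $\delta>0$ independent of $\theta$ on the small set $\mathsf{C}$, which is exactly (TA2). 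I would take the same set $\mathsf{C}$ here as in (TA1).

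Condition (TA3) is the most routine: the two bounds on $H_l$ follow directly from \hyperref[assump:A9]{(A9)}, since $|H_l(\theta,x)| \le C(1+|x|^q)$ and $\|\nabla_\theta H_l(\theta,x)\|\le C(1+|x|^q)$ give both the growth bound and, by the mean value theorem over the bounded set $\Theta$ (using \hyperref[assump:A1]{(A1)}), the Hölder bound with $\beta=1$; one only needs $1+|x|^q \le C V_l(x)$, which holds because the growth condition $W_l/\log\|x\| \to \infty$ in \hyperref[assump:A4]{(A4)} forces $V_l$ to dominate any polynomial. The genuinely hard step is (TA4), the Hölder-in-$\theta$ Lipschitz continuity of $\theta\mapsto K_{\theta,l}$ in the $V_l$- and $V_l^p$-weighted norms. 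The plan is to write $K_{\theta,l}-K_{\theta',l}$ as a telescoping sum over the factors in \eqref{eq:cpf_kernel} (the $N-1$ transitions $M_{\theta,l}$, the prior $\mu_\theta$, and the normalized weight ratios $G^l_{\theta,t}/\sum_k G^l_{\theta,t}$), bound each difference using \hyperref[assump:A7]{(A7)} for the $\mu$- and $Q$-increments and \hyperref[assump:A5]{(A5)} for the smoothness of the weight ratios, and collect everything against $V_l$. The main obstacle is precisely this: one must control a difference of two high-dimensional integrals whose integrands differ in many coupled places simultaneously, and the weighted-norm bound must absorb the particle interaction in the resampling weights without the constant blowing up in $x$ faster than $V_l$ allows; this is where \hyperref[assump:A7]{(A7)}'s exponential-weighted Hölder bound $|Q_\theta-Q_{\theta'}|(e^{W_l})\le C\|\theta-\theta'\|^\zeta e^{W_l}$ is essential, as it lets each swapped transition contribute a factor $\|\theta-\theta'\|^\zeta$ while keeping the $V_l$-weight intact, and one finally sets $\beta=\zeta\wedge 1$.
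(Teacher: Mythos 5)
Your treatment of (TA2)--(TA4) follows essentially the same route as the paper: (A9) together with the growth condition in (A4) and the mean value theorem over the bounded set $\Theta$ give (TA3) with $\beta=1$; the telescoping decomposition of $K_{\theta,l}-K_{\theta',l}$ over the $\mu$-factors, the $M_{\theta,l}$-factors and the normalized weights, controlled by (A5) and (A7), gives (TA4) with $\beta=\min(1,\zeta)$; and the lower bounds of (A8) combined with the uniform weight bounds of (A5), traced along a lineage that never touches the reference index, give (TA2).

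The gap is in (TA1). The Lyapunov function you propose, $V_l(x)=\exp\{\sum_i c_iW_l(x_i)\}$, is an exponential of a sum (a product of exponentials), whereas the argument requires a \emph{sum} of exponentials: the paper takes the log-sum-exp $\mathcal{W}_l(x)=\log\big(\sum_k\exp\{W_l(x_k)\}\big)$ at the unit-step level and $V_l$ equal to the square root of $\frac{1}{T+1}\big(\exp\{W_l(x_0)\}+\sum_{t=1}^T\exp\{\mathcal{W}_l(x_t)\}\big)$. The distinction is not cosmetic. The path returned by the CPF can coincide with the reference path $z$ on an initial time-segment $\{0,\dots,t_0\}$ (whenever the lineage first hits index $N$ at time $t_0$) and be freshly propagated from $z_{t_0}$ afterwards. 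With the product form, the resulting contribution to $K_{\theta,l}V_l^p(z)$ carries a factor of order $\exp\{p\sum_{t\le t_0}c_tW_l(z_t)+c'\kappa W_l(z_{t_0})\}$: the coordinate $z_{t_0}$ is counted twice, once as a retained coordinate and once as the initial condition of the fresh segment, and this cannot be dominated by $\lambda V_l(z)^p$ uniformly in $z$ (take $W_l(z_{t_0})$ large and $W_l(z_t)$ minimal for $t>t_0$). Relatedly, your claim that the conditioning on the reference trajectory ``only adds a bounded $\mathbbm{1}_{\mathsf{C}}$-type term'' misidentifies the central difficulty: since the returned path can literally equal $z$, the quantity $K_{\theta,l}V_l^p(z)$ contains a term proportional to $V_l(z)^p$ itself, and the whole content of (TA1) is that the coefficient of that term, namely the retention probability, is at most $1-(C^2N)^{-(1+(T-1)(N-1))}<1$ by (A5). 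The additive form is what makes this tractable: $K_{\theta,l}$ acts on each summand $x\mapsto\exp\{\mathcal{W}_l(x_i)\}$ separately, each such term is bounded by the retention probability times $\exp\{\mathcal{W}_l(z_i)\}$ plus a remainder of order $\big(\sum_t\exp\{\mathcal{W}_l(z_t)\}\big)^{\xi}$ with $\xi<1$, obtained by first extending the multiplicative drift of (A4) from $Q_{\theta,l}$ to the unit-step kernel $M_{\theta,l}$, and summing over $i$ yields the drift. You would need to replace your candidate $V_l$ by this additive form, and supply the intermediate multiplicative drift for $M_{\theta,l}$, before the rest of your argument for (TA1) can go through.
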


\subsection{Proofs for Theorem \ref{theo:you_idiot}}\label{app_sec:proof}

We will prove Theorem \ref{theo:you_idiot}  in a sequence of lemmata.
In general, for a transition kernel $P$ we say that a function $V$ is a drift function for $P$ if there exist constants $a\in(0,1),b>0$ and a small set $\mathsf{C}$ such that $PV\leq aV+b\mathbbm{1}_{\mathsf{C}}$. If $V$ rather satisfies $P\exp\{V\}\leq \exp\{aV+b\mathbbm{1}_{\mathsf{C}}\}$ we call it a multiplicative drift function. Assumption \hyperref[assump:A4]{(A4)} supposes that for every $l\in\mathbb{N}_0$ the existence of a common multiplicative drift function for the kernels $\{Q_{\theta,l}\}_{\theta\in\Theta}$. Lemma \ref{lm:M_multipicative_drift} below shows that this property can be extended to unit-step kernels $M_{\theta,l}$. Define the function $\mathcal{W}_l:(\mathbb{R}^d)^{\Delta_l^{-1}}\to\mathbb{R}$ by
$$\mathcal{W}_l(x)=
    \log(\sum_{k=1}^{\Delta_l^{-1}}\exp\{W_l(x_k)\}).
$$
we show that it is a multiplicative drift function for $M_{\theta,l}$. Multiplicative drift conditions have been considered in \cite{kont} and in particular for particle methods in \cite{delm,whiteley}.

\begin{lemma}\label{lm:M_multipicative_drift}
     Assume  \hyperref[assump:A4]{(A4)}. Then for any $l\in\mathbb{N}_0$ there exist $(\xi,b)\in(0,1)\times\mathbb{R}$ and a set $\mathsf{C}$ such that for any $(\theta,x)\in\Theta\times(\mathbb{R}^d)^{\Delta_l^{-1}}$\textup{:}
$$
M_{\theta,l}(\exp\{\mathcal{W}_l\})(x) \leq \exp\{\xi\mathcal{W}_l(x) + b\mathbb{I}_{\mathsf{C}}(x)\}
$$
with $\xi=(1+\kappa)/2$.
\end{lemma}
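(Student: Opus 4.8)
The plan is to reduce the multiplicative drift inequality for the unit-step kernel $M_{\theta,l}$ to a repeated application of the single-step condition \hyperref[assump:A4]{(A4)} for $Q_{\theta,l}$. Write $n:=\Delta_l^{-1}$. Since $\exp\{\mathcal{W}_l(z)\}=\sum_{k=1}^{n}\exp\{W_l(z_k)\}$ and $M_{\theta,l}$ reads only the last coordinate $x_n$ of its argument, linearity and the Chapman--Kolmogorov/tower property give
$$
M_{\theta,l}(\exp\{\mathcal{W}_l\})(x) = \sum_{k=1}^{n} Q_{\theta,l}^{k}(\exp\{W_l\})(x_n),
$$
where $Q_{\theta,l}^{k}$ is the $k$-fold composition of the single-step kernel and the integrals over the coordinates beyond the $k$-th collapse to $1$.

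Next I would iterate \hyperref[assump:A4]{(A4)}. The essential tool is Jensen's inequality: for $s\in(0,1)$ the map $t\mapsto t^{s}$ is concave, so $Q_{\theta,l}(\exp\{sW_l\})=Q_{\theta,l}((\exp\{W_l\})^{s})\le (Q_{\theta,l}(\exp\{W_l\}))^{s}\le\exp\{s\kappa W_l+s\rho\}$, uniformly in $\theta$. Feeding this back with $s=\kappa^{k}$, a short induction on $k$ yields
$$
Q_{\theta,l}^{k}(\exp\{W_l\})(x_n)\le\exp\Big\{\kappa^{k}W_l(x_n)+\rho\sum_{j=0}^{k-1}\kappa^{j}\Big\}\le\exp\Big\{\kappa^{k}W_l(x_n)+\tfrac{\rho}{1-\kappa}\Big\}.
$$
This is the step I expect to be the crux: one must control $Q_{\theta,l}$ applied to the \emph{shrunken} exponent $\kappa^{k}W_l$ rather than to $W_l$ itself, and Jensen is precisely what converts the $\theta$-uniform bound on $Q_{\theta,l}(\exp\{W_l\})$ from \hyperref[assump:A4]{(A4)} into one on $Q_{\theta,l}(\exp\{\kappa^{k}W_l\})$ without introducing any $\theta$-dependence in the constants.

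Finally I would assemble the pieces. Using $\kappa^{k}\le\kappa$ (valid since $W_l\ge 1$ and $k\ge 1$) together with the elementary bound $W_l(x_n)\le\mathcal{W}_l(x)$ (because $\mathcal{W}_l(x)=\log\sum_{k}e^{W_l(x_k)}\ge W_l(x_n)$), the previous display summed over $k$ gives
$$
M_{\theta,l}(\exp\{\mathcal{W}_l\})(x)\le n\,e^{\rho/(1-\kappa)}\exp\{\kappa\mathcal{W}_l(x)\}=:K\exp\{\kappa\mathcal{W}_l(x)\}.
$$
It then remains only to trade the constant factor $K$ for the improved exponent $\xi=(1+\kappa)/2$, exploiting the strictly positive gap $\xi-\kappa=(1-\kappa)/2$. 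Writing $\exp\{\kappa\mathcal{W}_l\}=\exp\{\xi\mathcal{W}_l\}\exp\{-(\xi-\kappa)\mathcal{W}_l\}$ and taking $\mathsf{C}=\{x:\mathcal{W}_l(x)\le (\log K)/(\xi-\kappa)\}$, on $\mathsf{C}^{c}$ the factor $K\exp\{-(\xi-\kappa)\mathcal{W}_l(x)\}$ is at most $1$, while on $\mathsf{C}$ it is at most $K$; choosing $b=\log K=\log(\Delta_l^{-1})+\rho/(1-\kappa)$ delivers $M_{\theta,l}(\exp\{\mathcal{W}_l\})(x)\le\exp\{\xi\mathcal{W}_l(x)+b\,\mathbb{I}_{\mathsf{C}}(x)\}$, as claimed. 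The growth condition on $W_l$ is not needed for the inequality itself, but it guarantees that $\mathsf{C}$, being a sublevel set of $\mathcal{W}_l$, is bounded, which will be relevant for the minorization arguments in the subsequent lemmata.
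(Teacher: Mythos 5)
Your proof is correct and follows essentially the same route as the paper's: decompose $\exp\{\mathcal{W}_l\}$ coordinate-wise so that the $k$-th term is the $k$-fold iterate of $Q_{\theta,l}$ applied to $\exp\{W_l\}$ at $x_n$, iterate \hyperref[assump:A4]{(A4)} to obtain a bound of the form $K\exp\{\kappa\mathcal{W}_l(x)\}$, and then absorb the constant $K$ into the indicator of a sublevel set of $\mathcal{W}_l$ using the gap $\xi-\kappa=(1-\kappa)/2$. The only (harmless) difference is that you propagate the shrunken exponent $\kappa^k$ via Jensen's inequality, where the paper simply bounds $\exp\{\kappa W_l\}\leq\exp\{W_l\}$ at each intermediate step and accumulates a factor $e^{\rho}$ per iteration; both yield a $\theta$-uniform constant and the same conclusion.
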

\begin{proof}
For each $1\leq i\leq \Delta^{-1}_l$ let $p_i:(\mathbb{R}^d)^{\Delta_l^{-1}} \rightarrow\mathbb{R}^d$ be the $i$-th component projection.
    \begin{equation*}
        \begin{split}
            M_{\theta,l}\left(\exp\{{\mathcal{W}_l}\}\right)=\;&\sum_{i=1}^{\Delta_l^{-1}}M_{\theta,l}\left( \exp\{W_l\circ p_i\} \right). \\
        \end{split}
    \end{equation*}
For each $1\leq i\leq n$ we have 
$$ M_{\theta,l}\left( \exp\{W_l\circ p_i \} \right) = \left(Q_{\theta,l}\right)^{\Delta_l^{-1}} \left(\exp\{W_l\circ p_i \}\right) = \left(Q_{\theta,l}\right)^{i} \left(\exp\{W_l\circ p_i \}\right)$$
where the powers associated to the operator $Q_{\theta,l}$ mean repeated composition (or iteration). Let $i>1$, by assumption \hyperref[assump:A4]{(A4)}
$$ 
\left(Q_{\theta,l}\right)^{i} \left(\exp\{W_l\circ p_i \}\right) \leq  \left(Q_{\theta,l}\right)^{i-1} \left(\exp\{\kappa W_l\circ p_{i-1} + \rho \}\right) \leq e^{\rho}\left(Q_{\theta,l}\right)^{i-1} \left(\exp\{W_l\circ p_{i-1} \}\right).
$$
Let $x\in(\mathbb{R}^d)^{\Delta^{-1}_l}$, for $i=1$ we have
$$ Q_{\theta,l}\left(\exp\{W_l\circ p_{1} \}\right)(x) \leq \exp\{\kappa W(p_{\Delta^{-1}_l}(x))+\rho\}\leq e^{\rho}\exp\{ \kappa\mathcal{W}_l(x) \}.$$
It is clear inductively that there exists a constant $A$ such that for every $(\theta,x)\in\Theta\times(\mathbb{R}^d)^{\Delta_l^{-1}}$
$$M_{\theta,l}\left( \exp\{\mathcal{W}_l \} \right)(x)\leq A \exp\{ \kappa\mathcal{W}_l(x)\}. $$
Let $\mathsf{C}=\left\{ x\in(\mathbb{R}^d)^{\Delta_l^{-1}}~:~ \mathcal{W}_l(x)<\frac{2\log(A)}{1-\kappa}+1 \right\}$, $b=\log(A)$, and $\xi=\frac{1+\kappa}{2}$ where $A$ is chosen large enough to guarantee that the set $\mathsf{C}$ is non-empty. With this choice we have that for every $(\theta,x)\in\Theta\times(\mathbb{R}^d)^{\Delta_l^{-1}}$
$$ M_{\theta,l}\left( \exp\{\mathcal{W}_l \} \right)(x)\leq A \exp\{ \kappa\mathcal{W}_l(x)\} \leq \exp\{\xi\mathcal{W}_l(x)+b\mathbbm{1}_{\mathsf{C}}(x) \}. $$
\end{proof}

\begin{remark}
    The definition of $M_{\theta,l}(x,dz)$ depends only on the last component of $x$. If we consider $x$ as an element of $\mathbb{R}^d$ (as discussed above) then the following the proof of lemma \ref{lm:M_multipicative_drift} shows that there exists a small set $\mathsf{C}\subset\mathbb{R}^d$, such that the following inequality holds 
    $$ M_{\theta,l}\left( \exp\{\mathcal{W}_l \} \right)(x)\leq \exp\{\xi W_l(x)+b\mathbbm{1}_{\mathsf{C}}(x) \}$$
    with the same values for $\xi$ and $b$ chosen in Lemma \ref{lm:M_multipicative_drift}.
\end{remark}

We are now in a position to prove that condition (TA1) is satisfied. The function $V_l:(\mathbb{R}^d)^{T\Delta_l^{-1}+1}\to\mathbb{R}$ defined by
$$V_l(x) = \left( \frac{1}{T+1}\exp\{W_1(x_0)\}+\frac{1}{T+1}\sum_{t=1}^T \exp\{\mathcal{W}_l(x_t)\}\right)^{1/2}$$
where $x_t\in(\mathbb{R}^d)^{\Delta_l^{-1}}$ for every $1\leq t\leq T$ and $x_0\in\mathbb{R}^d$. The function $V_l$ is our candidate function that will satisfy the condition (TA1) with $p=2$. Many of the constants in the statements of the Lemmata below depend upon $T$, but that dependence is never recorded as it is not needed in our analysis.
\begin{lemma}\label{lm:V_l_drift}
Assume \hyperref[assump:A4]{(A4-6)}. For every $l\in\mathbb{N}_0$ there exists $(\lambda,b,m)\in(0,1)\times\mathbb{R}^+\times\mathbb{R}^+$ such that $(\theta,z)\in\Theta\times(\mathbb{R}^d)^{T\Delta_l^{-1}+1}$\textup{:}
$$ (K_{\theta,l}V_l^2)(z)\leq \lambda (V_l(z))^2 + b\mathbbm{1}_{\mathsf{C}_{m}}(z) $$
where $\mathsf{C}_{m}=\left\{ z\in (\mathbb{R}^d)^{T\Delta_l^{-1}+1} | V_l(z) < m \right\}$.
\end{lemma}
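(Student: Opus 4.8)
The plan is to exploit the linearity of the kernel $K_{\theta,l}$ together with the additive structure of $V_l^2$. Writing $p_0,\dots,p_T$ for the time-coordinate projections of a path, we have $V_l^2 = \tfrac{1}{T+1}\exp\{W_l\circ p_0\} + \tfrac{1}{T+1}\sum_{t=1}^T \exp\{\mathcal{W}_l\circ p_t\}$, so it suffices to bound $K_{\theta,l}(\exp\{\mathcal{W}_l\circ p_t\})(z)$ for each $1\le t\le T$ and the analogous time-zero term. First I would record two elementary facts that drive the whole estimate: by \hyperref[assump:A5]{(A5)} the weights satisfy $1/C \le G^l_{\theta,t}\le C$, so every normalized resampling weight lies in $[\tfrac{1}{C^2 N},\tfrac{C^2}{N}]$; and iterating the multiplicative drift of Lemma \ref{lm:M_multipicative_drift} via Jensen's inequality (using $\xi<1$) gives $(M_{\theta,l})^{k}(\exp\{\mathcal{W}_l\})(x) \le \exp\{\xi^{k}\mathcal{W}_l(x) + b/(1-\xi)\}$.

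The core of the argument is a lineage decomposition of the conditional particle filter. Since the conditioning path sits at index $N$ and $A_t^N=N$, the event $\{F_t=N\}$ is a down-set in $t$: there is a random branch time $\tau$ with $F_t=N$ for $t\le\tau$ and $F_t\ne N$ for $t>\tau$. Hence the returned path copies the input $z$ on $[0,\tau]$ and, on $(\tau,T]$, is a freshly $M_{\theta,l}$-generated trajectory started from $z_\tau$. For $t\le\tau$ the contribution is exactly $\exp\{\mathcal{W}_l(z_t)\}$. For $t>\tau$, I would condition on $\tau$ and on the lineage indices, use the weight bounds above to compare the law of the selected lineage to the free $M_{\theta,l}$-dynamics up to a multiplicative constant depending only on $C$ and $T$, and then apply the iterated drift to obtain a bound of the form $C_T\exp\{\xi^{t-\tau}\mathcal{W}_l(z_\tau) + b/(1-\xi)\}$.

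It then remains to assemble these pieces and extract $\lambda\in(0,1)$. For the conditioning-path part, the key quantitative input is that $\PP(\tau=s)\le C^2/N$ for every $s$ (selecting index $N$ at the branch step has probability at most $C^2/N$); summing, $\tfrac{1}{T+1}\EE[\sum_{t\le\tau}\exp\{\mathcal{W}_l(z_t)\}] \le \tfrac{C^2(T+1)}{N}\,V_l^2(z)$. For the freshly-generated part I would use the sublinearity estimate $\exp\{\xi w\}\le \lambda'\exp\{w\}+b'$, valid for all $w\ge0$ with $\lambda'$ as small as desired at the cost of enlarging $b'$, to convert the $\exp\{\xi^{t-\tau}\mathcal{W}_l(z_\tau)\}$ terms into $\lambda'\exp\{\mathcal{W}_l(z_\tau)\}$ plus a constant; this yields a contribution $\le \lambda''V_l^2(z)+b''$ with $\lambda''$ controllable. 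The time-zero term is handled the same way, replacing the free dynamics by $\mu_\theta$ and invoking $\mu_\theta(\exp\{W_l\})<C$ from \hyperref[assump:A6]{(A6)}. Choosing the threshold in the sublinearity estimate and taking $N$ large enough that $C^2(T+1)/N + \lambda'' < 1$ gives $K_{\theta,l}V_l^2(z)\le \tilde{\lambda}V_l^2(z) + b$ for a finite $b$ absorbing all bounded terms; comparing the additive constant against $V_l^2(z)\ge m^2$ then converts this into $K_{\theta,l}V_l^2\le \lambda V_l^2 + b\mathbbm{1}_{\mathsf{C}_m}$ with $\lambda<1$, which is the claimed inequality.

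The main obstacle is the third ingredient of the second paragraph: rigorously controlling the selection mechanism, i.e.\ the change of measure from the weighted, resampled lineage law to the free $M_{\theta,l}$-dynamics, while keeping every constant independent of $\theta$. The combinatorics of the ancestral indices must be organized so that the factors of $N$ from the normalized weights cancel against the sum over lineages, leaving only the benign factor $C^2(T+1)/N$ on the non-contracting conditioning-path term. This is precisely where genuine contraction ($\lambda<1$) forces the number of particles $N$ to be taken sufficiently large, in tandem with the multiplicative drift $\xi<1$ supplied by Lemma \ref{lm:M_multipicative_drift}.
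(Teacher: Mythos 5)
Your overall architecture is the same as the paper's: decompose $V_l^2$ into its per-time summands, split each $K_{\theta,l}$-expectation according to whether the selected lineage index at time $t$ equals $N$ (so the output copies the conditioned path $z_t$) or not (so the output is a freshly generated particle), control the fresh part through the multiplicative drift of Lemma \ref{lm:M_multipicative_drift} after a crude change of measure on the normalized resampling weights, and absorb the resulting sublinear term $\bar{\mathcal{V}}_l(z)^{\xi}$ into an arbitrarily small multiple of $V_l^2(z)$ plus a constant, finishing with the standard conversion of $\lambda V_l^2 + b$ into $\lambda V_l^2 + b\mathbbm{1}_{\mathsf{C}_m}$. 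All of those steps are sound.

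The gap is in your treatment of the copy term, which is the only part of the bound that is genuinely linear in $V_l^2(z)$ and hence the only place a coefficient strictly below one must come from. You bound the coefficient of $\exp\{\mathcal{W}_l(z_t)\}$ by $\PP(F_t=N)=\sum_{s\geq t}\PP(\tau=s)\leq (T+1)C^2/N$, and you then explicitly require $N>(T+1)C^2$ to obtain contraction, asserting that large $N$ is forced. The lemma carries no hypothesis on the number of particles, and the constant $C$ from \hyperref[assump:A5]{(A5)} can be arbitrarily large, so this proves only a weaker statement than the one claimed. The paper instead uses the complementary bound: the event $\{F_t\neq N\}$ contains at least one configuration of $(F_T,A^{1:N-1}_{1:T-1})$ whenever $N\geq 2$, and by the two-sided bound of \hyperref[assump:A5]{(A5)} every configuration has probability at least $(C^2N)^{-(1+(T-1)(N-1))}$; hence $\PP(F_t=N)\leq 1-(C^2N)^{-(1+(T-1)(N-1))}<1$ for every $N\geq 2$. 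This coefficient is exponentially close to one, but that is harmless because the fresh part is sublinear and can be given an arbitrarily small coefficient, so no largeness of $N$ is needed. Your estimate is sharper in the large-$N$ regime, but to prove the lemma as stated you should replace $(T+1)C^2/N$ by the minimum of that quantity and $1-(C^2N)^{-(1+(T-1)(N-1))}$; with that substitution your argument closes for all $N\geq 2$.
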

\begin{proof}
    Let $\mathcal{V}_l=\exp\{\mathcal{W}_l\}$ and $\mathcal{V}_{l,0}=\exp\{W_l\}$. We have $\mathcal{V}_l, \mathcal{V}_{l,0}\geq1$, from assumption \hyperref[assump:A6]{(A6)} the function $\mathcal{V}_{l,0}$ is $\mu_{\theta}$-integrable for every $\theta\in\Theta$, and from Lemma \ref{lm:M_multipicative_drift} we have 
  that there exist  %that $\mathcal{V}_l$ there are 
  constants $C>1$ and $0<\xi<1$ such that for every $\theta\in\Theta$, $x\in(\mathbb{R}^d)^{\Delta^{-1}}$, and $w\in\mathbb{R}^d$ the inequalities $M_{\theta,l}(\mathcal{V}_l)(x)\leq C(\mathcal{V}_l(x))^{\xi}$ and $M_{\theta,l}(\mathcal{V}_l)(w)\leq C(\mathcal{V}_{l,0}(w))^{\xi}$ hold. Define the function $\bar{\mathcal{V}_l}:(\mathbb{R}^d)^{T\Delta_l^{-1}+1}\rightarrow[T+1,\infty)$ by $\bar{\mathcal{V}_l}(x) = \mathcal{V}_{l,0}(x_0)+ \sum_{t=1}^{T}\mathcal{V}_l(x_t)$,
    we will show that there exists $(\lambda,\bar{b},m)\in\mathbb{R}^3$ such that for every $(\theta,x)\in\Theta\times(\mathbb{R}^d)^{T\Delta^{-1}_l+1}$:
    $$ (K_{\theta,l}\bar{\mathcal{V}_l})(x)\leq \lambda \bar{\mathcal{V}_l}(x) + \bar{b}\mathbbm{1}_{\bar{\mathsf{C}}_{\bar{m}}}(x) $$
    where $\bar{\mathsf{C}}_{\bar{m}}=\left\{ x\in (\mathbb{R}^d)^{T\Delta_l^{-1}+1} | \bar{\mathcal{V}_l}(x) < \bar{m} \right\}$. Proving this inequality is equivalent to proving the lemma.
    
    We calculate the action of $K_{\theta,l}$ on the functions $x\mapsto \mathcal{V}_l(x_i)$ for every $1\leq i\leq T$ and $x\mapsto \mathcal{V}_{l,0}(x_0)$. 
    %We will write $\mathcal{V}_{l}(x_0)$ for $\mathcal{V}_{l,0}(x_0)$ when the meaning is clear.
    Let $0\leq i\leq T$ define the sets $S_i = \{(F_T,A_{1:T-1}^{1:N-1}) | F_i = N \}\subset\{1,...,N\}^{1+(T-1)(N-1)}$. The sets $S_i$ are non-empty because $F_T=N$ implies $F_t=N$ for all $0\leq t<T$. Moreover the sets $S_i$ are strict subsets of $\{1,...,N\}^{1+(T-1)(N-1)}$. We begin by observing the following bounds. For $0\leq i\leq T$ and $\theta\in\Theta$:
$$
\sum_{(F_T, A_{1:T-1}^{1:N-1})\in S_i}\int \delta_z(dx^N_{0:T})\frac{G_{\theta,T}^l(x^{F_T}_T)}{\sum_{k=1}^N G_{\theta,T}^l(x^Q_T)}
\prod_{j=1}^{N-1}\mu_{\theta}(dx_0^{j})M_{\theta,l}(x^j_0,dx^j_1)\prod_{t=2}^{T}\frac{G_{\theta,t-1}^l(x^{A^j_{t-1}}_{t-1})}{\sum_{k=1}^N G_{\theta,t-1}^l(x^Q_{t-1})}M_{\theta,l}(x^{A_{t-1}^j}_{t-1},dx^j_t)
$$
\begin{eqnarray}
& = & 1 - \sum_{(F_T, A_{1:T-1}^{1:N-1})\not\in S_i}\int \delta_z(dx^N_{0:T})
\frac{G_{\theta,T}^l(x^{F_T}_T)}{\sum_{k=1}^N G_{\theta,T}^l(x^Q_T)}
\prod_{j=1}^{N-1}\mu_{\theta}(dx_0^{j})M_{\theta,l}(x^j_0,dx^j_1)
\nonumber\\
& & \times
\prod_{t=2}^{T}\frac{G_{\theta,t-1}^l(x^{A^j_{t-1}}_{t-1})}{\sum_{k=1}^N G_{\theta,t-1}^l(x^Q_{t-1})}
M_{\theta,l}(x^{A_{t-1}^j}_{t-1},dx^j_t) \label{eq:V_l_drift_bound_1}\\
& \leq & 1 - (|\{1,\dots,N\}^{1+(T-1)(N-1)}\backslash S_i|)\left(\frac{1} {C^2N}\right)^{1+(T-1)(N-1)} \nonumber \\
 & \leq & 1 - \left(\frac{1} {C^2N}\right)^{1+(T-1)(N-1)} \nonumber
\end{eqnarray}    
where we used \hyperref[assump:A5]{(A5)} to go from lines 2 to 3. From assumption \hyperref[assump:A6]{(A6)} we have that for every $\theta\in\Theta$, $z\in(\mathbb{R}^d)^{T\Delta_l^{-1}+1}$, and $F_0\not=N$:
\begin{equation}\label{eq:V_l_drift_bound_2}
\int  \delta_z(dx^N_{0:T})\prod_{j=1}^{N-1}\mu_{\theta}(dx_0^j)M_{\theta,l}(x^j_0,dx^j_1)\prod_{t=2}^i M_{\theta,l}(x^{A_{t-1}^j}_{t-1}, dx^j_t)\mathcal{V}_{l,0}(x_0^{F_0}) = \int \mu_{\theta}(dx_0^{F_0})\mathcal{V}_{l,0}(x_0^{F_0})\leq C.
\end{equation}
Using \eqref{eq:V_l_drift_bound_2} and assumption \hyperref[assump:A5]{(A5)} we can further bound
$$
 \sum_{(F_T, A_{1:T-1}^{1:N-1})\not\in S_i}\int \delta_z(dx^N_{0:T}) \frac{G_{\theta,T}^l(x^{F_T}_T)}{\sum_{k=1}^N G_{\theta,T}^l(x^Q_T)}
\prod_{j=1}^{N-1}\mu_{\theta}(dx_0^{j})M_{\theta,l}(x^j_0,dx^j_1)
\prod_{t=2}^{T}\frac{ G_{\theta,t-1}^l(x^{A^j_{t-1}}_{t-1}) }{\sum_{k=1}^N G_{\theta,t-1}^l(x^Q_{t-1})}M_{\theta,l}(x^{A_{t-1}^j}_{t-1},dx^j_t)\mathcal{V}_l(x_0^{F_0})
 $$
 \begin{eqnarray}
 & \leq & \left(\frac{C^2}{N}\right)^{1+(T-1)(N-1)}\sum_{(F_T, A_{1:T-1}^{1:N-1})\not\in S_i}  \int  \delta_z(dx^N_{0:T})
\prod_{j=1}^{N-1}\mu_{\theta}(dx_0^j)M_{\theta,l}(x^j_0,dx^j_1)
\nonumber\\
& &\times 
\prod_{t=2}^i M_{\theta,l}(x^{A_{t-1}^j}_{t-1}, dx^j_t)\mathcal{V}_l(x_0^{F_0})\label{eq:V_l_drift_bound_3}\\               
& \leq & \left(\frac{C^2}{N}\right)^{1+(T-1)(N-1)} N^{1+(T-1)(N-1)}C = C^{3+2(T-1)(N-1)}.\nonumber
\end{eqnarray}
Suppose $i=0$. For $\theta\in\Theta$ we have
$$
        \sum_{1\leq F_T,A_{1:T-1}^{1:N-1}\leq N}\int \delta_z(dx^N_{0:T})\frac{G_{\theta,T}^l(x^{F_T}_T)}{\sum_{k=1}^N G_{\theta,T}^l(x^Q_T)}
\prod_{j=1}^{N-1}\mu_{\theta}(dx_0^{j})M_{\theta,l}(x^j_0,dx^j_1)\prod_{t=2}^{T}\frac{G_{\theta,t-1}^l(x^{A^j_{t-1}}_{t-1})}{\sum_{k=1}^N G_{\theta,t-1}^l(x^Q_{t-1})}
$$
$$
\times 
M_{\theta,l}(x^{A_{t-1}^j}_{t-1},dx^j_t)\mathcal{V}_{l,0}(x_i^{F_i})
        $$
$$       
              =  \mathcal{V}_{l,0}(z_i)\sum_{(F_T, A_{1:T-1}^{1:N-1})\in S_i}\int \delta_z(dx^N_{0:T})\frac{G_{\theta,T}^l(x^{F_T}_T)}{\sum_{k=1}^N G_{\theta,T}^l(x^Q_T)}
\prod_{j=1}^{N-1}\mu_{\theta}(dx_0^{j})M_{\theta,l}(x^j_0,dx^j_1)\prod_{t=2}^{T}\frac{G_{\theta,t-1}^l(x^{A^j_{t-1}}_{t-1})}{\sum_{k=1}^N G_{\theta,t-1}^l(x^Q_{t-1})}
$$
  \begin{equation}\label{eq:lemma_1_decomposition_i_0}
\times M_{\theta,l}(x^{A_{t-1}^j}_{t-1},dx^j_t)
        \end{equation}
 $$ 
        + \sum_{(F_T, A_{1:T-1}^{1:N-1})\not\in S_i}\int \delta_z(dx^N_{0:T})\frac{G_{\theta,T}^l(x^{F_T}_T)}{\sum_{k=1}^N G_{\theta,T}^l(x^Q_T)}
\prod_{j=1}^{N-1}\mu_{\theta}(dx_0^{j})M_{\theta,l}(x^j_0,dx^j_1)\prod_{t=2}^{T}\frac{G_{\theta,t-1}^l(x^{A^j_{t-1}}_{t-1})}{\sum_{k=1}^N G_{\theta,t-1}^l(x^Q_{t-1})}
$$
$$
\times
M_{\theta,l}(x^{A_{t-1}^j}_{t-1},dx^j_t)\mathcal{V}_{l,0}(x_i^{F_i}).
$$ 
Using the bounds \eqref{eq:V_l_drift_bound_1} and \eqref{eq:V_l_drift_bound_3} we bound the quantities in \eqref{eq:lemma_1_decomposition_i_0} from above by
\begin{equation}\label{eq:V_l_drift_bound_5}
     1 - \left(\frac{1} {C^2N}\right)^{1+(T-1)(N-1)}\mathcal{V}_{l,0}(z_0) + C^{3+2(T-1)(N-1)}.
\end{equation}
Now suppose $1\leq i\leq T$. For $\theta\in\Theta$ the action of $K_{\theta,l}$ on $x\mapsto \mathcal{V}_l(x_i)$ is
$$
\sum_{1\leq F_T,A_{1:T-1}^{1:N-1}\leq N}\int \delta_z(dx^N_{0:T})\frac{G_{\theta,T}^l(x^{F_T}_T)}{\sum_{k=1}^N G_{\theta,T}^l(x^Q_T)}
\prod_{j=1}^{N-1}\mu_{\theta}(dx_0^{j})M_{\theta,l}(x^j_0,dx^j_1)\prod_{t=2}^{T}\frac{G_{\theta,t-1}^l(x^{A^j_{t-1}}_{t-1})}{\sum_{k=1}^N G_{\theta,t-1}^l(x^Q_{t-1})}
$$
$$
\times
M_{\theta,l}(x^{A_{t-1}^j}_{t-1},dx^j_t)\mathcal{V}_l(x_i^{F_i})        
$$
 \begin{eqnarray}
        &= & \mathcal{V}_l(z_i)\sum_{(F_T, A_{1:T-1}^{1:N-1})\in S_i}\int \delta_z(dx^N_{0:T})\frac{G_{\theta,T}^l(x^{F_T}_T)}{\sum_{k=1}^N G_{\theta,T}^l(x^Q_T)}%\nonumber\\
        %& & 
        \prod_{j=1}^{N-1}\mu_{\theta}(dx_0^{j})M_{\theta,l}(x^j_0,dx^j_1)\prod_{t=2}^{T}\frac{G_{\theta,t-1}^l(x^{A^j_{t-1}}_{t-1})}{\sum_{k=1}^N G_{\theta,t-1}^l(x^Q_{t-1})}
        \nonumber \\
        & & \times M_{\theta,l}(x^{A_{t-1}^j}_{t-1},dx^j_t)\label{eq:lemma_1_decomposition_1}
 + \sum_{(F_T, A_{1:T-1}^{1:N-1})\not\in S_i}\int \delta_z(dx^N_{0:T})\frac{G_{\theta,T}^l(x^{F_T}_T)}{\sum_{k=1}^N G_{\theta,T}^l(x^Q_T)}
        \prod_{j=1}^{N-1}\mu_{\theta}(dx_0^{j})M_{\theta,l}(x^j_0,dx^j_1)\\ & & \times \prod_{t=2}^{T}\frac{G_{\theta,t-1}^l(x^{A^j_{t-1}}_{t-1})}{\sum_{k=1}^N G_{\theta,t-1}^l(x^Q_{t-1})}M_{\theta,l}(x^{A_{t-1}^j}_{t-1},dx^j_t)\mathcal{V}_l(x_i^{F_i}). \nonumber
\end{eqnarray}
From \eqref{eq:V_l_drift_bound_1} the first term in \eqref{eq:lemma_1_decomposition_1} is bounded above by $1 - \left(\frac{1} {C^2N}\right)^{1+(T-1)(N-1)}$.
For the second term in \eqref{eq:lemma_1_decomposition_1} we use the uniform boundedness of $G_{\theta,T}^l$
$$
      \sum_{(F_T, A_{1:T-1}^{1:N-1})\not\in S_i}\int \delta_z(dx^N_{0:T}) \frac{G_{\theta,T}^l(x^{F_T}_T)}{\sum_{k=1}^N G_{\theta,T}^l(x^Q_T)}
\prod_{j=1}^{N-1}\mu_{\theta}(dx_0^{j})M_{\theta,l}(x^j_0,dx^j_1)\prod_{t=2}^{T}\frac{G_{\theta,t-1}^l(x^{A^j_{t-1}}_{t-1})}{\sum_{k=1}^N G_{\theta,t-1}^l(x^Q_{t-1})}M_{\theta,l}(x^{A_{t-1}^j}_{t-1},dx^j_t)\mathcal{V}_l(x_i^{F_i})
$$
$$
       \leq \left(\frac{C^2}{N}\right)^{1+(T-1)(N-1)}\sum_{(F_T, A_{1:T-1}^{1:N-1})\not\in S_i}  \int  \delta_z(dx^N_{0:T})
\prod_{j=1}^{N-1}\mu_{\theta}(dx_0^j)M_{\theta,l}(x^j_0,dx^j_1)\prod_{t=2}^i M_{\theta,l}(x^{A_{t-1}^j}_{t-1}, dx^j_t)\mathcal{V}_l(x_i^{F_i}).
$$
Using the multiplicative drift property of $\mathcal{V}_l$ we have
\begin{equation}\label{eq:V_l_drift_bound_4}
    \begin{split}
      & \sum_{(F_T, A_{1:T-1}^{1:N-1})\not\in S_i}  \int \delta_z(dx^N_{0:T}) \prod_{j=1}^{N-1}\mu_{\theta}(dx_0^j)M_{\theta,l}(x^j_0,dx^j_1)\prod_{t=2}^i M_{\theta,l}(x^{A_{t-1}^j}_{t-1}, dx^j_t)\mathcal{V}_l(x_i^{F_i})\\
      & \leq C\sum_{(F_T, A_{1:T-1}^{1:N-1})\not\in S_i} \int  \delta_z(dx^N_{0:T})\prod_{j=1}^{N-1}\mu_{\theta}(dx_0^j)M_{\theta,l}(x^j_0,dx^j_1)\prod_{t=2}^{i-1} M_{\theta,l}(x^{A_{t-1}^j}_{t-1}, dx^j_t)(\mathcal{V}_l(x_{i-1}^{F_{i-1}}))^{\xi}\mathbbm{1}_{i>1}\\
      &+C\sum_{(F_T, A_{1:T-1}^{1:N-1})\not\in S_i} \int  \delta_z(dx^N_{0:T})\prod_{j=1}^{N-1}\mu_{\theta}(dx_0^j)M_{\theta,l}(x^j_0,dx^j_1)(\mathcal{V}_{l,0}(x_{i-1}^{F_{i-1}}))^{\xi}\mathbbm{1}_{i=1}
    \end{split}
\end{equation}
We consider the cases $F_{i-1} = N$ and $F_{i-1} \not= N$. Suppose $F_{i-1} = N$. The quantities in \eqref{eq:V_l_drift_bound_4} are bounded above by
\begin{equation*}
    \begin{split}
& C\sum_{(F_T, A_{1:T-1}^{1:N-1})\not\in S_i} \max((\mathcal{V}_{l,0}(z_{0}))^{\xi},(\mathcal{V}_l(z_{i-1}))^{\xi})
~\leq~CN^{1+(T-1)(N-1)}(\bar{\mathcal{V}}_l(z))^{\xi}.
    \end{split}
\end{equation*}
For the case $F_{i-1} \not = N$. Using $\mathcal{V}_l,\mathcal{V}_{l,0}\geq1$ and \eqref{eq:V_l_drift_bound_2} we can bound the quantities in \eqref{eq:V_l_drift_bound_4} by
\begin{equation*}
    \begin{split}
      & C\sum_{(F_T, A_{1:T-1}^{1:N-1})\not\in S_{i-1}} \int  \delta_z(dx^N_{0:T})\prod_{j=1}^{N-1}\mu_{\theta}(dx_0^j)M_{\theta,l}(x^j_0,dx^j_1)\prod_{t=2}^{i-1} M_{\theta,l}(x^{A_{t-1}^j}_{t-1}, dx^j_t)\mathcal{V}_l(x_{i-1}^{F_{i-1}})\mathbbm{1}_{i>1}\\
      +\;&C\sum_{(F_T, A_{1:T-1}^{1:N-1})\not\in S_{i-1}} \int  \delta_z(dx^N_{0:T})\prod_{j=1}^{N-1}\mu_{\theta}(dx_0^j)M_{\theta,l}(x^j_0,dx^j_1)\mathcal{V}_{l,0}(x_{i-1}^{F_{i-1}})\mathbbm{1}_{i=1}\\
      \leq\; & C\sum_{(F_T, A_{1:T-1}^{1:N-1})\not\in S_{i-1}} \int  \delta_z(dx^N_{0:T})\prod_{j=1}^{N-1}\mu_{\theta}(dx_0^j)M_{\theta,l}(x^j_0,dx^j_1)\prod_{t=2}^{i-1} M_{\theta,l}(x^{A_{t-1}^j}_{t-1}, dx^j_t)\mathcal{V}_l(x_{i-1}^{F_{i-1}})\\
      +\;&C^2N^{1+(T-1)(N-1)}.
    \end{split}
\end{equation*}
Therefore we have shown that
$$
\sum_{(F_T, A_{1:T-1}^{1:N-1})\not\in S_i}  \int \delta_z(dx^N_{0:T}) \prod_{j=1}^{N-1}\mu_{\theta}(dx_0^j)M_{\theta,l}(x^j_0,dx^j_1)\prod_{t=2}^i M_{\theta,l}(x^{A_{t-1}^j}_{t-1}, dx^j_t)\mathcal{V}_l(x_i^{F_i}) 
$$
\begin{eqnarray*}
      & \leq &  C\left\{\bar{\mathcal{V}}_l(z)^{\xi} + C^2\right\}N^{1+(T-1)(N-1)}
      +  C\sum_{(F_T, A_{1:T-1}^{1:N-1})\not\in S_{i-1}} \int  \delta_z(dx^N_{0:T})\prod_{j=1}^{N-1}\mu_{\theta}(dx_0^j)M_{\theta,l}(x^j_0,dx^j_1)\\ & & \prod_{t=2}^{i-1} M_{\theta,l}(x^{A_{t-1}^j}_{t-1}, dx^j_t)\mathcal{V}_l(x_{i-1}^{F_{i-1}}).
\end{eqnarray*}
Hence by induction, the bound \eqref{eq:V_l_drift_bound_3}, and the fact that $\mathcal{V}_l,\mathcal{V}_{l,0}\geq 1$ we have shown that there is a constant $C$ such that for every $0\leq i\leq T$ and $z\in(\mathbb{R}^d)^{T\Delta_l^{-1}+1}$:
\begin{equation}\label{eq:V_l_drift_bound_6}
\sum_{(F_T, A_{1:T-1}^{1:N-1})\not\in S_i}  \int \delta_z(dx^N_{0:T}) \prod_{j=1}^{N-1}\mu_{\theta}(dx_0^j)M_{\theta,l}(x^j_0,dx^j_1)\prod_{t=2}^i M_{\theta,l}(x^{A_{t-1}^j}_{t-1}, dx^j_t)\mathcal{V}_l(x_i^{F_i}) \leq  C\bar{\mathcal{V}}_l(z)^{\xi}.
\end{equation}
Let $C_1>\max\left( \left(C^2N\right)^{1+(T-1)(N-1)}, C(T+1)\right)$.
Combining the bounds \eqref{eq:V_l_drift_bound_1} and \eqref{eq:V_l_drift_bound_6} we have
$$
            (K_{\theta,l}\bar{\mathcal{V}_l})(z) 
            \leq \left(1-\frac{1}{C_1}\right)\mathcal{V}_{l,0}(z_0) + \sum_{i=1}^T \left(\left(1-\frac{1}{C_1}\right)\mathcal{V}_l(z_i) + \frac{C_1}{T+1}(\bar{\mathcal{V}_l}(z))^{\xi}\right)
            = \left(1-\frac{1}{C_1}\right)\bar{\mathcal{V}_l}(z) + C_1(\bar{\mathcal{V}_l}(z))^{\xi}.
$$
Let $\lambda = \left(1-\frac{1}{2C_1}\right)$, $\bar{m} > \left(2C_1^2\right)^{\frac{1}{1-\xi}}$, $\bar{b}> C_1(2C_1^2)^{\frac{\xi}{1-\xi}}$, and the set $\mathsf{C}_{\bar{m}}=\{x\in(\mathbb{R}^d)^{\Delta_l^{-1}+1}\;:\;\bar{\mathcal{V}_l}(x) < \bar{m}\}$. We have
\begin{equation*}
    \left(K_{\theta,l}\bar{\mathcal{V}_l}\right)(z) \leq \lambda \bar{\mathcal{V}_l}(z) + \bar{b}\mathbbm{1}_{\bar{\mathsf{C}}_{\bar{m}}}(z).
\end{equation*}
Finally taking $m=\sqrt{\frac{\bar{m}}{T+1}}$,$b=\frac{\bar{b}}{T+1}$, and $\mathsf{C}_m = \{x\in(\mathbb{R}^d)^{\Delta_l^{-1}+1}\;:\;V_l(x) < m\}$ then
\begin{equation*}
    \left(K_{\theta,l}V_l^2\right)(z) \leq \lambda V_l^2(z) + b\mathbbm{1}_{\mathsf{C}_{m}}(z).
\end{equation*}
The constant $C_1$ can be taken as large as needed to guarantee that the set $\mathsf{C}_m$ is non-empty. Furthermore, the constant $C_1$ is independent of $\theta$ and $z$; consequently $\lambda$, $m$, and $b$ are also independent of $\theta$ and $z$ which completes the proof.
\end{proof}

\begin{corollary}\label{cl:finite_V_l_norm_1}
Assume \hyperref[assump:A4]{(A4-6)}. For every $l\in\mathbb{N}_0$, $A^{1:N-1}_{1:T-1}\in\{1,...,N\}^{(T-1)(N-1)}$, $F_T\in\{1,...,N\}$, and $r\in\{1,2\}$:
$$ \sup_{z\in(\mathbb{R}^d)^{T\Delta_l^{-1}+1}}\sup_{\theta\in\Theta}\left\| \int\delta_{z}(dx_{0:T}^N)\prod_{j=1}^{N-1}\mu_{\theta}(dx_0^{j})M_{\theta,l}(x^j_0,dx^j_1)\prod_{t=2}^{T}M_{\theta,l}(x^{A_{t-1}^j}_{t-1},dx^j_t)V_l^r(x_0^{F_0},...,x_T^{F_T}) \right\|_{V_l^r} < \infty. $$

\end{corollary}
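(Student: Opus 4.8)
The plan is to reduce the path functional to single-coordinate functionals, trace the genealogy back in time, and then iterate the multiplicative drift of Lemma~\ref{lm:M_multipicative_drift} together with \hyperref[assump:A6]{(A6)}. Write $\mathcal{V}_l=\exp\{\mathcal{W}_l\}$, $\mathcal{V}_{l,0}=\exp\{W_l\}$ and $\bar{\mathcal{V}}_l(x)=\mathcal{V}_{l,0}(x_0)+\sum_{t=1}^T\mathcal{V}_l(x_t)$ as in the proof of Lemma~\ref{lm:V_l_drift}, so that $V_l^2=(T+1)^{-1}\bar{\mathcal{V}}_l$. For $r=2$ the path functional $V_l^2(x_0^{F_0},\dots,x_T^{F_T})$ is a sum of the coordinate terms $\mathcal{V}_{l,0}(x_0^{F_0})$ and $\mathcal{V}_l(x_t^{F_t})$, $1\le t\le T$; for $r=1$, subadditivity of $s\mapsto s^{1/2}$ gives $V_l(x_0^{F_0},\dots,x_T^{F_T})\le (T+1)^{-1/2}\big(\mathcal{V}_{l,0}(x_0^{F_0})^{1/2}+\sum_{t=1}^T\mathcal{V}_l(x_t^{F_t})^{1/2}\big)$. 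Hence in both cases it suffices to bound, uniformly in $(\theta,z)$, the single-coordinate integrals
\[
I_i := \int \delta_z(dx_{0:T}^N)\prod_{j=1}^{N-1}\mu_\theta(dx_0^j)M_{\theta,l}(x_0^j,dx_1^j)\prod_{t=2}^T M_{\theta,l}(x_{t-1}^{A_{t-1}^j},dx_t^j)\,\phi(x_i^{F_i})
\]
for $\phi\in\{\mathcal{V}_l,\mathcal{V}_l^{1/2}\}$ (and the analogous $i=0$ integral with $\phi\in\{\mathcal{V}_{l,0},\mathcal{V}_{l,0}^{1/2}\}$) by a constant multiple of $\bar{\mathcal{V}}_l(z)^{r/2}$.

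Next I would marginalize along the genealogy. Since the integrand depends only on the coordinate $x_i^{F_i}$ and every $\mu_\theta$, $M_{\theta,l}$ is a probability kernel, integrating out (top-down in $t$) every particle that is not on the ancestral line $\{(t,F_t)\}_{t\le i}$ of $(i,F_i)$, where $F_t:=A_t^{F_{t+1}}$ and $F_0:=F_1$, contributes a factor $1$. Because $A_t^N=N$, the set $\{t:F_t=N\}$ is of the form $\{0,\dots,\tau\}$ (or empty), which separates two regimes. If $i\le\tau$ then $x_i^{F_i}=z_i$ is pinned, the remaining mass integrates to $1$, and $I_i=\phi(z_i)\le\bar{\mathcal{V}}_l(z)^{r/2}$. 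Otherwise the line is free for $\tau<t\le i$ and $I_i$ collapses to a chain of kernels $M_{\theta,l}(x_{t-1}^{F_{t-1}},dx_t^{F_t})$ started either at the pinned block $z_\tau$ (if $\tau\ge0$) or, when the line never meets $N$, at $x_0^{F_0}\sim\mu_\theta$.

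Then I would run the multiplicative drift down this chain. Lemma~\ref{lm:M_multipicative_drift} and its remark give $M_{\theta,l}(\mathcal{V}_l)(x)\le C\mathcal{V}_l(x)^\xi$ and $M_{\theta,l}(\mathcal{V}_l)(w)\le C\mathcal{V}_{l,0}(w)^\xi$ with $\xi\in(0,1)$; by Jensen (concavity of $s\mapsto s^{1/2}$) the same holds for $\mathcal{V}_l^{1/2}$, with constant $C^{1/2}$. Since $\mathcal{V}_l,\mathcal{V}_{l,0}\ge1$ one has $\mathcal{V}_l^{\xi s}\le\mathcal{V}_l^{s}$ for $s\in\{1/2,1\}$, so each application of a kernel advances one step down the line at the cost of a single multiplicative constant. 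After at most $T$ steps the chain is bounded by $C^{T}\phi(z_\tau)\le C^{T}\bar{\mathcal{V}}_l(z)^{r/2}$ when it terminates at the pinned block, and by $C^{T}\!\int\mu_\theta(dx_0^{F_0})\,\mathcal{V}_{l,0}(x_0^{F_0})\le C^{T+1}$ (finite by \hyperref[assump:A6]{(A6)}) when it escapes to $\mu_\theta$; in the latter case a constant is trivially $\le C\bar{\mathcal{V}}_l(z)^{r/2}$ since $\bar{\mathcal{V}}_l\ge1$. The $i=0$ term is handled the same way, directly by \hyperref[assump:A6]{(A6)}. Summing the at most $T+1$ coordinate bounds yields $\int(\cdots)V_l^r(\text{path})\le C\,\bar{\mathcal{V}}_l(z)^{r/2}=C'\,V_l^r(z)$ uniformly in $(\theta,z)$, with $C,C'$ depending on $T$ and $l$, so the $V_l^r$-norm is finite.

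The main obstacle is the genealogical bookkeeping of the second step: one must justify that the off-line particles genuinely integrate to $1$ (a Fubini argument using that $\mu_\theta,M_{\theta,l}$ are Markov kernels) and correctly exploit the down-set structure of $\{t:F_t=N\}$ to decide whether the ancestral line of a given coordinate terminates at the pinned trajectory $z$, producing a $z$-dependent bound, or escapes to the initial law $\mu_\theta$, producing a bounded constant. Once this dichotomy is set up, the remainder is a routine finite iteration of the drift inequality already established in Lemma~\ref{lm:M_multipicative_drift}, noting that all constants are allowed to grow with $T$ and $l$ since only finiteness for each fixed $l$ is required.
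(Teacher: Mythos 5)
Your argument is correct, but it takes a genuinely different route from the paper. The paper proves this corollary in two lines: since by \hyperref[assump:A5]{(A5)} every resampling weight $G_{\theta,t}^l(x_t^{A_t^j})/\sum_k G_{\theta,t}^l(x_t^k)$ is bounded below by $(C^2N)^{-1}$, the unweighted integral for a single fixed choice of $(F_T,A_{1:T-1}^{1:N-1})$ is at most $(C^2N)^{1+(T-1)(N-1)}$ times the corresponding weighted term, hence at most that constant times the full sum $(K_{\theta,l}V_l^2)(z)$; Lemma \ref{lm:V_l_drift} then bounds this by a constant multiple of $V_l^2(z)$ (using $V_l\geq 1$ to absorb the indicator), and the $r=1$ case follows by Jensen applied to the whole kernel. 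You instead re-run the technical core of the proof of Lemma \ref{lm:V_l_drift} directly on the unweighted integral: decompose $V_l^r$ into coordinate terms, integrate out off-lineage particles, exploit the down-set structure of $\{t:F_t=N\}$ to split into the pinned-at-$z$ and escape-to-$\mu_\theta$ cases, and iterate the multiplicative drift of Lemma \ref{lm:M_multipicative_drift} together with \hyperref[assump:A6]{(A6)}. Your route is longer and duplicates work already done in Lemma \ref{lm:V_l_drift}, but it is self-contained and, notably, never needs \hyperref[assump:A5]{(A5)}, which the paper's shortcut relies on to reinstate the weights; both arguments stay within the corollary's stated hypotheses and both are sound.
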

\begin{proof}
  As Lemma \ref{lm:V_l_drift} holds, we have
    \begin{equation*}
        \begin{split}
            &\int\delta_{z}(dx_{0:T}^N)\prod_{j=1}^{N-1}\mu_{\theta}(dx_0^{j})M_{\theta,l}(x^j_0,dx^j_1)\prod_{t=2}^{T}M_{\theta,l}(x^{A_{t-1}^j}_{t-1},dx^j_t)V_l^2(x_0^{F_0},...,x_T^{F_T}) \\\leq\;& C^{1+(T-1)(N-1)}(K_{\theta,l}V^2_l)(z) \\
            \leq\;& C^{1+(T-1)(N-1)}(\lambda V^2_l(z) + b\mathbbm{1}_{\mathsf{C}_d}(z))\\
            \leq\;& (\lambda C^{1+(T-1)(N-1)} + m) V_l^2(z)
        \end{split}
    \end{equation*}
    where we used \hyperref[assump:A5]{(A5)} to and the definition of $K_{\theta,l}$ to move to the second line. This proves the inequality for $r=2$. To prove it for $r=1$ we will prove that $V$ itself is a drift function. Using Jensen's inequality we have
    $$ (PV)(z) \leq \sqrt{(PV^2)(z)} \leq \sqrt{\lambda V^2(z) + b\mathbbm{1}_{\mathsf{C}_d}}(z)  \leq \sqrt{\lambda}V(z) + \sqrt{b}\mathbbm{1}_{\mathsf{C}_d}(z).$$
\end{proof}

The proof of Lemma \ref{lm:V_l_drift} did not require the $M$'s, $\mu$'s and $G$'s to have the same $\theta$ subscripts. This observation allows immediately to generalize Corollary \ref{cl:finite_V_l_norm_1} to mixed $\theta$'s subscripts. This will be helpful in verifying the theorem condition (TA4).

\begin{corollary}\label{cl:finite_V_l_norm_2}
    Assume \hyperref[assump:A4]{(A4-6)}. For every $l\in\mathbb{N}_0$, $A^{1:N-1}_{1:T-1}\in\{1,...,N\}^{(T-1)(N-1)}$, $F_T\in\{1,...,N\}$, and $r\in\{1,2\}$. There exists a constant $C$ such that for every $\vartheta\in\Theta^{(T+1)\times (N-1)}$ and $z\in(\mathbb{R}^d)^{T\Delta_l^{-1}+1}$:
    $$ \left\| \int\delta_{z}(dx_{0:T}^N)\prod_{j=1}^{N-1}\mu_{\vartheta_{1,j}}(dx_0^{j})M_{\vartheta_{2,j},l}(x^j_0,dx^j_1)\prod_{t=2}^{T}M_{\vartheta_{t+1,j},l}(x^{A_{t-1}^j}_{t-1},dx^j_t)V_l^r(x_0^{F_0},...,x_T^{F_T}) \right\|_{V_l^r} < C. $$
\end{corollary}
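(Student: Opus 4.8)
The plan is to notice that \emph{no} step in the proof of Lemma~\ref{lm:V_l_drift} ever exploited that the transition kernels $M_{\theta,l}$ and the initial laws $\mu_\theta$ shared a common subscript $\theta$: the estimate is built up one kernel at a time, and at each step it invokes only a bound that is \emph{uniform} over $\Theta$. Precisely, the only two places where $\theta$ enters the inductive bound \eqref{eq:V_l_drift_bound_6} are (i) the multiplicative-drift inequality $M_{\theta,l}(\exp\{\mathcal{W}_l\})\le C\exp\{\xi\mathcal{W}_l\}$ of Lemma~\ref{lm:M_multipicative_drift}, whose constants $\xi=(1+\kappa)/2$ and $C$ do not depend on $\theta$, and (ii) the mass bound $\mu_\theta(\mathcal{V}_{l,0})\le C$ of \eqref{eq:V_l_drift_bound_2}, which holds uniformly over $\theta$ by (A6). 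Since the resampling indices $A^{1:N-1}_{1:T-1}$ and $F_T$ are frozen here, the corollary's integral carries no weight factors $G^l_{\theta,t}$, so (A5) is not even needed. First I would therefore re-run that induction verbatim with $\mu_\theta$ replaced by $\mu_{\vartheta_{1,j}}$ and each $M_{\theta,l}$ replaced by $M_{\vartheta_{t+1,j},l}$, $\vartheta\in\Theta^{(T+1)\times(N-1)}$; the mixed-subscript analogue of \eqref{eq:V_l_drift_bound_6} (a sum over resampling tuples) then holds with the \emph{same} uniform constant, and the single frozen tuple of the corollary is one nonnegative summand of it, whence for each $0\le i\le T$
\[
\int\delta_z(dx^N_{0:T})\prod_{j=1}^{N-1}\mu_{\vartheta_{1,j}}(dx_0^j)M_{\vartheta_{2,j},l}(x_0^j,dx_1^j)\prod_{t=2}^{i}M_{\vartheta_{t+1,j},l}(x^{A_{t-1}^j}_{t-1},dx^j_t)\,\mathcal{V}(x_i^{F_i})\le C\,\bar{\mathcal{V}}_l(z),
\]
with $\mathcal{V}=\mathcal{V}_{l,0}$ for $i=0$, $\mathcal{V}=\mathcal{V}_l$ for $i\ge1$, and $C$ depending only on the uniform drift constants, on $N$ and on $T$. (When $F_i=N$ the $\delta_z$ factor forces $x_i^{F_i}=z_i$, giving $\mathcal{V}(z_i)\le\bar{\mathcal{V}}_l(z)$ directly; when $F_i\neq N$ the induction reduces $\mathcal{V}_l(x_i^{F_i})$ to $(\mathcal{V}_l(x^{F_{i-1}}_{i-1}))^{\xi}$ exactly as in \eqref{eq:V_l_drift_bound_4}, and the kernels at times $t>i$, being probability kernels, integrate to unit mass and are discarded.)

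Next I would assemble the bound for $V_l^2$. Since $V_l^2(x)=\tfrac{1}{T+1}\bar{\mathcal{V}}_l(x)=\tfrac{1}{T+1}\big(\mathcal{V}_{l,0}(x_0)+\sum_{t=1}^T\mathcal{V}_l(x_t)\big)$, I decompose the integrand $V_l^2(x_0^{F_0},\dots,x_T^{F_T})$ into the $T+1$ pieces $\tfrac{1}{T+1}\mathcal{V}(x_i^{F_i})$ and apply the displayed estimate to each. Summing over $0\le i\le T$ gives
\[
\int\delta_z(dx^N_{0:T})\prod_{j=1}^{N-1}\mu_{\vartheta_{1,j}}(dx_0^j)M_{\vartheta_{2,j},l}(x_0^j,dx_1^j)\prod_{t=2}^{T}M_{\vartheta_{t+1,j},l}(x^{A_{t-1}^j}_{t-1},dx^j_t)\,V_l^2(x_0^{F_0},\dots,x_T^{F_T})\le C\,V_l^2(z),
\]
which is the $r=2$ case, with $C$ independent of $\vartheta$ and $z$. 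For $r=1$ I would argue exactly as in Corollary~\ref{cl:finite_V_l_norm_1}: the measure $\int\delta_z(dx^N_{0:T})\prod_j\mu_{\vartheta_{1,j}}(dx_0^j)\prod_t M_{\vartheta_{t+1,j},l}(\,\cdot\,,dx^j_t)$ is a probability measure (every factor is either a probability measure or a Markov kernel), so Jensen's inequality bounds its integral of $V_l$ by the square root of its integral of $V_l^2$, and the $r=2$ bound then delivers the $r=1$ bound with constant $\sqrt{C}$.

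I expect the main, and really the only, obstacle to be the bookkeeping in the first paragraph: one must be sure that the induction underlying \eqref{eq:V_l_drift_bound_6} genuinely peels the kernels off along the single ancestral line $F_0,\dots,F_T$ one at a time, so that each use of the drift or of the $\mu$-integrability involves a \emph{single} subscript and therefore survives the substitution $\theta\mapsto\vartheta_{\cdot,\cdot}$ with unchanged constants. The combinatorial sum over index tuples and the case split $F_{i-1}=N$ versus $F_{i-1}\neq N$ carry no analytic input and are untouched by the change of subscripts; once this is checked, the resulting constant $C$ is manifestly independent of $\vartheta$, as required.
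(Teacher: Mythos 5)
Your proposal is correct and matches the paper's own (one-line) justification: the paper simply remarks that the proof of Lemma \ref{lm:V_l_drift} never uses that the $\mu$'s and $M$'s share a common subscript $\theta$, so the bound of Corollary \ref{cl:finite_V_l_norm_1} carries over verbatim to mixed subscripts with the same uniform constants. Your write-up just fills in the details of that remark (and correctly observes in passing that, with the resampling indices frozen, the weight bounds from (A5) are not even needed), so there is nothing to add.
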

We have proven that (TA1) holds. %Condition (TA2) is called the minorization condition. 
\begin{lemma}
    Assume \hyperref[assump:A5]{(A5)} and \hyperref[assump:A8]{(A8)}. For every $l\in\mathbb{N}_0$ the assumption (TA2) holds with $\eta_l$ defined above.
\end{lemma}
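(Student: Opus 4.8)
The plan is to prove a uniform (Doeblin-type) minorization: I will show that there is a constant $\delta\in(0,1)$ and a finite non-zero measure $\eta_l$ such that $K_{\theta,l}(z,A)\ge \delta\,\eta_l(A)$ for \emph{all} $z\in(\mathbb{R}^d)^{T\Delta_l^{-1}+1}$ and all $\theta\in\Theta$, which in particular gives (TA2) on the small set $\mathsf{C}$. For the reference measure I take $\eta_l$ to be (proportional to) the law of a path generated by $\Psi$ and the minorant of $M_{\theta,l}$: since each sub-step kernel satisfies $Q_{\theta,l}(x,\cdot)\ge \Xi_l(\cdot)$ by \hyperref[assump:A8]{(A8)}, an expand-and-drop-nonnegative-remainders argument on the product defining $M_{\theta,l}$ gives $M_{\theta,l}(x,\cdot)\ge \Xi_l^{\otimes\Delta_l^{-1}}(\cdot)=:\widetilde\Xi_l(\cdot)$ uniformly in $(\theta,x)$, and I set $\eta_l\propto \Psi\otimes\widetilde\Xi_l^{\otimes T}$.

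The central idea is to discard all but one favorable genealogy in the sum \eqref{eq:cpf_kernel}. I restrict the sum to the single term $F_T=1$ together with the identity genealogy $A_{t-1}^j=j$ for every $j\in\{1,\dots,N-1\}$ and $t\in\{2,\dots,T\}$; since every summand is non-negative, dropping the rest only decreases the value. With this choice one checks inductively that $F_t=1$ for all $t$, so the selected output path is exactly the trajectory of particle $1$, and each auxiliary particle $j$ evolves as its own independent trajectory governed by $\mu_\theta(dx_0^j)\prod_{t=1}^T M_{\theta,l}(x_{t-1}^j,dx_t^j)$.

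The main obstacle is the coupling introduced by the normalizing denominators $\sum_{k=1}^N G_{\theta,t}^l(x_t^k)$, which depend on all particle positions simultaneously. The key observation is that \hyperref[assump:A5]{(A5)} decouples them: since $1/C\le G_{\theta,t}^l\le C$ uniformly, every selection and resampling ratio appearing in the retained term is bounded below by $1/(NC^2)$, independently of all positions and of $\theta$. There are exactly $1+(N-1)(T-1)$ such factors (the final selection of $F_T$ plus the resampling weights of the $N-1$ auxiliary particles across $T-1$ steps, exactly as in \eqref{eq:V_l_drift_bound_1}), so I may pull out the uniform constant $(NC^2)^{-[1+(N-1)(T-1)]}$ and bound the weighted integral below by the plain path integral.

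After this, the auxiliary particles $j\ge 2$ and the reference particle $N$ integrate out to $1$ (each $\mu_\theta$ and $M_{\theta,l}$ is a probability measure or Markov kernel, and $\delta_z$ is fixed), leaving only $\int \mathbf{1}_A(x_{0:T})\,\mu_\theta(dx_0)\prod_{t=1}^T M_{\theta,l}(x_{t-1},dx_t)$. Applying the \hyperref[assump:A8]{(A8)} minorizations $\mu_\theta\ge\Psi$ and $M_{\theta,l}\ge\widetilde\Xi_l$ (again by expanding the product and discarding the non-negative remainder kernels) bounds this below by a constant multiple of $\eta_l(A)$. Collecting the constants yields the claim, with $\delta$ equal to $(NC^2)^{-[1+(N-1)(T-1)]}$ times the normalizing mass of $\eta_l$; since $C\ge 1$ and the masses of $\Psi$ and $\Xi_l$ are at most $1$, this $\delta$ lies in $(0,1)$, and the bound is uniform in $z$ and $\theta$, so (TA2) holds on $\mathsf{C}$.
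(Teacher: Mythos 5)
Your proposal is correct and follows essentially the same route as the paper: discard all but one favorable genealogy in the sum \eqref{eq:cpf_kernel}, use \hyperref[assump:A5]{(A5)} to bound each of the $1+(T-1)(N-1)$ weight ratios below by $(C^2N)^{-1}$ uniformly in positions and $\theta$, integrate out the auxiliary particles, and then apply the \hyperref[assump:A8]{(A8)} minorizations $\mu_\theta\ge\Psi$ and $Q_{\theta,l}\ge\Xi_l$ substep by substep to obtain a reference measure independent of $z$ and $\theta$. The only cosmetic quibble is that $\Psi$ and $\Xi_l$ are merely finite non-zero measures, so their masses need not be at most $1$; one simply normalizes $\eta_l$ and absorbs the total mass into $\delta$, exactly as the paper implicitly does.
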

\begin{proof}
    Let $z\in (\mathbb{R}^d)^{T\Delta_l^{-1}+1}$ and a measurable set $S\subset\mathcal{B}((\mathbb{R}^d)^{T\Delta_l^{-1}+1})$. We have
\begin{equation}\label{eq:minorization_ineq_1}
    \begin{split}
        K_{\theta,l}(z,S) = \;&(K_{\theta,l}\mathbbm{1}_{S})(z)\\
        \geq\;& \left(\frac{1}{C^2N}\right)^{1+(T-1)(N-1)} \sum_{1\leq F_T, A_{1:T-1}^{1:N-1}\leq N}\int \delta_z(dx^N_{0:T})\prod_{j=1}^{N-1}\mu_{\theta}(dx_0^{j})M_{\theta,l}(x^{j}_{0},dx^j_1)\\&\prod_{t=2}^{T}M_{\theta,l}(x^{A_{t-1}^j}_{t-1},dx^j_t)\mathbbm{1}_S(x_0^{F_0},...,x_T^{F_T})\\
        \geq& \left(\frac{1}{C^2N}\right)^{1+(T-1)(N-1)}\int\mu_{\theta}(dx_0^{1})M_{\theta,l}(x^{1}_{0},dx^1_1)\prod_{t=2}^{T}M_{\theta,l}(x^{1}_{t-1},dx^1_t)\mathbbm{1}_S(x_0^{1},...,x_T^{1}).
    \end{split}
\end{equation}
Recall that $x_t\in(\mathbb{R}^d)$ for every $1\leq t\leq T$ and $x_0\in\mathbb{R}^d$. We write $x_{t,k}\in\mathbb{R}$ for the components of $x_t$ for $1\leq t\leq T$ and $1\leq k\leq \Delta_l^{-1}$. From assumption \hyperref[assump:A8]{(A8)} and the definition of $M_{\theta,l}$ we have
$$ M_{\theta,l}(x_{t-1}^1,dx_t^1) = Q_{\theta,l}(x_{t-1,\Delta_l^{-1}},dx^1_{t,1})\prod_{i=2}^{\Delta_l^{-1}} Q_{\theta,l}(x_{t,i-1},dx_{t,i}) \geq \prod_{i=1}^{\Delta_l^{-1}} \Xi_{l}(dx_{t,i}).$$
From assumption \hyperref[assump:A8]{(A8)} we also have
$$ \mu_{\theta}(dx_0^1)\geq \Psi(dx_0^1). $$
Therefore the last quantity in \eqref{eq:minorization_ineq_1} is bounded below by
$$ \left(\frac{1}{C^2N}\right)^{1+(T-1)(N-1)}\int\Psi(dx_0^1)\prod_{t=1}^{T}\prod_{i=1}^{\Delta_l^{-1}} \Xi_{l}(dx_{t,i})\mathbbm{1}_S(x_0^{1},...,x_T^{1}). $$
This measure is independent of $\theta$ and $z$ which proves the minorization condition (TA2).
\end{proof}

\begin{lemma}
    Assume \hyperref[assump:A1]{(A1)}, \hyperref[assump:A4]{(A4)}, and \hyperref[assump:A9]{(A9)}. Assumption (TA3) holds with $\beta=1$.
\end{lemma}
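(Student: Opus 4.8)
The plan is to establish the two estimates making up (TA3): the growth bound $\sup_{\theta\in\Theta}|H_l(\theta,x)|\leq C V_l(x)$ and the Lipschitz bound $\sup_{\theta\neq\theta'}\|\theta-\theta'\|^{-1}|H_l(\theta,x)-H_l(\theta',x)|\leq C V_l(x)$, so that $\beta=1$ is admissible. The single ingredient driving both is the observation that the growth condition in \hyperref[assump:A4]{(A4)} forces $\exp\{W_l\}$ to dominate every polynomial: for each fixed $l$ and every exponent $r>0$ there is a constant $C_r$ with $\|w\|^r\leq C_r\exp\{W_l(w)\}$ for all $w\in\mathbb{R}^d$. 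This holds because $\lim_{\|w\|\to\infty}W_l(w)/\log\|w\|=\infty$ gives $\exp\{W_l(w)\}\geq\|w\|^r$ once $\|w\|$ exceeds some radius $R_0$, while on the bounded region $\|w\|\leq R_0$ the ratio $\|w\|^r/\exp\{W_l(w)\}$ is bounded (recall $W_l\geq 1$).

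For the first estimate I would start from \hyperref[assump:A9]{(A9)}, which gives $|H_l(\theta,x)|\leq C(1+|x|^q)$ uniformly in $\theta$. Writing the path $x=(x_0,x_1,\dots,x_T)$ in terms of its $\mathbb{R}^d$-components (namely $x_0$ and the entries $x_{t,k}$, $1\leq t\leq T$, $1\leq k\leq\Delta_l^{-1}$), the Euclidean norm splits as $|x|^2=\sum_c|x_c|^2$ over the $1+T\Delta_l^{-1}$ components $x_c$. The crucial point is that $V_l$ is a square root, so to obtain a bound \emph{linear} in $V_l$ one must control $(1+|x|^q)^2$, not $1+|x|^q$. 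Using $(1+|x|^q)^2\leq 2(1+|x|^{2q})$ and then, by equivalence of norms (or a power-mean inequality), $|x|^{2q}\leq C\sum_c|x_c|^{2q}$ with $C$ depending only on $q$ and the number of components, I apply the domination bound at exponent $2q$ to each component to obtain $\sum_c|x_c|^{2q}\leq C\sum_c\exp\{W_l(x_c)\}=C(T+1)V_l^2(x)$. Since $V_l\geq 1$, this yields $(1+|x|^q)^2\leq C V_l^2(x)$, and taking square roots gives both $1+|x|^q\leq C V_l(x)$ and hence $|H_l(\theta,x)|\leq C V_l(x)$.

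For the Lipschitz estimate I would use that $\theta\mapsto H_l(\theta,x)$ is differentiable (\hyperref[assump:A9]{(A9)}) together with the fundamental theorem of calculus along the segment joining $\theta'$ and $\theta$: since $H_l(\cdot,x)$ is given by the explicit formula \eqref{eq:h_l_def}, it extends to a differentiable map on all of $\mathbb{R}^{d_\theta}$, so the integration is legitimate even though \hyperref[assump:A1]{(A1)} only records boundedness of $\Theta$. This gives $|H_l(\theta,x)-H_l(\theta',x)|\leq\bigl(\sup_{\vartheta}\|\nabla_\theta H_l(\vartheta,x)\|\bigr)\|\theta-\theta'\|$. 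The gradient bound in \hyperref[assump:A9]{(A9)} reads $\|\nabla_\theta H_l(\vartheta,x)\|\leq C(1+|x|^q)$, so the same domination computation gives $1+|x|^q\leq C V_l(x)$, and the Lipschitz bound with $\beta=1$ follows.

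I expect the main obstacle to be conceptual rather than computational: correctly matching the polynomial growth furnished by \hyperref[assump:A9]{(A9)} to the square-root structure of $V_l$. A naive estimate bounds $|H_l|$ by $C V_l^2$, which is too weak for (TA3); the resolution is precisely that the super-polynomial growth of $W_l$ in \hyperref[assump:A4]{(A4)} permits applying the domination at degree $2q$ and thereby absorbing the square before taking the root. The remaining points are routine bookkeeping: the decomposition of the path norm into component norms, tracking the (harmless, $l$-dependent) combinatorial constants, and justifying the segment integration used for the Lipschitz bound.
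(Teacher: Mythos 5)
Your proposal is correct and follows essentially the same route as the paper: both arguments combine the polynomial bounds on $H_l$ and $\nabla_\theta H_l$ from \hyperref[assump:A9]{(A9)} with the super-logarithmic growth of $W_l$ in \hyperref[assump:A4]{(A4)} to show $1+\|x\|^q\leq CV_l(x)$, and then obtain the Lipschitz estimate via the mean value theorem. The only cosmetic difference is that the paper establishes the domination by noting $\lim_{\|x\|\to\infty}\log V_l(x)/\log\|x\|=\infty$ and using compactness on $\{\|x\|\leq I\}$, whereas you make the same point explicit by squaring and applying the component-wise bound at exponent $2q$ before taking the root.
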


\begin{proof}
From assumption \hyperref[assump:A9]{(A9)} there are $q,C_1>0$ such that for every $(\theta,x)\in\Theta\times(\mathbb{R}^d)^{T\Delta_l^{-1}+1}$ the inequality $\max\left(\|\nabla_{\theta}H_l(\theta,x)\|,|H_l(\theta,x)|\right)<C_1(1+|x|^q)$ holds. From assumption \hyperref[assump:A4]{(A4)} and the definition of $V_l$ we have that $\lim_{\|x\|\rightarrow \infty} \log V_l(x)/\log\|x\|=\infty$. Hence there exists a constant $I>0$ such that $V_l(x) > 1+\|x\|^q$ for $\|x\|>I$. On the compact set $\|x\|\leq I$ the function $V_l(x)/(1+\|x\|^q)$ is continuous and positive hence has a minimum $m>0$. Let $C = C_1/\min(1,m)$. For every $(\theta,x)\in\Theta\times(\mathbb{R}^d)^{T\Delta_l^{-1}+1}$ we have
$$ CV_l(x) \geq C_1(1+\|x\|^q) > \max\left(\|\nabla_{\theta}H_l(\theta,x)\|,|H_l(\theta,x)|\right)$$
form which the first inequality of (TA3) follows. For the second inequality let $\theta,\theta'\in\Theta$ and $x\in (\mathbb{R}^d)^{T\Delta_l^{-1}+1}$. Applying mean value theorem, Cauchy-Schwarz inequality, and the previous inequality we have
$$|H_l(\theta,x)-H_l(\theta',x)|\leq \|\theta-\theta'\|\sup_{\theta\in\Theta}\|\nabla_\theta H_l(\theta,x)\|\ < CV_l(x)\|\theta-\theta'\|.$$
\end{proof}

\begin{lemma}
    Assume \hyperref[assump:A1]{(A1)} and \hyperref[assump:A4]{(A4-7)}. Assumption (TA4) holds with $\beta = \min(1,\zeta)$ and $p=2$.
\end{lemma}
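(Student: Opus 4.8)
The plan is to exploit the explicit finite-sum form \eqref{eq:cpf_kernel} of $K_{\theta,l}$ together with a telescoping argument in $\theta$. Since $\|K_{\theta,l}\varphi-K_{\theta',l}\varphi\|_{V_l^r}=\sup_z|K_{\theta,l}\varphi(z)-K_{\theta',l}\varphi(z)|/V_l^r(z)$ and $|\varphi|\le\|\varphi\|_{V_l^r}V_l^r$, it suffices, for $r\in\{1,2\}$, to bound the single-swap differences described below with $\varphi$ replaced by $V_l^r$, obtaining a bound $C\,V_l^r(z)\,|\theta-\theta'|^{\beta}$ with $C$ independent of $z,\theta,\theta'$. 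Within each of the finitely many summands of \eqref{eq:cpf_kernel} indexed by $(F_T,A_{1:T-1}^{1:N-1})$, the $\theta$-dependence enters only through finitely many factors: the normalised weights $G_{\theta,t}^l/\sum_k G_{\theta,t}^l$, the initial laws $\mu_\theta$, and the transitions $M_{\theta,l}$, whose number depends on $N,T$ but not on $z$ or $\varphi$. I would write $K_{\theta,l}-K_{\theta',l}$ as a telescoping sum over these factors, replacing one $\theta$ by $\theta'$ at a time; the factors left unchanged then carry mixed $\theta/\theta'$ subscripts, which is exactly the situation covered by Corollary \ref{cl:finite_V_l_norm_2}.

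For a swapped weight factor I would use \hyperref[assump:A5]{(A5)}: since $1/C\le G_{\theta,t}^l\le C$ and $|\nabla_\theta G_{\theta,t}^l|\le C$, each normalised weight ratio is Lipschitz in $\theta$ with a constant uniform over all particle states (the denominators are bounded below by $N/C$). The remaining integral, with mixed subscripts and tested against $V_l^r$, is bounded by $C\,V_l^r(z)$ by Corollary \ref{cl:finite_V_l_norm_2}; combined with the Lipschitz bound and the boundedness of $\Theta$ from \hyperref[assump:A1]{(A1)} (which upgrades a Lipschitz bound to a $\beta$-Hölder bound), this yields the required $C\,V_l^r(z)\,|\theta-\theta'|^{\beta}$, with the weight contribution carrying exponent $1\ge\beta$.

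For a swapped $\mu$ or $M$ factor I would use \hyperref[assump:A7]{(A7)}. For the initial law the bound $|\mu_\theta-\mu_{\theta'}|(\exp\{W_l\})\le C|\theta-\theta'|^{\zeta}$ is immediate. For a transition I first need a unit-block estimate $|M_{\theta,l}-M_{\theta',l}|(\exp\{\mathcal{W}_l\})(x)\le C|\theta-\theta'|^{\zeta}\exp\{\mathcal{W}_l(x)\}$, which I would derive from the single-step bound of \hyperref[assump:A7]{(A7)} by writing $M_{\theta,l}=Q_{\theta,l}^{\Delta_l^{-1}}$ and telescoping, $Q_\theta^{k}-Q_{\theta'}^{k}=\sum_{i=0}^{k-1}Q_\theta^{i}(Q_\theta-Q_{\theta'})Q_{\theta'}^{k-1-i}$, evaluated at $\exp\{W_l\}$: positivity of the Markov operators lets me pass absolute values inside, the middle factor contributes $C|\theta-\theta'|^{\zeta}\exp\{W_l\}$ via \hyperref[assump:A7]{(A7)}, and the flanking iterates $Q^{\cdot}(\exp\{W_l\})$ stay controlled by $C\exp\{W_l\}$ thanks to the multiplicative drift of \hyperref[assump:A4]{(A4)} (as in Lemma \ref{lm:M_multipicative_drift}), whereupon summing over the $\Delta_l^{-1}$ coordinates of $\mathcal{W}_l$ gives the block estimate, with $C$ allowed to depend on the fixed $l$. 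Inserting this into the single-swap term, I bound the untouched weights by constants using \hyperref[assump:A5]{(A5)}, pass to $|M_{\theta,l}-M_{\theta',l}|$, extract $C|\theta-\theta'|^{\zeta}$, and control the residual integral by $C\,V_l^r(z)$ using Lemma \ref{lm:V_l_drift} and Corollary \ref{cl:finite_V_l_norm_2}.

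The case $r=2$ is the natural one, since $V_l^2$ is affine in the exponentials $\exp\{\mathcal{W}_l\}$ matching the weight in the block estimate. For $r=1$ I would run the identical telescoping but test against $V_l$; by subadditivity of the square root $V_l$ is dominated by a sum of half-weights $\exp\{W_l/2\},\exp\{\mathcal{W}_l/2\}$, which are again multiplicative drift functions (by Jensen's inequality applied to the $\kappa$-drift of $\exp\{W_l\}$, exactly as in the passage $r=2\to r=1$ in Corollary \ref{cl:finite_V_l_norm_1}) and inherit the corresponding Hölder estimate. Finally I would set $\beta=\min(1,\zeta)$ so that the Lipschitz contribution of the weights and the $\zeta$-Hölder contributions of $\mu,M$ are expressed with a common exponent. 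I expect the main obstacle to be the transition step: lifting the single-step estimate of \hyperref[assump:A7]{(A7)} to the composed kernel $M_{\theta,l}=Q_{\theta,l}^{\Delta_l^{-1}}$ without the exponential weight blowing up along the $\Delta_l^{-1}$ flanking iterates — precisely where the multiplicative drift is indispensable — together with the careful handling of the signed measure $M_{\theta,l}-M_{\theta',l}$ nested inside the high-dimensional conditional-particle-filter integral with mixed $\theta$-subscripts, the half-exponential weights of the $r=1$ case adding the need for the Jensen-type monotonicity above.
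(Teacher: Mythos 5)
Your proposal follows essentially the same route as the paper's proof: the same telescoping decomposition of $K_{\theta,l}-K_{\theta',l}$ over the weight factors, the initial laws, and the transition kernels, with the mean value theorem and \hyperref[assump:A5]{(A5)} handling the normalised weights, \hyperref[assump:A7]{(A7)} together with the multiplicative drift of \hyperref[assump:A4]{(A4)} handling the $\mu$ and $M$ swaps (including the lift from $Q_{\theta,l}$ to the unit block $M_{\theta,l}$, which the paper leaves implicit by saying the third term is ``similar''), Corollary \ref{cl:finite_V_l_norm_2} controlling the mixed-subscript residual integrals, and boundedness of $\Theta$ giving the common exponent $\beta=\min(1,\zeta)$. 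The only point of divergence is the case $r=1$: the paper reduces it to $r=2$ by a Cauchy--Schwarz inequality inside the particle integrals (paying $\|\theta-\theta'\|^{\zeta/2}$ twice), whereas you propose Jensen/half-exponential drift weights; both are valid and the discrepancy is immaterial.
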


\begin{proof}
From Lemma \ref{lm:V_l_drift} the assumption (TA1) holds with $p=2$. Let $r\in\{1,2\}$ and $\varphi\in\mathcal{L}_{V_l^{r}}$.
Denote
$$ G^l_{\theta}(x_{0:T}^{1:N}, F_T, A_{1:T-1}^{1:N-1}) =  \frac{G^l_{T,\theta}(x^{F_T}_T)}{\sum_{k=1}^N G^l_{T,\theta}(x^k_T)}\prod_{t=2}^{T}\frac{G^l_{t-1,\theta}(x^{A^j_{t-1}}_{t-1})}{\sum_{k=1}^N G^l_{t-1,\theta}(x^k_{t-1})}.$$
The functions $\{G^l_{\theta}\}_{\theta\in\Theta}$ and $\{\nabla_{\theta}G^l_{\theta}\}_{\theta\in\Theta}$ are uniformly bounded. For the ease of notations define $\mathcal{S}=\{(F_T,A^{1:N-1}_{1:T-1})\; :\; 1\leq F_T,A^{1:N-1}_{1:T-1}\leq N\}$. Define $A_0^j=j$ for $1\leq j\leq N$. We write the decomposition
\begin{eqnarray}
(K_{\theta,l}\varphi)(z) - (K_{\theta',l}\varphi)(z) & = & 
\sum_{\mathcal{S}} \int \left(G^l_{\theta}(x_{0:T}^{1:N}, F_T, A_{1:T-1}^{1:N-1}) - G^l_{\theta'}(x_{0:T}^{1:N}, F_T, A_{1:T-1}^{1:N-1})\right)\delta_{z}(dx_{0:T}^N) \nonumber\\
& &\times \prod_{j=1}^{N-1}\mu_{\theta}(dx_0^{j})\prod_{t=1}^{T}M_{\theta,l}(x^{A_{t-1}^j}_{t-1},dx^j_t)\varphi(x_0^{F_0},...,x_T^{F_T})\nonumber\\
& &+ \sum_{\mathcal{S}} \int G^l_{\theta'}(x_{0:T}^{1:N}, F_T, A_{1:T-1}^{1:N-1})\delta_{z}(dx_{0:T}^N)\nonumber
\end{eqnarray}
\begin{eqnarray}
 & &\times\left(\prod_{j=1}^{N-1}\mu_{\theta}(dx_0^{j})-\prod_{j=1}^{N-1}\mu_{\theta'}(dx_0^{j})\right)\prod_{j=1}^{N-1}\prod_{t=1}^{T}M_{\theta,l}(x^{A_{t-1}^j}_{t-1},dx^j_t)\varphi(x_0^{F_0},...,x_T^{F_T})\nonumber\\
& &
 +  \sum_{\mathcal{S}} \int G^l_{\theta'}(x_{0:T}^{1:N}, F_T, A_{1:T-1}^{1:N-1})\delta_{z}(dx_{0:T}^N) )\nonumber
\\
& & \times
\prod_{j=1}^{N-1}\mu_{\theta'}(dx_0^{j})\left(\prod_{t=1}^{T}M_{\theta,l}(x^{A_{t-1}^j}_{t-1},dx^j_t)-\prod_{t=1}^{T}M_{\theta'}(x^{A_{t-1}^j}_{t-1},dx^j_t)\right)\varphi(x_0^{F_0},...,x_T^{F_T}).\label{eq:collapsing_sum_1}
 \end{eqnarray}
Applying the mean value theorem to the function $G^l_{\theta}$ along with the uniform boundedness of its derivative from assumption \hyperref[assump:A5]{(A5)}
$$ |G^l_{\theta}(x_{0:T}^{1:N}, F_T, A_{1:T-1}^{1:N-1}) - G^l_{\theta'}(x_{0:T}^{1:N}, F_T, A_{1:T-1}^{1:N-1})| = |\nabla_{\theta} G_{\theta''}(x_{0:T}^{1:N}, F_T, A_{1:T-1}^{1:N-1})\cdot(\theta - \theta')| \leq C \|\theta-\theta'\|.$$
Hence first term is bounded from above by
\begin{equation*}
    \begin{split}
       & C\|\theta'-\theta'\|\int \delta_{z}(dx_{0:T}^N)\prod_{j=1}^{N-1}\mu_{\theta}(dx_0^{j})\prod_{t=1}^{T}M_{\theta,l}(x^{A_{t-1}^j}_{t-1},dx^j_t)\varphi(x_0^{F_0},...,x_T^{F_T})\\
       =\;& C\|\theta'-\theta'\|\int \delta_{z}(dx_{0:T}^N)\prod_{j=1}^{N-1}\mu_{\theta}(dx_0^{j})\prod_{t=1}^{T}M_{\theta,l}(x^{A_{t-1}^j}_{t-1},dx^j_t)\frac{\varphi(x_0^{F_0},...,x_T^{F_T})}{V_l^{r}(x_0^{F_0},...,x_T^{F_T})}V_l^{r}(x_0^{F_0},...,x_T^{F_T})\\
       \leq\;&  C\|\theta'-\theta'\|\|\varphi\|_{V_l^{r}}\int \delta_{z}(dx_{0:T}^N)\prod_{j=1}^{N-1}\mu_{\theta}(dx_0^{j})\prod_{t=1}^{T}M_{\theta,l}(x^{A_{t-1}^j}_{t-1},dx^j_t)V_l^{r}(x_0^{F_0},...,x_T^{F_T}).
    \end{split}
\end{equation*}
By Corollary \ref{cl:finite_V_l_norm_1} is bounded above by $CV^r(z)$. For the second term we can further decompose the term $\prod_{j=1}^{N-1}\mu_{\theta}(dx_0^{j})-\prod_{j=1}^{N-1}\mu_{\theta'}(dx_0^{j})$ as 
\begin{equation*}
    \begin{split}
        &\prod_{j=1}^{N-1}\mu_{\theta}(dx_0^{j})-\prod_{j=1}^{N-1}\mu_{\theta'}(dx_0^{j})\\
        =& \sum_{i=1}^{N-1} \left(\prod_{j=1}^{N-1-i}\mu_{\theta}(dx_0^{j})\right)(\mu_{\theta}(dx_0^{N-i}) - \mu_{\theta'}(dx_0^{N-i}))\left(\prod_{j=N-i+1}^{N-1}
        \mu_{\theta'}(dx_0^{j})\right)
    \end{split}
\end{equation*}
with the convention that an empty product is $1$. Similarly to the argument of the first term it is enough to bounded the second term for $\varphi=V_l^{r}$. Since $G$ is bounded it is sufficient to study the quantity
\begin{equation}\label{eq:collapsing_sum_2}
\begin{split}
        &\sum_{i=1}^{N-1}\sum_{\mathcal{S}} \int \delta_{z}(dx_{0:T}^N)\left(\prod_{j=1}^{N-1-i}\mu_{\theta}(dx_0^{j})\right)|\mu_{\theta}(dx_0^{N-i}) - \mu_{\theta'}(dx_0^{N-i})|\left(\prod_{j=N-i+1}^{N-1}\mu_{\theta'}(dx_0^{j})\right)\\
        &\quad \quad \quad\quad\quad\prod_{j=1}^{N-1}\prod_{t=1}^{T}M_{\theta,l}(x^{A_{t-1}^j}_{t-1},dx^j_t)V_l^{r}(x_0^{F_0},...,x_T^{F_T})\\
\end{split}
\end{equation}
where $|\mu_{\theta}(dx_0^{N-i}) - \mu_{\theta'}(dx_0^{N-i})|$ is, as defined above, the sum of the measures in Hahn-Jordan decomposition of the signed measure $\mu_{\theta}(dx_0^{N-i}) - \mu_{\theta'}(dx_0^{N-i})$. By the Cauchy-Schwarz inequality
$$
 \int \delta_{z}(dx_{0:T}^N)\left(\prod_{j=1}^{N-1-i}\mu_{\theta}(dx_0^{j})\right)|\mu_{\theta}(dx_0^{N-i}) - \mu_{\theta'}(dx_0^{N-i})|\left(\prod_{j=N-i+1}^{N-1}\mu_{\theta'}(dx_0^{j})\right)\times
 $$
 $$
  \prod_{j=1}^{N-1}\prod_{t=1}^{T}M_{\theta,l}(x^{A_{t-1}^j}_{t-1},dx^j_t)V_l(x_0^{F_0},...,x_T^{F_T}) 
  $$
   $$     
  \leq \left( \int \delta_{z}(dx_{0:T}^N)\left(\prod_{j=1}^{N-1-i}\mu_{\theta}(dx_0^{j})\right)|\mu_{\theta}(dx_0^{N-i}) - \mu_{\theta'}(dx_0^{N-i})|\left(\prod_{j=N-i+1}^{N-1}\mu_{\theta'}(dx_0^{j})\right)
        \prod_{j=1}^{N-1}\prod_{t=1}^{T}M_{\theta,l}(x^{A_{t-1}^j}_{t-1},dx^j_t)\right)^{\frac{1}{2}}\times
 $$
  $$      
 \Bigg( \int \delta_{z}(dx_{0:T}^N)\left(\prod_{j=1}^{N-1-i}\mu_{\theta}(dx_0^{j})\right)|\mu_{\theta}(dx_0^{N-i}) - \mu_{\theta'}(dx_0^{N-i})|\left(\prod_{j=N-i+1}^{N-1}\mu_{\theta'}(dx_0^{j})\right)\times
 $$
 $$
        \prod_{j=1}^{N-1}\prod_{t=1}^{T}M_{\theta,l}(x^{A_{t-1}^j}_{t-1},dx^j_t)V_l^2(x_0^{F_0},...,x_T^{F_T}) \Bigg)^{\frac{1}{2}}
 $$       
$$       
        = \left(\int \delta_{z}(dx_{0:T}^N)|\mu_{\theta}(dx_0^{N-i}) - \mu_{\theta'}(dx_0^{N-i})| \right)^{\frac{1}{2}}
        \Bigg(\int \delta_{z}(dx_{0:T}^N)\left(\prod_{j=1}^{N-1-i}\mu_{\theta}(dx_0^{j})\right)|\mu_{\theta}(dx_0^{N-i}) - \mu_{\theta'}(dx_0^{N-i})|\times
$$
$$        
        \left(\prod_{j=N-i+1}^{N-1}\mu_{\theta'}(dx_0^{j})\right)
        \prod_{j=1}^{N-1}\prod_{t=1}^{T}M_{\theta,l}(x^{A_{t-1}^j}_{t-1},dx^j_t)V_l^2(x_0^{F_0},...,x_T^{F_T})\Bigg)^{\frac{1}{2}}
$$
$$
        \leq \Bigg(\int \delta_{z}(dx_{0:T}^N) |\mu_{\theta}(dx_0^{N-i}) - \mu_{\theta'}(dx_0^{N-i})| \mathcal{V}_l (x_0^{N-i})\Bigg)^{\frac{1}{2}}
        \Bigg(\int \delta_{z}(dx_{0:T}^N)\Bigg(\prod_{j=1}^{N-1-i}\mu_{\theta}(dx_0^{j})\Bigg)\times
$$
$$        
|\mu_{\theta}(dx_0^{N-i}) - \mu_{\theta'}(dx_0^{N-i})|\Bigg(\prod_{j=N-i+1}^{N-1}\mu_{\theta'}(dx_0^{j})\Bigg)
\prod_{j=1}^{N-1}\prod_{t=1}^{T}M_{\theta,l}(x^{A_{t-1}^j}_{t-1},dx^j_t)V_l^2(x_0^{F_0},...,x_T^{F_T})\Bigg)^{\frac{1}{2}}
        $$
        $$
        \leq C\|\theta-\theta'\|^{\frac{\zeta}{2}} \Bigg(\int \delta_{z}(dx_{0:T}^N)\left(\prod_{j=1}^{N-1-i}\mu_{\theta}(dx_0^{j})\right)|\mu_{\theta}(dx_0^{N-i}) - \mu_{\theta'}(dx_0^{N-i})|\left(\prod_{j=N-i+1}^{N-1}\mu_{\theta'}(dx_0^{j})\right)\times
$$
$$
\prod_{j=1}^{N-1}\prod_{t=1}^{T}M_{\theta,l}(x^{A_{t-1}^j}_{t-1},dx^j_t)V_l^2(x_0^{F_0},...,x_T^{F_T})\Bigg)^{\frac{1}{2}}
$$
where we have used $V_l\geq 1$. Hence we only need to to consider $r=2$. $V_l^2(x)$  is defined by $\frac{1}{T+1}\sum_{k=0}^{T} \mathcal{V}_l(x_k)$. In manner exactly similar to the proof of Lemma \ref{lm:V_l_drift} we start with a $0\leq k\leq T$ and keep descending with the indices as we integrate w.r.t to $M_{\theta,l}$'s until some $F_r$ for some $0\leq r\leq k$ hits $N$, the index of the conditioned path $z$, or we end up with $F_0\not=N$. In the first case we can find $C_1>0$ such that
\begin{equation*}
    \begin{split}
        &\int \left(\prod_{j=1}^{N-1-i}\mu_{\theta}(dx_0^{j})\right)|\mu_{\theta}(dx_0^{N-i}) - \mu_{\theta'}(dx_0^{N-i})|\left(\prod_{j=N-i+1}^{N-1}\mu_{\theta'}(dx_0^{j})\right)
        \prod_{j=1}^{N-1}\prod_{t=1}^{T}M_{\theta,l}(x^{A_{t-1}^j}_{t-1},dx^j_t)\mathcal{V}_l(x_k^{F_k})\\
        \leq\; & C_1\mathcal{V}_l(z_r) \int \left(\prod_{j=1}^{N-1-i}\mu_{\theta}(dx_0^{j})\right)|\mu_{\theta}(dx_0^{N-i}) - \mu_{\theta'}(dx_0^{N-i})|\left(\prod_{j=N-i+1}^{N-1}\mu_{\theta'}(dx_0^{j})\right)
        \prod_{j=1}^{N-1}\prod_{t=1}^{r}M_{\theta,l}(x^{A_{t-1}^j}_{t-1},dx^j_t)\\
        =\; & C_1\mathcal{V}_l(z_r) \int |\mu_{\theta}(dx_0^{N-i}) - \mu_{\theta'}(dx_0^{N-i})|\\
        =\; & C_1\mathcal{V}_l(z_r) \int |\mu_{\theta}(dx_0^{N-i}) - \mu_{\theta'}(dx_0^{N-i})|\mathcal{V}_l(x_0^{N-i})\\
        \leq \;& C_2 \mathcal{V}_l(z_r)\|\theta-\theta'\|^{\zeta}
    \end{split}
\end{equation*}
for a constant $C_2$ where we employed assumption \hyperref[assump:A7]{(A7)} in the last line. Thus the sum in \eqref{eq:collapsing_sum_2} is bounded by $CV_l^2(z)\|\theta-\theta'\|^{\zeta}$. In the case where $F_0\not=N$ the previous bound becomes
\begin{equation*}
    \begin{split}
        &C_1 \int \left(\prod_{j=1}^{N-1-i}\mu_{\theta}(dx_0^{j})\right)|\mu_{\theta}(dx_0^{N-i}) - \mu_{\theta'}(dx_0^{N-i})|\left(\prod_{j=N-i+1}^{N-1}\mu_{\theta'}(dx_0^{j})\right)
        \mathcal{V}_l(x_0^{F_0})\\
        =\;& \mathbbm{1}_{\{F_0=N-i\}}\int |\mu_{\theta}(dx_0^{N-i}) - \mu_{\theta'}(dx_0^{N-i})|
        \mathcal{V}_l(x_0^{N-i})+\mathbbm{1}_{\{F_0\not=N-i\}}\int |\mu_{\theta}(dx_0^{N-i}) - \mu_{\theta'}(dx_0^{N-i})|\\
        \leq \;& \int |\mu_{\theta}(dx_0^{N-i}) - \mu_{\theta'}(dx_0^{N-i})|
        \mathcal{V}_l(x_0^{N-i})\\ 
        \leq\; & C\|\theta-\theta'\|^{\zeta}.
    \end{split}
\end{equation*}
The last term of equation \eqref{eq:collapsing_sum_1} can be bounded in a similar manner to the second term. Finally because the set $\Theta$ is bounded the choice $\beta=\min(1,\zeta)$ works.
\end{proof}

%%%%%%%% SSS

\end{document}